\tikzset{mycolor/.style = {dashed,rounded corners,line width=1bp,color=#1}}%
\tikzset{myfillcolor/.style = {draw,fill=#1}}%
\NewDocumentCommand{\highlight}{O{blue!40} m m}{%
	\draw[mycolor=#1] (#2.north west)rectangle (#3.south east);
}
\def\cov{{\mbox{cov}}}
\def\var{{\mbox{var}}}
\newtheorem{assumption}{Assumption}
\newtheorem{thm}{Theorem}[section]
\newtheorem{corollary}{Corollary}[section]
\newtheorem{example}{Example}[section]
\newtheorem{remark}{Remark}[section]
\newtheorem{lemma}{Lemma}[section]
\newtheorem{proposition}{Proposition}[section]
\begin{document}

\begin{frontmatter}
\title{Interpoint Distance Based Two Sample Tests in High Dimension}
\runtitle{Two Sample Tests in High Dimension}

\begin{aug}
\author{\fnms{Changbo} \snm{Zhu}\thanksref{a,e1}\ead[label=e1,mark]{changbo2@illinois.edu}}
\and
\author{\fnms{Xiaofeng} \snm{Shao}\thanksref{a,e2}\ead[label=e2,mark]{xshao@illinois.edu}}

\address[a]{Department of Statistics, University of Illinois at Urbana-Champaign, Champaign, IL, 61820, USA.
\printead{e1,e2}}

\runauthor{C. Zhu and X. Shao}

\affiliation{Some University and Another University}

\end{aug}

\begin{abstract}
	In this paper, we study a class of two sample test statistics based on inter-point distances in the high dimensional and low/medium sample size
setting. Our test statistics include the well-known energy distance and maximum mean discrepancy with Gaussian and Laplacian kernels, and
the critical values are obtained via permutations. We show that all these tests are inconsistent when the two high dimensional distributions
correspond to the same marginal distributions but differ in other aspects of the distributions. The tests based on
energy distance and maximum mean discrepancy mainly target the differences between marginal means and variances, whereas the
test based on $L^1$-distance can capture the difference in marginal distributions. Our theory sheds new light on the limitation of
inter-point distance based tests, the impact of different distance metrics, and the behavior of permutation tests in high dimension.
Some simulation results and a real data illustration are also presented to corroborate our theoretical
findings.
\end{abstract}

\begin{keyword}
\kwd{Two Sample Test}
\kwd{High Dimensionality}
\kwd{Permutation Test}
\kwd{Power Analysis}
\end{keyword}

\end{frontmatter}

	\section{Introduction}
	In many statistical and machine learning applications, we need inference about the two populations or distributions based on the data samples collected. For example, we need to compare the effectiveness of two newly developed drugs in clinical research, the higher educational level between two countries in a social study and the global warming effects on two regions in environmental science. Two sample hypothesis testing is a statistical procedure to deal with such problems. Formally speaking, having i.i.d. $p$-dimensional samples $X_{1}, \cdots , X_{n} =^{d} X \sim F$ and $ Y_{1}, \cdots , Y_{m} =^d Y \sim G$, we are interested in knowing whether the underlining distributions $F$ and $G$ which generate the two samples are the same, i.e. to test the following hypothesis,
	$$
	H_{0}: F = G \text{ versus } H_{A}: F \neq G.
	$$
	
	The study of two-sample testing has a long history and dates back to Kolmogorov-Smirnov's test \cite{kolmogorov1933, smirnov1948}, where the empirical CDFs are compared using the sup-norm. Related work for univariate two-sample tests includes Cramer von-Mises criterion \cite{cramer1928,von1928} and  Anderson-Darling test \cite{anderson1952}. Extensions to comparison of multivariate distributions and also the $k$-sample problem can be found in \cite{bickel1969, bickel1983, friedman1979, henze1988,schilling1986} among others. Some other interesting work focusing on the ``trimmed'' comparison of distributions can be found in \cite{alvarez2008, alvarez2012, freitag2007, munk1998}.
	
	However, all the afore-mentioned work focuses on the fixed dimensional case. If the dimension exceeds the sample size or is allowed to grow, some of the above methods are expected to fail. For example, the density-based methods suffer from the curse of high dimensionality in particular. In this paper, we study the two sample tests based on certain dissimilarity metrics that can be expressed as functions of the interpoint distances. Two of the most popular high dimensional two-sample tests that fall into this category are based on the Energy Distance (ED) \cite{szekely2004} and the Maximum Mean Discrepancy (MMD) \cite{gretton2012}. The former is based on the Euclidean distance between sample elements; while the latter is a kernel based method and is basically a variant of ED with a user-specified kernel as distance metric. To be more specific, both ED and MMD take the following form
	\begin{equation}
	\label{Eq:}
	\text{ED}^{k}(F,G) = 2E[k (X, Y)] - E[k (X, X')] - E[k (Y, Y')],
	\end{equation}
	where $k$ is a user-specified kernel, $X', Y'$ are i.i.d copies of $X,Y$ respectively. For instance, $k$ can be chosen as
	\begin{align*}
	\begin{array}{ll}
	L^2\text{-norm} \text{ (Euclidean distance)}: & k(X,Y) = \| X-Y \|_2 = \sqrt{\sum_{u=1}^p (x_u - y_u)^2 }, \\
	\text{Gaussian kernel}: & k(X,Y) = \exp \left(- \frac{\|X-Y\|^2_2}{ 2\gamma^2_p} \right), \\
	\text{Laplacian kernel}:  & k(X,Y) = \exp \left( -\frac{\| X-Y \|_2}{ \gamma_p} \right), \\
	L^1\text{-norm} : &  k(X,Y) = \| X-Y \|_1 = \sum_{u=1}^p |x_u - y_u|,
	\end{array}
	\end{align*}
	where $X=(x_1,\cdots,x_p)^T$, $Y=(y_1,\cdots,y_p)^T$ and $\gamma_p$ is a user-specified bandwidth parameter. Then, the population version of ED is given by Equation $\eqref{Eq:}$ with $k$ being the $L^2$-norm and the population version of MMD multiplied by -1 is given by Equation $\eqref{Eq:}$ with $k$ being Gaussian or Laplacian kernel. When $k$ is $L^2$-norm, Gaussian or Laplacian kernel, ED$^k(F,G)$ enjoys the property that ED$^k(F,G)=0 \Leftrightarrow F=G$. In fact, ED$^k(F,G)=0 \Leftrightarrow F=G$ holds as long as $k$ is a strongly negative definite kernel \cite{klebanov2005n}. ED and MMD based tests are both nonparametric without any assumption on the underlying distributions and can be implemented conveniently in practice using permutations. In this work, we aim to address the following questions:
	\begin{itemize}
		\item[1,] Can $\text{ED}^k$ based permutation test maintain its power against all kinds of alternatives in the high dimensional setting?
		\item[2,] What are the impact of different distance metrics?
	\end{itemize}
	To answer the above questions, we conduct rigorous theoretical analysis on the power of $\text{ED}^k(F,G)$ based permutation test in the high dimensional low sample setting (HDLSS) \cite{hall2005} as well as high dimensional medium sample size setting (HDMSS) \cite{aoshima2018survey}. Naturally, we say a test is {\em consistent if its power goes to 1 under either HDLSS or HDMSS regime}. Here, we study the power property of the permutation based tests because they are frequently implemented for Energy Distance and its variants in real life applications. 
	
	Let $\mathbf{X} = ( X_{1},X_{2}, \cdots, X_{n} )^{T}$, $\mathbf{Y} = ( Y_{1},Y_{2}, \cdots, Y_{m} )^{T}$, $\mathbf{Z} = (\mathbf{X}^T, \mathbf{Y}^T)^T$ denote the sample matrices and $\text{ED}_n^k(\mathbf Z)$ be a U-statistic based unbiased estimator of $\text{ED}^k(F,G)$. Our main results include: (i) Derivation of the limiting distribution of $\text{ED}_n^k(\mathbf{\Gamma} \mathbf Z)$ under both low and medium sample size setting, where  $\mathbf \Gamma \sim \text{Uniform}(\mathbb P_{n+m})$ and $\mathbb{P}_{n+m}$ is the set of permutation matrices of dimension $(n+m) \times (n+m)$. (ii) Based on the asymptotic results, we formulate different local alternatives, under which the power behavior of $\text{ED}_n^k$ based permutation tests are discussed in detail. (iii) Our theories are applied to existing kernels and statistics, for example 
	\begin{itemize}
		\item[1,] Under both HDLSS and HDMSS, ED$^k$ based permutation test w.r.t. $L^2$-norm, Gaussian and Laplacian kernel are consistent if the sum of component-wise mean or variance differences are not so small, i.e, $\lim_{p \rightarrow \infty}\sum_{u=1}^{p} (E(x_u)-E(y_u))^2/p \neq 0$ or $\lim_{p \rightarrow \infty} |\sum_{u=1}^{p} (var(x_u) - var(y_u))/p| \neq 0$. In addition, if the sum of component-wise mean and variance differences are both of order $o(\sqrt{p}/\sqrt{nm})$, i.e., $$\sum_{u=1}^{p} (E(x_u)-E(y_u))^2 = o\left(\frac{\sqrt{p}}{\sqrt{nm}}\right) \text{ and } \left|\sum_{u=1}^{p} (var(x_u) - var(y_u)) \right| = o\left(\frac{\sqrt{p}}{\sqrt{nm}}\right),$$ these tests suffer substantial power loss (the limits of their power are derived) under HDLSS and have trivial power (power no larger than the significance level) under HDMSS. Furthermore, under HDLSS, the afore-mentioned tests have trivial power if additionally we have $\sum_{u,v=1}^{p} (\cov(x_u, x_v) - \cov (y_u, y_v) )^2 = o(p).$
		\item[2,]  When $k$ is chosen as $L^1$-norm, ED$^k$ based permutation test experiences a power drop under HDLSS  and trivial power under HDMSS if $X,Y$ have the same univariate marginal distribution, i.e. $x_u =^d y_u$ for $u = 1, 2, \cdots, p$. This phenomenon is consistent with the fact that ED$^k$ with $L^1$-norm can characterise the discrepancies between the marginal univariate distributions. In addition, Under HDLSS, we show that the $L^1$-norm based test has trivial power when $X$ and $Y$ have the same bivariate marginal distribution, i.e., $(x_u, x_v) =^d (y_u, y_v), u, v  =1, \cdots, p. $
	\end{itemize}
	These findings are further corroborated in our simulation study. It is worth mentioning that \citet{chakraborty2019new} investigate the energy distance, maximum mean discrepancy, distance covariance and Hilbert-Schmidt Independence Criterion in the high dimensional setting. They propose a new class of metrics which can detect/measure the equality of   low-dimensional marginal distributions and a computational efficient $t$-test is further proposed based on the new metric. By contrast, our focus is on kernel-based permutation test and their asymptotic power properties in the high dimensional setting.  In the following we introduce some  notation and define some frequently used operators for later convenience.
	
	\subsection{Notation}
	Here, random data samples are denoted as, for each $i = 1, 2, \cdots, n$, $X_{i} =^d X= (x_{1}, \cdots, x_{p})^T \in \mathbb{R}^{p}$ and for each $j=1,2, \cdots, m$, $Y_{j} =^d Y= (y_{1}, \cdots, y_{p})^T \in \mathbb{R}^{p}$. Next, let $\mathbf{X} = ( X_{1},X_{2}, \cdots, X_{n} )^{T}$, $\mathbf{Y} = ( Y_{1},Y_{2}, \cdots, Y_{m} )^{T}$ and $\mathbf{Z} = (\mathbf{X}^T, \mathbf{Y}^T)^T = (Z_{1}, Z_2,  \cdots, Z_{n+m})^T$ denote the random sample matrices.
	Furthermore, let $\mathbb{P}_{n+m} = \left\{ \Gamma_{1}, \Gamma_{2}, \cdots, \Gamma_{(n+m)!}  \right\}$ be the group containing all  permutation matrices of dimension $(n+m) \times (n+m)$ and for each $i $, let $\pi_i$ be the permutation that corresponds to $ \Gamma_i $ via
	\begin{align*}
	\Gamma_i \mathbf{ Z} =  \left(Z_{(\pi_i(1))}, Z_{(\pi_i(2))},  \cdots, Z_{(\pi_i(n+m))}\right)^T
	\end{align*}
	where $ (\pi_i(1)) < \cdots < {(\pi_i(n+m))} $ is the ranked sequence of $ \{ \pi_i(1), \pi_i(2), \cdots, \pi_i(n+m) \} $. For a random permutation matrix $\bm{\Gamma} \sim \text{uniform}(\mathbb{P}_{n+m})$, we use $\bm{\pi}$ to represent its corresponding permutation. Next, given any function $\varphi$, $\varphi^{(i)}$ is used to denote its $i$-th order derivative. Finally, calligraphic letters ($\mathcal{K}, \mathcal{L}, \mathcal{R}, \mathcal{W}, \mathcal{G}$) are used to denote self-defined operators that act on random variables to produce random variables.  
	
	\section{Interpoint Distance Based Two Sample Tests}
	\label{sectionED}
	In this paper, we limit our attention to $\text{ED}^k(F,G)$, where $k$ is a user specified dissimilarity metric \cite{sarkar2018high} of the following form
	\begin{equation}
	\label{Eq:metric}
	k (X,Y) =\varphi \left\lbrace \frac{1}{p}  \sum\limits_{u=1}^{p} \psi (x_{u}, y_{u}) \right\rbrace,
	\end{equation}
	where $\psi \geq 0$ and $\varphi$ has continuous second order derivative on $(0, +\infty)$. The reason we focus on $\text{ED}^k(F,G)$ of the above form is that the metric $k$ encompasses many well-known distance metrics such as $L^2$-norm, $L^1$-norm, Gaussian and Laplacian kernel. Consequently, Energy Distance (ED) and Maximum Mean Discrepancy (MMD) are just special cases of $\text{ED}^k(F,G)$. We summarize the commonly used distance metrics in Table \ref{Tab:cor}. Following the literatures \cite{gretton2012,gretton2007}, we consider the bandwidth parameter $\gamma$ in Gaussian and Laplacian kernel as a fixed constant.
	\begin{table}
		$$
		\def\arraystretch{1.5}
		\begin{array}{|c|c|c|c|} \hline
		\psi(x,y) & \varphi (x)	  &\def\arraystretch{1.2}\begin{array}{c}
		k 
		\end{array}  & \def\arraystretch{1.2}\begin{array}{c}
		\text{ED}^{k}(F,G) 
		\end{array}
		\\ \hline \hline
		\multirow{4}{*}{$ (x-y)^{2}$} & \sqrt{x}  & L^2\text{-norm} & \begin{array}{c}
		\text{Energy distance (ED)} \\
		\text{\citet{szekely2004}}
		\end{array}   \\ \cline{2-4}
		& - e^{- \frac{x}{ 2 \gamma^2} }  &
		\def\arraystretch{1.2}\begin{array}{c}
		\text{Gaussian kernel} \\ \text{(multiplied by -1)} \end{array} & \multirow{2}{*}{$ \begin{array}{c}
			\text{Maximum Mean Discrepancy (MMD)} \\
			\text{\citet{gretton2012}}
			\end{array} $} \\ \cline{2-3}
		& - e^{- \frac{\sqrt{x}}{ \gamma }} & \def\arraystretch{1.2}\begin{array}{c}
		\text{Laplacian kernel} \\ \text{(multiplied by -1)}
		\end{array} & \\ \hline
		|x-y| & x & L^1\text{-norm}  & \begin{array}{c}
		\text{Used for some graph-based tests} \\
		\text{\citet{sarkar2018high}}
		\end{array}   \\  \hline
		\end{array}
		$$
		\caption{Correspondence between different choices of $\psi, \varphi$ and existing distance metrics as well as two sample test statistics in the literature.}
		\label{Tab:cor}
	\end{table}
	Notice that if $k$ is some well-known distance metrics such as $L^2$-norm, Gaussian kernel (multiplied by -1) and Laplacian kernel (multiplied by -1), a nice property for $\text{ED}^k$ is that
	\begin{align}
	\label{eq:nice}
	\text{ED}^{k} (F, G) \geq 0 \text{ and }\text{ED}^{k} (F, G) =  0 \Leftrightarrow F= G.
	\end{align}
	Here, it is just for the ease of presentation and notational simplicity that $k$ is set to be Gaussian or Laplacian kernel multiplied by -1. In fact, if $k$ is a universal kernel (see Theorem 5 and Lemma 1 of \cite{gretton2012}) or $k$ is a strongly negative definite kernel (see Theorem 1.9 \cite{klebanov2005n}), Property \eqref{eq:nice} still holds. On the other hand, using $\text{ED}^1(F,G)$ to denote $\text{ED}^k(F,G)$ when $k$ is the $L^1$-distance, we observe that $\text{ED}^{1}(F,G) = \sum_{u=1}^p \text{ED}(F_u, G_u)$, from which it  easily follows that
	\begin{align*}
	\text{ED}^{1}(F,G) \geq 0 \text{ and } \text{ED}^{1} (F, G) =  0 \Leftrightarrow F_u = G_u \text{ for all }u=1,2, \cdots, p.
	\end{align*}
	Notice that it is possible to have $F_u = G_u \text{ for all }u=1,2, \cdots, p$ but $F \neq G$, under which we have $\text{ED}^{1}(F,G) = 0$ while $\text{ED}^k >0$ if $k$ is $L^2$-norm, Gaussian kernel (multiplied by -1) or Laplacian kernel (multiplied by -1). Thus, $L^2$-norm, Gaussian kernel or Laplacian kernel based test statistics have advantage over $L^1$-norm based test statistic in the low dimensional setting, but we will see later that the story is in a sense reversed under the high dimensional setting. Next, an unbiased estimator of $\text{ED}^k$ is given as
	\begin{multline*}
	\text{ED}_n^{k} (\mathbf{Z}) =  \frac{2}{mn} \sum\limits_{i=1}^{n} \sum\limits_{j=1}^{m} k(X_i, Y_j) \\ - \frac{2}{n(n-1)}  \sum\limits_{1 \leq i < j \leq n} k(X_i, X_j) -  \frac{2}{m(m-1)} \sum\limits_{1 \leq i < j \leq m} k(Y_i, Y_j).
	\end{multline*}
	
	\section{Power Analysis for Permutation Test}
	As permutation tests are commonly used for Energy Distance and kernel variants in practice due to their implementational convenience and accurate size, we study their asymptotic behavior under the high dimensional setting in this subsection. Since we have i.i.d samples, after we permute the data, i.e., shuffle the rows of $\mathbf Z$ as $ \Gamma_{i} \mathbf Z $ by some permutation matrix $\Gamma_i$, what really matters to the distribution of $\text{ED}^k_n(\Gamma_i \mathbf Z)$ is how many $X$ samples stay in the first $n$ rows. Formally, let $ |\mathbb{A}|  $ be the cardinality of the set $\mathbb{A}$ and given a permutation matrix $\Gamma_i$ with the corresponding permutation $\pi_i$, set
	$$
	N(\Gamma_i) = \left|  \left\lbrace j \in \{1,2, \cdots, n \} : 1 \leq j \leq n, n+1 \leq \pi_i (j) \leq n+m \right\rbrace \right|.
	$$
	The integer $ n- N(\Gamma_i) $ actually counts the number of samples which belong to the first $n$ rows of $\mathbf{Z}$ both before and after the permutation $\Gamma_{i}$. Notice that it is possible that $N(\Gamma_i) = N(\Gamma_j)$ for different permutations $ \Gamma_i $ and $ \Gamma_j $. The set $ \mathbb{S}_w $ collects all the permutations $ \Gamma_i $ such that $N(\Gamma_i) = w$. Mathematically, fix $0 \leq w \leq \text{min} \{ n,m \} $, set $\mathbb{S}_{w} =\left\{  \Gamma_{i} : N(\Gamma_{i}) = w, \; i = 1,2, \cdots, (n+m)! \right\},$ then
	$$
	| \mathbb{S}_{w} | = \binom{m}{w}\binom{n}{n-w}n!m!.
	$$
	To differentiate from $\Gamma_i$, we use italic symbol $\varGamma_w$ to represent an element in $\mathbb{S}_w$.  Intuitively, $| \mathbb{S}_w |$ is the number of permutations that would have $n-w$ samples stay in the first $n$ rows of $\mathbf Z$ after we apply the corresponding permutation. The above process is further illustrated in the following diagram.

	\pgfdeclarelayer{background}
	\pgfdeclarelayer{foreground}
	\pgfsetlayers{background,main,foreground}

	\tikzstyle{sensor}=[draw, fill=blue!20, text width=9em,
	text centered, minimum height=6em, rounded corners]
	\tikzstyle{ann} = [above, text width=5em]
	\tikzstyle{naveqs} = [sensor, text width=6em, fill=red!20,
	minimum height=12em, rounded corners]
	\def\blockdist{2.3}
	\def\edgedist{2.5}
	\begin{center}
		\begin{tikzpicture}[fill=blue!20]
		\path (-3,1.5) node(a) [sensor] {$X_i:$ $n$ samples}
		(-3,-1) node(b) [sensor] {$Y_j:$ $m$ samples}
		(3,1.5) node(c) [sensor] {$\begin{array}{l}
			X_i: n -w \text{ samples} \\
			Y_j : w \text{ samples }
			\end{array}$}
		(3,-1) node(d) [sensor] {$\begin{array}{l}
			X_i: w \text{ samples} \\
			Y_j : m-w \text{ samples }
			\end{array}$};
		\node [below of = b, node distance= 1.6cm] (l) {$ \Uparrow $ };
		\node [below of = l, node distance= 0.4cm] (u) {$\mathbf Z$ };
		\node [below of = d, node distance= 1.6cm] (m) {$ \Uparrow $ };
		\node [below of = m, node distance= 0.4cm] (v) {$\Gamma \mathbf Z$ };
		
		\path [draw, ->, thick] (-1.1, 0.25) --node[above]{$\Gamma \in \mathbb{S}_w$}  (1.1, 0.25) ;
		
		\begin{pgfonlayer}{background}
		\path (a.west |- a.north)+(-0.2,0.2) node (e) {};
		\path (b.south -| b.east)+(+0.2,-0.2) node (f) {};
		\path[fill=blue!10,rounded corners, draw=black!50, dashed]
		(e) rectangle (f);
		\path (c.west |- c.north)+(-0.2,0.2) node (g) {};
		\path (d.south -| d.east)+(+0.2,-0.2) node (h) {};
		\path[fill=blue!10,rounded corners, draw=black!50, dashed]
		(g) rectangle (h);
		\end{pgfonlayer}
		\end{tikzpicture}
	\end{center}
	For the inter-point distance based two sample tests,  we can equivalently permute the weights on the pair-wise distances instead of permuting data points, i.e.,  for a fixed permutation matrix $ \Gamma_s \in \mathbb{P}_{n+m} $ that corresponds to $\pi_s$, we can write $ \text{ED}_n^k( \Gamma_s \mathbf{Z}) $ as 
	\begin{align}
	\label{eq:permuteWeight}
	\text{ED}_n^{k}(\Gamma_s \mathbf{Z}) = \sum_{i=2}^{n+m}\sum_{j=1}^{i-1} \Pi_{s,ij} k(Z_i, Z_j),
	\end{align}
	where $\Pi_{s,ij}$ is defined as
	\begin{align*}
	\Pi_{s,ij} = \left\lbrace  \begin{array}{ll}
	-\frac{2}{n(n-1)}, & 1 \leq \pi_s(i), \pi_s(j) \leq n, \\
	-\frac{2}{m(m-1)}, & n+ 1 \leq \pi_s (i), \pi_s(j) \leq n +m,  \\
	\frac{2}{mn}, &  1 \leq \pi_s(i) \leq n, n+1 \leq  \pi_s(j) \leq n +m, \\
	\frac{2}{mn}, &  n+1 \leq \pi_s(i) \leq n+m, 1 \leq  \pi_s(j) \leq m.
	\end{array}  \right. 
	\end{align*}
	To formally define the permutation test for $ \text{ED}^k_n(\mathbf Z) $, let $\widehat{R}$ denote the randomization distribution of  $ \text{ED}_n^k(\mathbf{Z}) $, which is defined by
	\begin{align*}
	\widehat{R}(t)  = \frac{1}{(n+m)!} \sum\limits_{i=1}^{(n+m)!} \mathbb{I}_{ \left\lbrace \text{ED}_n^k(\Gamma_{i} \mathbf{Z})  \leq t \right\rbrace }  = \frac{1}{(m+n)!} \sum\limits_{w=0}^{\min \{n,m \}} \sum\limits_{ \Gamma \in \mathbb{S}_{w}}  \mathbb{I}_{\left\lbrace   \text{ED}_n^k(\Gamma \mathbf{Z})  \leq t \right\rbrace}.
	\end{align*}
	For any distribution $F$, let the $(1-\alpha)$-th quantile of $F$ be denoted by $Q_{F,1-\alpha}$. In particular, the $(1-\alpha)$th quantile of $\widehat{R}$ is $Q_{\widehat{R},1-\alpha}$, i.e.
	\begin{equation}
	\label{quantile}
	Q_{\widehat{R},1-\alpha} = \widehat{R}^{-1} (1 - \alpha) = \inf \left\lbrace t: \widehat{R}(t) \geq 1 - \alpha \right\rbrace.
	\end{equation}
	Then, the level-$\alpha$ permutation test w.r.t. $\text{ED}_n^k(\mathbf Z)$ is defined as
	$$
	\text{Reject } H_0 \text{ if } \text{ED}_n^k(\mathbf{Z}) > Q_{\widehat{R},1-\alpha}. 	
	$$
	In real life applications, $ (n+m)! $ might be large, we thus resort to an approximation of $ Q_{\widehat{R} , 1-\alpha} $. Let  $ \mathbf{\Gamma}_1, \cdots, \mathbf{\Gamma}_S $ be i.i.d and uniformly sampled from $ \mathbb{P}_{n+m} $ and we approximate the critical value by $Q_{\widetilde{R}, 1-\alpha}$, where
	\begin{align*}
	\widetilde{R}(t):=\frac{1}{S} \left(\mathbb{I}_{ \left\lbrace \text{ED}_n^k( \mathbf{Z})  \leq t \right\rbrace } + \sum\limits_{i=1}^{S-1}  \mathbb{I}_{ \left\lbrace \text{ED}_n^k(\mathbf \Gamma_{i} \mathbf{Z})  \leq t \right\rbrace }  \right).
	\end{align*}
	
	\subsection{Local Alternatives}
	\label{subsec:local}
	In this subsection, we define different local alternatives, under which the $\text{ED}_n^k(\mathbf{Z})$ based permutation test will be consistent, have a nontrivial power limit and exhibit trivial power (power  no larger than the significance level $\alpha$) in the limit. To formally define the local alternative hypothesis, let the operator $\mathcal{K}$ be defined as 
	\begin{multline}\label{eq:K}
	\mathcal{K}(Z_i, Z_j)  = \frac{1}{\sqrt{p}}  \sum_{u=1}^p \big\{ \psi (z_{iu}, z_{ju}) - E\left[\psi (z_{iu}, z_{ju})|z_{iu}\right] \\ -  E\left[\psi (z_{iu}, z_{ju})|z_{ju}\right] +  E\left[\psi (z_{iu} , z_{ju})\right] \big\},
	\end{multline}
	It follows from Proposition 2.2.1 of \cite{zhu2019distance} that
	$
	E[\mathcal{K}(Z_{i},Z_{j}) \mathcal{K}(Z_{i'},Z_{j'}) ] = 0$  if $  \{i, j \} \neq \{ i', j' \}.   
	$
	Next, denote the average distance over components as 
	$$
	\overline{\psi}(Z_i,Z_j) = \frac{1}{p} \sum_{u=1}^{p} \psi (z_{iu}, z_{ju}).
	$$ 
	In addition, we need to assume the existence of some constants to properly define the local alternatives. These constants will also appear in the limiting distribution of our test statistics.
	\begin{assumption} 
		\label{ass:1}
		As $p \rightarrow \infty$, assume the existence of the limiting mean
		\begin{align*}
		e_{x}  =  \lim\limits_{p \rightarrow \infty}  E \left[ \overline{ \psi} (X, X') \right]  ,
		e_{y}  =  \lim_{p \rightarrow \infty}  E \left[ \overline{\psi} (Y, Y') \right] 
		\text{ and } e_{xy}  =  \lim\limits_{p \rightarrow \infty} E \left[\overline{ \psi} (X, Y) \right]
		\end{align*}
		and also the limiting variances
		\begin{align*}
		v_x = \lim_{p \rightarrow \infty} var \left[ \mathcal{K}(X, X') \right], v_y = \lim_{p \rightarrow \infty} var \left[ \mathcal{K}(Y, Y') \right] \text{ and } v_{xy} = \lim_{p \rightarrow \infty} var \left[ \mathcal{K}(X, Y) \right].
		\end{align*}
	\end{assumption}
	Then, we are ready to define the consistency space $H_{A_c}$, under which the $\text{ED}_n^k(\mathbf{Z})$ implemented as permutation test can be shown to be consistent under both HDLSS and HDMSS settings.
	\begin{align*}
	H_{A_c} := \left\lbrace (F, G) \; | \; 2 \varphi(e_{xy}) \neq  \varphi(e_{x})+  \varphi(e_{y})  \right\rbrace .
	\end{align*}
	We use $\mathbb{A}^c$ to denote the complement of any given set $\mathbb{A}$ and denote $F= (F_1, F_2, \cdots, F_p)$ and $  G= (G_1, G_2, \cdots, G_p) $, where $ F_u, G_u, u =1,2, \cdots, p $ are the marginal univariate distributions.  For commonly used kernels,  we have the following table characterizing $H_{A_c}$ and the proof is postponed to subsection \ref{subsec:hac}. 
	$$
	\def\arraystretch{1.5}
	\begin{array}{|c|c|} \hline
	k  & H_{A_c} \text{ Characterization}   \\ \hline \hline
	L^2\text{-norm} & \multirow{3}{*}{ $H_{A_c} = \left\lbrace (F,G) \left|
		\begin{array}{c}
		\sum_{u=1}^{p} (E(x_u)-E(y_u))^2 = o(p) \text{ and} \\
		\left|\sum_{u=1}^{p} (var(x_u) - var(y_u))\right| = o(p)
		\end{array} \right.
		\right\rbrace^c
		$}   \\ \cline{1-1}
	\text{Gaussian kernel} &   \\ \cline{1-1}
	\text{Laplacian kernel} &   \\ \hline
	L^1\text{-norm}
	&
	H_{A_c} =  \left\lbrace (F,G) \left| \sum_{u=1}^p \text{ED}(F_u, G_u) = o(p) \right. \right\rbrace^c 
	\\  \hline
	\end{array}
	$$
	Then, we present the space $H_{A_l}$, under which the normal limit of $\text{ED}_n^k(\mathbf{Z})$ can be derived under both HDLSS and HDMSS. 
	\begin{align*}
	H_{A_{l}} := \left\lbrace  (F,G) 
	\left| 
	\begin{array}{l}
	e_{xy} = e_x = e_y,  \\ 
	\left| 2 E\left[\overline{\psi}(X,Y)\right] -  E\left[\overline{\psi}(X,X')\right] - E\left[\overline{\psi}(Y,Y')\right] \right| = o( \sqrt{\frac{1}{nmp}}  ) ,  \\
	E \left[ \left|  E\left[\overline{\psi}(X,Y)|X \right] -E\left[\overline{\psi}(X,X')|X \right] \right| \right] = o(\sqrt{\frac{1}{nmp}}  ) \text{ and} \\
	E \left[ \left|  E\left[\overline{\psi}(X,Y)|Y \right]  -E\left[\overline{\psi}(Y,Y')|Y \right] \right| \right]  = o( \sqrt{\frac{1}{nmp}} ).
	\end{array}
	\right.
	\right\rbrace .
	\end{align*}
	Under $H_{A_l}$, a limit for the power of $\text{ED}_n^k(\mathbf{Z})$ (implemented as permutation test) is derived under HDLSS. On the other hand, its power is shown to be trivial (no larger than the significance level $\alpha$) under HDMSS and $H_{A_l}$. Next, we provide sufficient conditions for $(F,G) \in H_{A_l}$ with respect to the following well known kernels. 
	$$
	\def\arraystretch{1.5}
	\begin{array}{|c|c|} \hline
	k  &   \text{Sufficient conditions for }H_{A_l}   \\ \hline \hline
	L^2\text{-norm} & \multirow{3}{*}{ $ \left\lbrace (F,G) \left|
		\begin{array}{l}
		\sum_{u=1}^{p} (E(x_u)-E(y_u))^2 = o(\sqrt{\frac{p}{nm}}) \text{ and} \\
		\left|\sum_{u=1}^{p} (var(x_u) - var(y_u))\right| = o(\sqrt{\frac{p}{nm}})
		\end{array} \right.
		\right\rbrace \subseteq H_{A_l}
		$}   \\ \cline{1-1}
	\text{Gaussian kernel} &   \\ \cline{1-1}
	\text{Laplacian kernel} &   \\ \hline
	L^1\text{-norm}
	&
	\left\lbrace (F,G) \left| F_u = G_u, u = 1,2, \cdots, p \right. \right\rbrace \subseteq H_{A_l}
	\\  \hline
	\end{array}
	$$
	Then, the set of distributions $H_{A_t}$ is defined as 
	\begin{align*}
	H_{A_t} := \left\lbrace   (F,G) \left| (F,G) \in H_{A_l}, v_{xy} = v_{x} = v_{y} \right. \right\rbrace.
	\end{align*}
	It can be shown that under $H_{A_t}$, the $\text{ED}_n^k(\mathbf{Z})$ based permutation test has power no larger than the significance level $\alpha$ for both HDLSS and HDMSS settings. Sufficient conditions of being in $ H_{A_t} $ are provided in the following table.
	$$
	\def\arraystretch{1.5}
	\begin{array}{|c|c|} \hline
	k  &  \text{Sufficient conditions for } H_{A_t}  \\ \hline \hline
	L^2\text{-norm} & \multirow{3}{*}{ $ \left\lbrace (F,G) \left|
		\begin{array}{l}
		\sum_{u=1}^{p} (E(x_u)-E(y_u))^2 = o(\sqrt{\frac{p}{nm}}), \\
		\left|\sum_{u=1}^{p} (var(x_u) - var(y_u))\right| = o(\sqrt{\frac{p}{nm}}) \text{ and} \\
		\sum_{u,v=1}^{p} (\cov(x_u, x_v) - \cov (y_u, y_v) )^2 = o(p)
		\end{array} \right.
		\right\rbrace \subseteq H_{A_t}
		$}   \\ \cline{1-1}
	\text{Gaussian kernel} &   \\ \cline{1-1}
	\text{Laplacian kernel} &   \\ \hline
	L^1\text{-norm}
	&
	\left\lbrace (F,G) \left| (x_u, x_v) =^d (y_u, y_v), u, v  =1, \cdots, p  \right. \right\rbrace \subseteq H_{A_t}
	\\  \hline
	\end{array}
	$$
	Comparing the three local alternatives, it follows from the definition of $H_{A_c}, H_{A_l}, H_{A_t}$ that $H_{A_c}^c \supseteq H_{A_l} \supseteq H_{A_t}$.  We also want to remark that it holds for arbitrary function $\varphi$ and $\psi$ that 
	\begin{align*}
	\left\lbrace (F,G) \left| F_u = G_u, u = 1,2, \cdots, p \right. \right\rbrace \subseteq H_{A_l}, \\
	\left\lbrace (F,G) \left| (x_u, x_v) =^d (y_u, y_v), u, v  =1, \cdots, p  \right. \right\rbrace \subseteq H_{A_t}.
	\end{align*}

	\subsection{High Dimensional Low Sample Size (HDLSS)}
	The analysis in this subsection is conducted under the high dimensional low sample size setting (HDLSS), i.e, $n,m$ are fixed constants and we let $p \rightarrow \infty$. Our final goal is to study the power of $\text{ED}_{n}^{k}( \mathbf Z)$ based permutation test under various local alternatives. To this end, we need the following assumption. Recall the operator $\mathcal{K}$ is defined in \eqref{eq:K}.
\begin{assumption} \label{Ass:1}
	For fixed $n,m$, as $p \rightarrow \infty$,
	\begin{align*}
	\left(
	\begin{array}{c}
	\mathcal{K}(X_i, Y_j)   \\
	\mathcal{K}(X_{i_1}, X_{i_2})  \\
	\mathcal{K}(Y_{j_1}, Y_{j_2})  \\
	\end{array} \right)_{i,j,  i_1 < i_2  , j_1 < j_2 }  \overset{d}{\rightarrow} \left(
	\begin{array}{c}
	b_{ij} \\
	c_{i_1 i_2} \\
	d_{j_1 j_2}
	\end{array} \right)_{ i,j,  i_1 < i_2  , j_1 < j_2 },
	\end{align*}
	where $\{b_{ij}, c_{i_1 i_2}, d_{j_1 j_2} \}_{ i,j,  i_1 < i_2  , j_1 < j_2}$ are uncorrelated and jointly Gaussian with mean 0 and variances $\var( b_{ij}) = v_{xy} $, $\var(c_{i_1 i_2}) = v_{x}$, $\var(d_{j_1 j_2}) = v_{y}$.
\end{assumption}
	\begin{remark}
		The above multi-dimensional CLT result is classical and can be derived under suitable moment and weak dependence assumptions on the components of $X$ and $Y$. 
	\end{remark}
	In the above assumption, it is due to the use of double centered distance $\mathcal{K}(Z_i, Z_j)$ that the asymptotic covariance matrix is diagonal. Then, to provide some insights, the first step of our power analysis is the Taylor expansion w.r.t $\varphi$ up to the second order, i.e., for $i \neq j$
	\begin{align*}
	k(Z_i, Z_j)  = \varphi\left( e_{ij} \right) + \varphi^{(1)}(e_{ij}) \mathcal{L}(Z_i, Z_j)  + \mathcal{R}_{2}(Z_i,Z_j),
	\end{align*}
	where $
	\mathcal{L}(Z_i, Z_j) :=  \overline{\psi} (Z_{i}, Z_{j}) - e_{ij}
	$ is an operator that acts on random variables, $\mathcal{R}_{2}(Z_i,Z_j) $ is the remainder and 
	\begin{align*}
	e_{ij} = \left\lbrace  \begin{array}{ll}
	e_{x}, & \text{ if } 1 \leq i, j \leq n, \\
	e_{y}, & \text{ if } n+1 \leq i, j \leq n+m, \\
	e_{xy}, & \text{ otherwise}.
	\end{array} \right. 
	\end{align*}
	In order to control the remainder term, we need assumptions about the decay rate of $E[ \mathcal{L}^2(Z_i, Z_j) ]$. Thus, we set 
	$$
	\alpha_{x}^2 = E[\mathcal{L}^2(X,X')], ~\alpha_{y}^2 = E[\mathcal{L}^2(Y,Y')] \text{ and }  \alpha_{xy}^2 = E[\mathcal{L}^2(X,Y)].
	$$ 
	It then follows from Markov's inequality that $ \mathcal{L}(X,Y) = O_p(\alpha_{xy}) $, $ \mathcal{L}(X,X') = O_p(\alpha_{x}) $ and $ \mathcal{L}(Y,Y') = O_p(\alpha_{y}) $. Then, our next two assumptions are used to control the remainder terms induced by taking the Taylor expansion.
	\begin{assumption} \label{Ass:3}
		$ 
		\alpha_{xy}^2 = o(1),  \alpha_{x}^2 = o(1) \text{ and } \alpha_{y}^2 = o(1).
		$
	\end{assumption}
	\begin{assumption} \label{Ass:4}
		$
		\sqrt{p} \alpha_{xy}^2 = o(1), \sqrt{p} \alpha_{x}^2 = o(1) \text{ and } \sqrt{p}\alpha_{y}^2 = o(1).
		$
	\end{assumption}
\begin{remark}\label{rmk:l1}
	To gain some insights into the above assumptions, a straightforward calculation yields
	\begin{align*}
	\alpha_{xy}^2 & = \frac{1}{p^2} \sum_{u,v=1}^p \cov( \psi(x_u, y_u), \psi(x_v, y_v) ) + \left(\frac{\sum_{u=1}^{p} E[ \psi (x_{u}, y_{u})]}{p}- e_{xy}\right)^2 .
	\end{align*} 
	Therefore, we have $ \sqrt{p} \alpha_{xy}^2 = o(1) $ if the component-wise dependencies of both $X$ and $Y$ are not so strong. For illustration purpose, suppose $X$ and $Y$ are $\kappa$-dependent weak stationary time series, i.e., $x_u \perp x_{v}$ and $y_u \perp y_v$ if $|u-v| > \kappa$. Then, if $\max_u E [ \psi^2(x_u, y_u)]< \infty$, it is easy to see that $ \alpha_{xy}^2 = O\left(\kappa/p \right)$ and as a consequence, Assumption \ref{Ass:4} is satisfied as long as $\kappa / \sqrt{p} = o(1)$. In addition, it is indeed fairly straightforward to verify the above result when the sequence $\{(x_u, y_u)\}_{u=1}^p$ is $\alpha$-mixing with geometrically decaying coefficients.
\end{remark}
\begin{remark}
	When $\psi(x,y) = (x-y)^2$, some algebra shows that
	\begin{multline*}
	 \sum_{u,v=1}^p \cov( \psi(x_u, y_u), \psi(x_v, y_v) )  
	=  \var(X^T X) + \var(Y^T Y) \\+ 4 \var(X^TY) 
	  - 4 \cov(X^TX, X^T E[Y]) - 4 \cov(Y^TY, Y^T E[X]).
	\end{multline*}
	Thus, suppose $ \sum_{u=1}^{p} E[ \psi (x_{u}, y_{u})]/p-e_{xy} = o(p^{-1/4}) $ and if $\var(X^T X)$, $\var(Y^T Y),$  $\var(X^TY),$  $\var(X^TE[Y])$, $ \var(E[X]^TY)$ all have order $o(p^{1.5})$, we have $ \sqrt{p} \alpha_{xy}^2 = o(1) $. 
\end{remark}
In the next theorem, we state the asymptotic behavior of $ \text{ED}_n^{k}(\varGamma_{w}\mathbf{Z}) $ for each fixed permutation matrix $\varGamma_w \in \mathbb{S}_w$. Here, we use the italic gamma $\varGamma_w \in \mathbb{S}_w$ to differentiate from $\Gamma_i \in \mathbb{P}_{n+m}$.
	\begin{thm}
		\label{thm:lowall}
		For fixed $ \varGamma_w \in \mathbb{S}_w$,
		\begin{itemize}
			\item[(i)] Under Assumptions \ref{ass:1} and \ref{Ass:3},
			$$
			\emph{\text{ED}}_n^{k}( \varGamma_w \mathbf{Z}) \overset{p}{\rightarrow} \mu_{n, w},
			$$
			where $\mu_{n,w}$ is defined as
			\begin{multline*}
			\mu_{n,w} := \mu_{n}(\varGamma_w \mathbf{ Z})= (2\varphi( e_{xy} ) - \varphi(e_{x}) - \varphi (e_y)) \times \\ \left\lbrace  1
			-  \left( \frac{2m-1}{m(m-1)} + \frac{2n-1}{n(n-1)} \right)w
			+  \left( \frac{2}{mn}+ \frac{1}{n(n-1)} + \frac{1}{m(m-1)} \right) w^{2} \right\rbrace  .
			\end{multline*}
			\item[(ii)] Under Assumptions \ref{ass:1}, \ref{Ass:1}, \ref{Ass:4} and local alternative $H_{A_l}$, 
			$$
			\sqrt{p} \left( \emph{\text{ED}}_n^k( \varGamma_w \mathbf{Z}) -   \mu_n( \varGamma_w \mathbf{Z})\right) \overset{d}{\rightarrow} N\left( 0, \sigma^2_{n,w} \right).
			$$
			where $ \sigma^2_{n,w} $ is given as 
			\begin{align*}
			\sigma^2_{n,w} :=& \sigma_n(\varGamma_w \mathbf{ Z}) \\
			=&  \bigg\{ \frac{4}{nm} -4 \left( \frac{n+m}{n^2m^2} - \frac{n}{n^2(n-1)^2} - \frac{m}{m^2(m-1)^2} \right)w \\ 
			& \hspace{1cm}+ 4 \left(\frac{2}{n^2m^2} - \frac{1}{n^2(n-1)^2} - \frac{1}{m^2(m-1)^2} \right) w^{2} \bigg\} v_{xy} [\varphi^{(1)}(e_{xy})]^2  \\
			+ & \bigg\{ \frac{2}{n(n-1)} + 2\left( \frac{2n}{n^2m^2} - \frac{2n-1}{n^2(n-1)^2} - \frac{1}{m^2(m-1)^2} \right) w \\
			& \hspace{1cm} - 2\left(\frac{2}{m^2n^2} - \frac{1}{n^2(n-1)^2} - \frac{1}{m^2(m-1)^2} \right) w^{2}  \bigg\} v_x[ \varphi^{(1)}(e_{x})]^2 \\
			+ & \bigg\{ \frac{2}{m(m-1)} + 2\left( \frac{2m}{n^2m^2}  - \frac{1}{n^2(n-1)^2} - \frac{2m-1}{m^2(m-1)^2} \right) w \\
			& \hspace{1cm} -2 \left(\frac{2}{n^2m^2} - \frac{1}{m^2(m-1)^2} - \frac{1}{n^2(n-1)^2} \right) w^{2}  \bigg\} v_y[ \varphi^{(1)}(e_{y})]^2 .
			\end{align*}
		\end{itemize}
	\end{thm}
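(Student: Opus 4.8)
The plan is to begin from the weight representation \eqref{eq:permuteWeight}, which writes $\text{ED}_n^k(\varGamma_w\mathbf{Z})=\sum_{i>j}\Pi_{w,ij}k(Z_i,Z_j)$ with each $Z_i$ kept in its original row (so $Z_i$ is an $X$-sample for $i\le n$ and a $Y$-sample for $i>n$), and then to substitute the second-order Taylor expansion $k(Z_i,Z_j)=\varphi(e_{ij})+\varphi^{(1)}(e_{ij})\mathcal{L}(Z_i,Z_j)+\mathcal{R}_2(Z_i,Z_j)$. Because $n,m$ are fixed, this is a finite linear combination with deterministic weights, so no averaging over pairs is needed and I can control each of the three pieces term by term. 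The bookkeeping device used throughout is the partition of the $n+m$ indices by (original label, permuted slot): $n-w$ are $X$-in-first-block, $w$ are $X$-in-second-block, $w$ are $Y$-in-first-block, and $m-w$ are $Y$-in-second-block. Since $e_{ij}$ and the limiting variance of $\mathcal{K}(Z_i,Z_j)$ depend only on the original labels while $\Pi_{w,ij}$ depends only on the slots, every sum over pairs factors into products of these four group sizes.

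For part (i), Assumption \ref{Ass:3} gives $\mathcal{L}(Z_i,Z_j)=O_p(\alpha)=o_p(1)$, and writing $\mathcal{R}_2=\tfrac12\varphi^{(2)}(\xi_{ij})\mathcal{L}^2(Z_i,Z_j)$ with $\overline\psi(Z_i,Z_j)\to e_{ij}>0$ in probability (so $\varphi^{(2)}(\xi_{ij})=O_p(1)$ near $e_{ij}$) yields $\mathcal{R}_2=O_p(\alpha^2)=o_p(1)$; with finitely many pairs the linear and remainder parts vanish in probability. The limit is the constant $\sum_{i>j}\Pi_{w,ij}\varphi(e_{ij})=\varphi(e_x)\Sigma^{XX}+\varphi(e_y)\Sigma^{YY}+\varphi(e_{xy})\Sigma^{XY}$, where $\Sigma^{\bullet}$ are weight sums over same/cross-label pairs. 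A direct computation from the four group sizes gives $\Sigma^{XX}=\Sigma^{YY}$ and $\Sigma^{XX}+\Sigma^{YY}+\Sigma^{XY}=0$ (the zero total weight of $\text{ED}_n^k$), which collapses the constant to $(2\varphi(e_{xy})-\varphi(e_x)-\varphi(e_y))(-\Sigma^{XX})$; expanding $-\Sigma^{XX}$ in powers of $w$ reproduces $\mu_{n,w}$.

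For part (ii) the same expansion applies, but $H_{A_l}$ forces $e_x=e_y=e_{xy}=:e$, so the constant term is $\varphi(e)\sum_{i>j}\Pi_{w,ij}=0$ and $\mu_{n,w}=0$. Scaling by $\sqrt p$ and using the exact identity $\sqrt p\,\mathcal{L}(Z_i,Z_j)=\mathcal{K}(Z_i,Z_j)+P_i(j)+P_j(i)+\sqrt p(\mu_{ij}-e)$, where $P_i(j)$ is the single-index conditional-mean (H\'ajek) term and $\mu_{ij}=E[\overline\psi(Z_i,Z_j)]$, splits $\sqrt p\,\text{ED}_n^k(\varGamma_w\mathbf{Z})$ into the $\mathcal{K}$-part $T_1=\sum_{i>j}\Pi_{w,ij}\varphi^{(1)}(e_{ij})\mathcal{K}(Z_i,Z_j)$, a projection part $T_2$, a bias part $T_3$, and the remainder $T_4=\sqrt p\sum_{i>j}\Pi_{w,ij}\mathcal{R}_2$. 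By Assumption \ref{Ass:1}, $T_1$ is a fixed linear combination of the uncorrelated jointly Gaussian limits $b_{ij},c_{i_1i_2},d_{j_1j_2}$, hence asymptotically normal with variance $[\varphi^{(1)}(e_{xy})]^2v_{xy}\sum_{XY}\Pi_{w,ij}^2+[\varphi^{(1)}(e_x)]^2v_x\sum_{XX}\Pi_{w,ij}^2+[\varphi^{(1)}(e_y)]^2v_y\sum_{YY}\Pi_{w,ij}^2$; evaluating the three squared-weight sums from the group sizes produces exactly $\sigma^2_{n,w}$. Assumption \ref{Ass:4} gives $T_4=O_p(\sqrt p\,\alpha^2)=o_p(1)$.

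The crux is the negligibility of $T_2$ and $T_3$, which is where $H_{A_l}$ enters, and I expect this to be the main obstacle. The structural fact I will exploit is that the weight matrix has vanishing row sums, $\sum_{j\ne i}\Pi_{w,ij}=0$ for every $i$ (true before and after permutation, as it depends only on the slot of $i$). Writing $P_i^X,P_i^Y$ for the projection of $Z_i$ against an $X$- versus a $Y$-partner, this makes the inner sum collapse to $\sum_{j\ne i}\Pi_{w,ij}P_i(j)=(P_i^X-P_i^Y)S_i^X$ with $S_i^X$ bounded; the third and fourth defining conditions of $H_{A_l}$ together with Jensen's inequality give $E|P_i^X-P_i^Y|\le 2\sqrt p\cdot o(\sqrt{1/(nmp)})=o(1)$, so $T_2=o_p(1)$. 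Likewise, using $\Sigma^{XX}+\Sigma^{YY}+\Sigma^{XY}=0$, the bias reduces to $\sqrt p[(\delta_x-\delta_{xy})\Sigma^{XX}+(\delta_y-\delta_{xy})\Sigma^{YY}]$ with $\delta_\bullet=\mu_\bullet-e$, and Jensen bounds $|\delta_x-\delta_{xy}|$ and $|\delta_y-\delta_{xy}|$ by the same conditional-mean quantities, each $o(\sqrt{1/(nmp)})$, so $T_3\to 0$. A final application of Slutsky combines $T_1\Rightarrow N(0,\sigma^2_{n,w})$ with $T_2,T_3,T_4=o_p(1)$. The delicate points are aligning the zero-row-sum cancellation with the $L^1$-type control supplied by $H_{A_l}$ and consistently separating quantities governed by original labels from those governed by permuted slots; the remaining identities ($\Sigma^{XX}=\Sigma^{YY}$ and the squared-weight sums) are routine but must be carried out to match the stated constants.
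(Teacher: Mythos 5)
Your proposal is correct and follows essentially the same route as the paper's proof: a Taylor expansion of $k(Z_i,Z_j)$ about $e_{ij}$, group-size counting of the weights to obtain $\mu_{n,w}$ and $\sigma^2_{n,w}$, Assumption \ref{Ass:1} for the normal limit of the double-centered leading term, the vanishing-row-sum structure of the weights combined with the $H_{A_l}$ conditions (via Jensen and Markov) to kill the projection and bias terms, and Assumptions \ref{Ass:3}/\ref{Ass:4} to control the quadratic remainder. The only differences are organizational: the paper bundles your $T_2+T_3$ into the single term $R_l$ and disposes of it in Lemma \ref{lemma:lowRl} by exactly the row-sum collapse you describe, and it computes $\mu_{n,w}$ by direct expansion rather than through your identities $\Sigma^{XX}=\Sigma^{YY}$ and zero total weight.
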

We use $W \sim \text{Hypergeometric}(n+m,m,n) $ to denote that $W$ follows the hypergeometric distribution, which describes the probability of $n$ draws from a union of two groups (one group has $m$ elements, the other has $n$ elements) such that $w$ of them are chosen from the group of size $m$. To be precise, $W$ has probability mass function
\begin{align*}
P(W=w) = \frac{\binom{m}{w}\binom{n}{m-w}}{\binom{n+m}{n}} \text{ for }  w \in \left\lbrace 0,1, \cdots, \min \{ n,m \} \right\rbrace .
\end{align*}
Then, the limiting distribution of $\text{ED}_n^{k}(\mathbf \Gamma \mathbf{Z})$ is derived in the following proposition.
\begin{proposition}
	\label{Lemmalow: asymHp}
	For $\mathbf \Gamma \sim \text{Uniform}(\mathbb P_{n+m})$, which is independent of the data, let 
	$$ 
	W:= N(\mathbf \Gamma) \sim \text{Hypergeometric}(n+m,m,n). 
	$$
	\begin{itemize}
		\item[(i)] Under Assumptions \ref{ass:1} and \ref{Ass:3},
		$$
		\emph{\text{ED}}_n^{k}(\mathbf \Gamma \mathbf{Z}) \overset{p}{\rightarrow} \mu_{n,W}.
		$$
		\item[(ii)] Under Assumptions \ref{ass:1}, \ref{Ass:1}, \ref{Ass:4} and local alternative $H_{A_l}$, 
		$$
		\sqrt{p} \left( \emph{\text{ED}}_n^k(\mathbf \Gamma \mathbf{Z}) -   \mu_{n,W}\right) \overset{d}{\rightarrow} N\left( 0, \sigma^2_{n,W} \right).
		$$
	\end{itemize}
\end{proposition}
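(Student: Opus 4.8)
The plan is to obtain Proposition \ref{Lemmalow: asymHp} as a mixture over the hypergeometric law of $W$, reducing everything to the fixed-permutation statements already established in Theorem \ref{thm:lowall}. The crucial structural facts are that $\mathbf{\Gamma}$ is independent of the data $\mathbf{Z}$, that $W = N(\mathbf{\Gamma})$ is \emph{exactly} $\text{Hypergeometric}(n+m,m,n)$ for every $p$ (so its law does not depend on $p$ and need not be passed to a limit), and that, by Theorem \ref{thm:lowall}, the limits $\mu_{n,w}$ and $\sigma^2_{n,w}$ depend on a permutation $\varGamma_w \in \mathbb{S}_w$ only through the integer $w$. Conditioning on $\{W = w\}$ makes $\mathbf{\Gamma}$ uniform on the finite set $\mathbb{S}_w$, and since $0 \le w \le \min\{n,m\}$ takes only finitely many values for fixed $n,m$, all limits will be finite averages of the corresponding fixed-permutation limits.

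For part (i), I would fix $\epsilon > 0$ and write, using independence and the finite partition $\{W = w\}$,
\begin{align*}
P\!\left(\left|\text{ED}_n^k(\mathbf{\Gamma}\mathbf{Z}) - \mu_{n,W}\right| > \epsilon\right) = \sum_{w=0}^{\min\{n,m\}} P(W = w)\, \frac{1}{|\mathbb{S}_w|} \sum_{\varGamma \in \mathbb{S}_w} P\!\left(\left|\text{ED}_n^k(\varGamma \mathbf{Z}) - \mu_{n,w}\right| > \epsilon\right).
\end{align*}
By Theorem \ref{thm:lowall}(i) each inner probability tends to $0$ as $p \to \infty$; because both the outer sum over $w$ and the inner average over $\mathbb{S}_w$ are finite for fixed $n,m$, the entire expression vanishes, which is the claimed $\text{ED}_n^k(\mathbf{\Gamma}\mathbf{Z}) \overset{p}{\rightarrow} \mu_{n,W}$.

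For part (ii), set $T_p := \sqrt{p}\,(\text{ED}_n^k(\mathbf{\Gamma}\mathbf{Z}) - \mu_{n,W})$ and identify the candidate limit as the scale mixture of centered normals with CDF $t \mapsto \sum_w P(W=w)\,\Phi(t/\sigma_{n,w})$, i.e.\ the law that is $N(0,\sigma^2_{n,w})$ conditional on $\{W=w\}$. Conditioning as before expresses $P(T_p \le t \mid W = w)$ as the finite average over $\varGamma \in \mathbb{S}_w$ of the fixed-permutation CDFs, each of which converges to $\Phi(t/\sigma_{n,w})$ by Theorem \ref{thm:lowall}(ii); summing $P(W=w)$ times these conditional CDFs then yields the mixture limit. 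Since $W$ is discrete with a law that is already hypergeometric for every $p$, this simultaneously gives the joint convergence $(T_p, W) \overset{d}{\rightarrow} (T, W)$ with $T \mid W \sim N(0, \sigma^2_{n,W})$.

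The point requiring care — rather than a deep obstacle, since Theorem \ref{thm:lowall} supplies all the analytic content — is the continuity-point bookkeeping in the convergence-in-distribution step: whenever some $\sigma_{n,w} = 0$ the corresponding mixture component is a point mass at $0$, so the limiting CDF has its only possible discontinuity at $t = 0$; I would verify convergence at every $t \neq 0$ (which suffices for $\overset{d}{\rightarrow}$) and handle $t = 0$ separately if needed. The remaining subtlety is purely conceptual: the ``$N(0,\sigma^2_{n,W})$'' limit is a random-variance (scale-mixture) normal driven by the independent hypergeometric $W$, and it is precisely the finiteness of the support of $W$ together with the independence of $\mathbf{\Gamma}$ and $\mathbf{Z}$ that legitimizes passing the per-$w$ limits of Theorem \ref{thm:lowall} through the finite mixture.
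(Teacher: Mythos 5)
Your proof is correct and takes essentially the same route as the paper: both arguments reduce the random-permutation statement to the fixed-permutation limits of Theorem \ref{thm:lowall}, using the independence of $\mathbf{\Gamma}$ and $\mathbf{Z}$ together with the finiteness of $\mathbb{P}_{n+m}$ to pass to the hypergeometric mixture $\mu_{n,W}$ and $N(0,\sigma^2_{n,W})$. Your explicit conditioning on $\{W=w\}$ and the continuity-point bookkeeping merely spell out what the paper's terser proof (which restates the decomposition of Theorem \ref{thm:lowall} with the random weights $\mathbf{\Pi}_{ij}$ and appeals to the fixed-permutation results) leaves implicit.
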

In the above proposition, $ N\left( 0, \sigma^2_{n,W} \right) $ should be understood as a mixture of Gaussian with probability distribution 
\begin{align*}
P\left( N\left( 0, \sigma^2_{n,W} \right) \leq a \right) = \sum_{w=1}^{\min \{n,m\} } P(W=w) P(N\left( 0, \sigma^2_{n,w} \right) \leq a).
\end{align*}
Next, let $\Gamma_0$ corresponds to the identity permutation map, we present the power behavior of $ \text{ED}^k(\mathbf Z)$ when the critical values are obtained via permutations.
\begin{thm}
		\label{lemma:power1}
		Assume that $ 2 \varphi(e_{xy}) \geq  \varphi(e_{x})+  \varphi(e_{y}) $.
		\begin{itemize}
			\item[1,] \emph{[\textbf{Consistency}]} Suppose Assumptions \ref{ass:1} and \ref{Ass:3} hold. 
			\begin{itemize}
				\item[(i)] If the critical value is chosen as $ Q_{\widehat{R},1-\alpha} $. Let $n,m$ be large enough such that $
				n!m!/(n+m)! < 1 - \alpha \text{ if } m \neq n$ and $
				2 (n!)^2/(2n)! < 1 - \alpha \text{ if } m = n$. Then, we have
				$$
				\lim_{p \rightarrow \infty} P_{H_{A_c}} \left(\emph{\text{ED}}_n^k(\mathbf{Z}) > Q_{\widehat{R},1-\alpha} \right) = 1,
				$$
				which means that the asymptotic power of $\emph{\text{ED}}^k$ based permutation test is 1 as $p$ goes to infinity.
				\item[(ii)] If the critical value is chosen as $ Q_{\widetilde{R},1-\alpha} $.Then, we have
				\begin{align*}
				\lim_{p \rightarrow \infty} P_{H_{A_c}} \left(\emph{\text{ED}}_n^k(\mathbf{Z}) > Q_{\widetilde{R},1-\alpha} \right)  \geq  \left\{ 
				\begin{array}{ll}
				1 - \frac{S-1}{ \lfloor \alpha S \rfloor } \frac{n!m!}{(n+m)!}, & \text{if } n \neq m , \\
				1 - \frac{S-1}{ \lfloor \alpha S \rfloor } \frac{2(n!)^2}{(n+m)!}, & \text{if } n = m .
				\end{array}	\right.
				\end{align*}
			\end{itemize}
			
			\item[2,] \emph{[\textbf{Power Limit}]} Suppose Assumptions \ref{ass:1}, \ref{Ass:1}, \ref{Ass:4} hold. 
			\begin{itemize}
				\item[(i)] If the critical value is chosen as $ Q_{\widehat{R},1-\alpha} $. Then, we have
				\begin{align*}
				\lim_{p \rightarrow \infty}P_{H_{A_l}}\left(\emph{\text{ED}}_n^k(\mathbf{Z}) > Q_{\widehat{R},1-\alpha}\right)  = P(V(\Gamma_0) >  Q_{\widehat{T}, 1-\alpha} ) ,
				\end{align*}
				where 
				\begin{align*}
				\widehat{T}(t) := \frac{1}{(n+m)!} \sum\limits_{i=1}^{(n+m)!} \mathbb{I}_{ \left\lbrace V(\Gamma_{i})  \leq t \right\rbrace }
				\end{align*}
				and
				$$
				V (\Gamma_s)  = \sum\limits_{i=1}^{n} \sum\limits_{j=1}^{m} \Pi_{s,ij} b_{ij}  -   \sum\limits_{1 \leq i < j \leq n} \Pi_{s,ij} c_{ij} -  \sum\limits_{1 \leq i < j \leq m} \Pi_{s,ij} d_{ij}.
				$$
				\item[(ii)] If the critical value is chosen as $ Q_{\widetilde{R},1-\alpha} $.
				$$
				\lim_{p \rightarrow \infty}P_{H_{A_l}}\left(\emph{\text{ED}}_n^k(\mathbf{Z}) > Q_{\widetilde{R},1-\alpha}\right) = P\left(V(\Gamma_0) >  Q_{\widetilde{T},1-\alpha} \right),
				$$
				where 
				\begin{align*}
				\widetilde{T} :=\frac{1}{S} \left(\mathbb{I}_{ \left\lbrace V(\Gamma_0)  \leq t \right\rbrace } + \sum\limits_{i=1}^{S-1}  \mathbb{I}_{ \left\lbrace V(\mathbf \Gamma_{i} )  \leq t \right\rbrace }  \right).
				\end{align*}
			\end{itemize}
			
			\item[3,] \emph{[\textbf{Trivial Power}]} Suppose Assumptions \ref{ass:1}, \ref{Ass:1}, \ref{Ass:4} hold. Then, we have
			$$
			\lim_{p \rightarrow \infty}P_{H_{A_t}} \left( \emph{\text{ED}}_n^k(\mathbf{Z}) > c \right) \leq \alpha \text{ where } c = Q_{\widehat{R},1-\alpha} \text{ or } c= Q_{\widetilde{R},1-\alpha},
			$$
			which means that the asymptotic power of $\emph{\text{ED}}^k$ based permutation test is no more than the level $\alpha$ when $p$ goes to infinity.
		\end{itemize}
	\end{thm}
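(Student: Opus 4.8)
The plan is to reduce all three claims to the joint behaviour of the observed statistic $\text{ED}_n^k(\mathbf Z)$ and of the entire randomization distribution $\widehat R$, using the fact that the identity permutation $\Gamma_0$ lies in $\mathbb S_0$, so that $\text{ED}_n^k(\mathbf Z)=\text{ED}_n^k(\Gamma_0\mathbf Z)$ corresponds to $w=0$. The one structural observation that drives everything is that $\text{ED}_n^k$ is invariant under any permutation that keeps the two blocks intact: every $\varGamma\in\mathbb S_0$ (and, when $n=m$, every $\varGamma\in\mathbb S_n$) only reshuffles samples within blocks and therefore leaves the statistic unchanged. Hence $\widehat R$ carries an \emph{exact} atom of mass $|\mathbb S_0|/(n+m)!=\frac{n!m!}{(n+m)!}$ at the observed value (mass $\frac{2(n!)^2}{(2n)!}$ when $n=m$, since $\mathbb S_0$ and $\mathbb S_n$ then share the same value). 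Combining this with Theorem~\ref{thm:lowall} and Proposition~\ref{Lemmalow: asymHp}, which pin down the limit of $\text{ED}_n^k(\varGamma_w\mathbf Z)$ for every block-count $w$, turns $Q_{\widehat R,1-\alpha}$ into an explicit functional of a limiting law, with the three regimes $H_{A_c}$, $H_{A_l}$, $H_{A_t}$ producing three different limiting laws.

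For consistency I would first invoke Theorem~\ref{thm:lowall}(i): under Assumptions~\ref{ass:1} and \ref{Ass:3}, $\text{ED}_n^k(\varGamma_w\mathbf Z)\overset{p}{\rightarrow}\mu_{n,w}=(2\varphi(e_{xy})-\varphi(e_x)-\varphi(e_y))\,g(w)$, where $g(w)$ is the displayed quadratic. Since $g$ is convex in $w$ with $g(0)=1$, its maximum over $\{0,\dots,\min\{n,m\}\}$ is attained at $w=0$ (and also at $w=n$ when $n=m$), so under $H_{A_c}$ together with $2\varphi(e_{xy})>\varphi(e_x)+\varphi(e_y)$ the observed limit $\mu_{n,0}>0$ is the strict maximum among the limiting atoms. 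The randomization law then converges to a discrete distribution whose top atom has mass $\frac{n!m!}{(n+m)!}$ (resp. $\frac{2(n!)^2}{(2n)!}$); taking $n,m$ large enough that this atom is lighter than $\alpha$ forces $Q_{\widehat R,1-\alpha}$ eventually strictly below $\mu_{n,0}$, whence $P_{H_{A_c}}(\text{ED}_n^k(\mathbf Z)>Q_{\widehat R,1-\alpha})\to1$. For the Monte Carlo critical value $Q_{\widetilde R,1-\alpha}$ I would bound the failure event directly: a draw $\mathbf\Gamma_i$ ties the (asymptotic) maximum exactly when it lands in $\mathbb S_0$ (or $\mathbb S_0\cup\mathbb S_n$), so the test can fail only if at least $\lfloor\alpha S\rfloor$ of the $S-1$ independent draws do so; since that count is $\mathrm{Binomial}(S-1,\frac{n!m!}{(n+m)!})$, Markov's inequality gives precisely the stated lower bound $1-\frac{S-1}{\lfloor\alpha S\rfloor}\frac{n!m!}{(n+m)!}$.

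For the power-limit claim I would pass to the $\sqrt p$ scale. Under $H_{A_l}$ we have $e_{xy}=e_x=e_y$, so the leading constant $2\varphi(e_{xy})-\varphi(e_x)-\varphi(e_y)$ vanishes and every $\mu_{n,w}\equiv0$; the multivariate CLT of Assumption~\ref{Ass:1}, propagated through the first-order Taylor expansion exactly as in Theorem~\ref{thm:lowall}(ii) and Proposition~\ref{Lemmalow: asymHp}(ii), then yields the joint weak limit of $\sqrt p\,(\text{ED}_n^k(\Gamma_i\mathbf Z))_i$ as the Gaussian field $(V(\Gamma_i))_i$. Since $\sqrt p\,\text{ED}_n^k(\mathbf Z)\overset{d}{\rightarrow}V(\Gamma_0)$ and the rescaled randomization CDF $t\mapsto\widehat R(t/\sqrt p)$ converges to $\widehat T$, the continuous mapping theorem applied jointly to the statistic and the $(1-\alpha)$-quantile functional gives $P_{H_{A_l}}(\text{ED}_n^k(\mathbf Z)>Q_{\widehat R,1-\alpha})\to P(V(\Gamma_0)>Q_{\widehat T,1-\alpha})$, with the Monte Carlo case identical after replacing $\widehat T$ by $\widetilde T$. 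The delicate point here—and the main obstacle of the whole argument—is the continuity of the quantile functional at the limit: I would argue that, because the $V(\Gamma_i)$ are jointly Gaussian and non-degenerate, $P(V(\Gamma_0)=V(\Gamma_i))=0$ for every $i\neq0$, so almost surely $\widehat T$ places no atom at $V(\Gamma_0)$, which renders the quantile functional a.s.\ continuous and legitimises the convergence of the rejection probabilities.

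Finally, for the trivial-power claim I would show that $H_{A_t}$ collapses the limiting field into an exchangeable one. Because $H_{A_t}\subseteq H_{A_l}$ again forces $e_{xy}=e_x=e_y$, the derivatives $\varphi^{(1)}(e_{xy})$, $\varphi^{(1)}(e_x)$, $\varphi^{(1)}(e_y)$ coincide, and the extra requirement $v_{xy}=v_x=v_y$ makes the uncorrelated jointly Gaussian variables $\{b_{ij},c_{i_1i_2},d_{j_1j_2}\}$ i.i.d.; equivalently, the underlying symmetric Gaussian array becomes invariant under relabelling of the $n+m$ indices, so the family $(V(\Gamma_i))_i$ is exchangeable. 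Starting from the representation $P(V(\Gamma_0)>Q_{\widehat T,1-\alpha})$ established above, exchangeability makes the rank of $V(\Gamma_0)$ within $\{V(\Gamma_i)\}$ stochastically no smaller than uniform—the classical validity argument for permutation tests—yielding $P(V(\Gamma_0)>Q_{\widehat T,1-\alpha})\le\alpha$, and the same reasoning with $\widetilde T$ settles the Monte Carlo critical value, giving $\lim_{p\to\infty}P_{H_{A_t}}(\text{ED}_n^k(\mathbf Z)>c)\le\alpha$ for $c=Q_{\widehat R,1-\alpha}$ or $c=Q_{\widetilde R,1-\alpha}$.
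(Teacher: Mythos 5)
Your proposal follows the paper's architecture part for part: consistency via the convergence of each $\text{ED}_n^k(\Gamma_i\mathbf Z)$ to the discrete limit $\mu_{n,N(\Gamma_i)}$ together with a Binomial--Markov bound for the sampled critical value; the power limit via the joint Gaussian limit on the $\sqrt p$ scale; trivial power via exchangeability of the limiting field and the classical permutation-validity argument. One deviation in part 1(i) is worth flagging, though it works in your favor: you require the top atom of the randomization law to have mass smaller than $\alpha$, whereas the theorem only assumes $n!m!/(n+m)!<1-\alpha$. Your condition is in fact the one the argument needs, since the $(1-\alpha)$-quantile of the limiting randomization law falls strictly below the common value of the top group if and only if that group's mass is at most $\alpha$; the paper's own proof asserts that mass $<1-\alpha$ suffices for this, which is not true (for $n=m=2$ and $\alpha=0.05$ the atom has mass $1/3<0.95$ and the test can never reject). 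So here you have corrected, not weakened, the corresponding step.

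The genuine gap is in part 2. Your resolution of the tie problem asserts that $P(V(\Gamma_0)=V(\Gamma_i))=0$ for every $i\neq 0$, hence that $\widehat T$ almost surely puts no atom at $V(\Gamma_0)$. This is false, and it contradicts the invariance you yourself established in your first paragraph: every $\Gamma_i\in\mathbb S_0$ (and every $\Gamma_i\in\mathbb S_n$ when $n=m$) has exactly the same weight vector $\Pi_{i,\cdot}$ as the identity, so $V(\Gamma_i)\equiv V(\Gamma_0)$ identically, and $\widehat T$ carries an atom of mass at least $n!m!/(n+m)!$ at $V(\Gamma_0)$ with probability one. Consequently $P\bigl(V(\Gamma_0)=Q_{\widehat T,1-\alpha}\bigr)$ is positive in general (under $H_{A_t}$, by exchangeability, it equals $1/\binom{n+m}{n}$), so $\{V(\Gamma_0)>Q_{\widehat T,1-\alpha}\}$ is not a continuity set of the limit law and the portmanteau step you invoke does not yield convergence of the rejection probabilities. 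The repair is to exploit the exact atom rather than deny it: since permutations in $\mathbb S_0$ (in $\mathbb S_0\cup\mathbb S_n$ when $n=m$) reproduce the observed statistic exactly at every $p$, rejection occurs if and only if the number of the remaining permutations $\Gamma_i$ with $\text{ED}_n^k(\Gamma_i\mathbf Z)<\text{ED}_n^k(\mathbf Z)$ is at least $\lceil(1-\alpha)(n+m)!\rceil$. This count is integer-valued; the differences $\sqrt p\,\bigl(\text{ED}_n^k(\Gamma_i\mathbf Z)-\text{ED}_n^k(\mathbf Z)\bigr)$ for those $\Gamma_i$ converge jointly to the Gaussians $V(\Gamma_i)-V(\Gamma_0)$, which are nondegenerate and hence almost surely nonzero, so the count converges in distribution, and the limiting probability coincides with $P\bigl(V(\Gamma_0)>Q_{\widehat T,1-\alpha}\bigr)$ because the $V$'s tie in exactly the same pattern. (The paper's own proof of part 2 simply invokes the continuous mapping theorem and is silent on this boundary issue; you identified the right obstacle but discharged it with a false claim.) The same remark applies verbatim to the Monte Carlo version 2(ii), and once part 2 is repaired your part 3, which matches the paper's argument, goes through as written.
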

	\begin{remark}
		The above theorem and discussions in subsection \ref{subsec:local} indicate that
		\begin{itemize}
			\item[1,] $L^1$-norm can be more advantageous than $L^2$-norm, Gaussian kernel and Laplacian kernel when the dimension is high, since $L^1$-distance leads to high power provided that the summation of discrepancies between marginal univariate distributions is not so small, while $L^2$-norm, Gaussian kernel and Laplacian kernel would result in power loss when the total of marginal univariate mean and variance differences between $X$ and $Y$ is of order $o(\sqrt{p})$. Notice that the distributions of $X$ and $Y$ can differ in other aspects of the marginal distribution even if they have the same marginal univariate mean and variance.
			\item[2,] All the tests under examination are only capable of detecting the discrepancies of marginal distributions. If the two high dimensional distributions $F \neq G$, but $F_u = G_u$ for $u =1,2, \cdots, p$, then none of them have consistent power.
		\end{itemize}
	\end{remark}
	
	\subsection{High Dimensional Medium Sample Size (HDMSS)}
	In this subsection, the theories are developed under the high dimensional medium sample size setting (HDMSS), i.e., as $p \rightarrow \infty$, $n:=n(p) \rightarrow \infty$ at a slower rate compared to $p$ and $n/m = \rho$, where $\rho \in (0, \infty)$ is a fixed constant. Though the proofs are quite different, most results and phenomena under the HDLSS setting have their similar counterparts under the HDMSS setting. Now that we have $n,m$ growing to infinity, we need a stronger version of Assumptions \ref{Ass:3} and \ref{Ass:4}.
	\begin{assumption} \label{ass:2}
		$
		nm \alpha_{xy}^2 = o(1), n^2 \alpha_{x}^2 = o(1) \text{ and } m^2\alpha_{y}^2 = o(1).
		$
	\end{assumption}
	\begin{assumption} \label{ass:3}
		$
		\sqrt{nmp} \alpha_{xy}^2 = o(1), n\sqrt{p} \alpha_{x}^2 = o(1) \text{ and } m\sqrt{p}\alpha_{y}^2 = o(1).
		$
	\end{assumption}
	\begin{remark}
		Following Remark \ref{rmk:l1}, for $\kappa$-dependent stationary time series, $\alpha_{xy}^2 = O(\kappa /p)$. Thus, Assumptions \ref{ass:2} and \ref{ass:3} both require that $nm \kappa = o(p) $.
	\end{remark}
	
	To derive the asymptotic distribution under the HDMSS, we note that the leading term of $\text{ED}^k_n(\mathbf Z)$ is a martingale, the following assumption is used to ensure the conditional Lindeberg
	condition and the requirements on the conditional variance in classic martingale central limit theorem.
	\begin{assumption} \label{ass:4}
		For any $\Lambda_1, \Lambda_2, \Lambda_3, \Lambda_4 \in \{ X,Y \} $, suppose
		\begin{align*}
		& E \left[ \mathcal{K}^4(\Lambda_1, \Lambda_2')  \right] = o\left( n^2 \right),  \\
		&E \left[ \mathcal{K}^2(\Lambda_1, \Lambda_3'') \mathcal{K}^2(\Lambda_2', \Lambda_3'') \right] = o(n),  \\
		&E \left[  \mathcal{K}(\Lambda_1, \Lambda_3'') \mathcal{K}(\Lambda_1, \Lambda_4''')  \mathcal{K}(\Lambda_2', \Lambda_4''') \mathcal{K}(\Lambda_2', \Lambda_3'')  \right] = o(1),
		\end{align*}
		where $(\Lambda_1', \Lambda_2', \Lambda_3', \Lambda_4')$, $(\Lambda_1'', \Lambda_2'', \Lambda_3'', \Lambda_4'')$ and  $(\Lambda_1''', \Lambda_2''', \Lambda_3''', \Lambda_4''') $ are independent copies of $(\Lambda_1, \Lambda_2, \Lambda_3, \Lambda_4)$.
	\end{assumption}
	\begin{remark}
		For any function $\varphi$ and $\psi$, suppose $X$ and $Y$ are $\kappa$ dependent sequences, i.e., $x_u \perp x_{v}$ and $y_u \perp y_v$ if $|u-v| > \kappa$. If there exists some constant $C>0$ such that 
		\begin{align*}
		\max  \left\lbrace  \sup_{u} E\left[ \psi^4 (x_u, y_u) \right] , \sup_{u} E\left[ \psi^4 (x_u, x_u') \right],  \sup_{u} E \left[ \psi^4 (y_u, y_u') \right] \right\rbrace  \leq C.
		\end{align*}
		Then, for notational convenience, let 
		\begin{align*}
		\phi_{xy, u} = \psi (x_{u}, y_{u}) - E\left[\psi (x_{u}, y_{ju})|x_{u}\right] -  E\left[\psi (x_{u}, y_{u})|y_{u}\right] +  E\left[\psi (x_{u} , y_{u})\right],
		\end{align*}
		we see that $\sup_u E[\phi_{xy, u}^4] \leq 4^4C $ and thus $E [ \mathcal{K}^4(X, Y)  ]$ can be bounded as following
		\begin{align*}
		E \left[ \mathcal{K}^4(X, Y)  \right] 
		& = \frac{1}{p^2} \sum_{s=1}^p \sum_{t,u,v = s - 3 \kappa} ^{s+3\kappa} E[ \phi_{xy, s} \phi_{xy, t}\phi_{xy, u} \phi_{xy, v}]  = O\left(  \frac{\kappa^3}{p} \right).
		\end{align*}
		and similar results can be shown for $E [ \mathcal{K}^4(X, X')  ]$ and $E [ \mathcal{K}^4(Y, Y')  ]$. Thus, Assumption \ref{ass:4} is satisfied if $\kappa^3/p = o(1)$.
	\end{remark}
Let $\Phi$ denote the cdf of $N(0,1)$. We shall show that $\text{ED}^k (\varGamma_w \mathbf{ Z})$ converges uniformly with respect to $w$ under the HDMSS setting. 
		\begin{thm}\label{lem:per} For $ w =0, 1,2, \cdots, \min \{n,m \}$, fix $\varGamma_w \in \mathbb{S}_w$,
		\label{Lemma: asymHp}
		\begin{itemize}
			\item[(i)]Under Assumptions \ref{ass:1} and \ref{ass:2}, 
			\begin{align*}
			\sup_w \left| \emph{\text{ED}}_n^{k}(\varGamma_{w}\mathbf{Z}) - \mu_{n,w} \right|   = o_p(1).
			\end{align*}
			where $\mu_{n,w}$ is the same as that in Theorem \ref{thm:lowall}.
			\item[(ii)]  Under Assumptions \ref{ass:1}, \ref{ass:2}, \ref{ass:3}, \ref{ass:4} and local alternative $H_{A_l}$,
			$$
			 \sup_w \left| P\left( \sqrt{nmp} \left( \emph{\text{ED}}_n^k( \varGamma_{w} \mathbf{Z}) -  \mu_{n,w}  \right) \leq a \right) - \Phi \left(\frac{a}{\sqrt{nm\sigma^2_{n,w}}} \right) \right| = o(1)
			$$
			where $a$ is a fixed constant and $ \sigma^2_{n,w} $ is the same as in Theorem \ref{thm:lowall}. 
		\end{itemize}
	\end{thm}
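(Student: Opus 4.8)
The plan is to substitute the second-order Taylor expansion
\[
k(Z_i,Z_j) = \varphi(e_{ij}) + \varphi^{(1)}(e_{ij})\,\mathcal{L}(Z_i,Z_j) + \mathcal{R}_2(Z_i,Z_j)
\]
into the weight representation $\text{ED}_n^{k}(\varGamma_w\mathbf{Z}) = \sum_{i<j}\Pi_{w,ij}\,k(Z_i,Z_j)$ of \eqref{eq:permuteWeight}. The constant part $\sum_{i<j}\Pi_{w,ij}\varphi(e_{ij})$ is deterministic and, by the same bookkeeping of the weights that defines $\mu_{n,w}$ in Theorem \ref{thm:lowall}, equals $\mu_{n,w}$ exactly. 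Hence both parts reduce to controlling
\[
\text{ED}_n^{k}(\varGamma_w\mathbf{Z}) - \mu_{n,w} = \underbrace{\sum_{i<j}\Pi_{w,ij}\varphi^{(1)}(e_{ij})\,\mathcal{L}(Z_i,Z_j)}_{=:L_w} + \underbrace{\sum_{i<j}\Pi_{w,ij}\,\mathcal{R}_2(Z_i,Z_j)}_{=:R_w},
\]
the new difficulty (relative to the fixed $(n,m)$ analysis behind Theorem \ref{thm:lowall}) being that every estimate must hold uniformly over the growing index set $w\in\{0,1,\dots,\min\{n,m\}\}$.

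For part (i) I would first dispose of $R_w$: the Lagrange form of the remainder and the continuity of $\varphi^{(2)}$ on $(0,\infty)$ give $|\mathcal{R}_2(Z_i,Z_j)| = O_p(\mathcal{L}^2(Z_i,Z_j))$, so $E[\mathcal{L}^2]\in\{\alpha_x^2,\alpha_y^2,\alpha_{xy}^2\}$ together with the summed weights and Assumption \ref{ass:2} yield $\sup_w|R_w| = o_p(1)$. For $L_w$ the key is a second-moment bound: using the Hoeffding decomposition of $\mathcal{L}$ into its single-observation projections plus the degenerate piece $p^{-1/2}\mathcal{K}$, and the orthogonality $E[\mathcal{K}(Z_i,Z_j)\mathcal{K}(Z_{i'},Z_{j'})]=0$ whenever $\{i,j\}\neq\{i',j'\}$, one obtains a variance bound for $L_w$ in terms of $\alpha_x^2,\alpha_y^2,\alpha_{xy}^2$ with explicit $n,m,w$-dependent prefactors, which Assumption \ref{ass:2} drives to $o(1)$. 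Uniformity then follows from a Chebyshev/maximal bound over the $\min\{n,m\}+1$ admissible values of $w$, the rate supplied by Assumption \ref{ass:2} being strong enough to absorb this (at most linear in $n$) union bound.

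For part (ii) the normalization $\sqrt{nmp}$ forces us to isolate the degenerate second-order term. After the Taylor expansion, Assumption \ref{ass:3} gives $\sqrt{nmp}\,\sup_w|R_w| = o_p(1)$. Within $\sqrt{nmp}\,L_w$, the zero row-sum structure of the weights $\Pi_{w,ij}$ together with $e_x=e_y=e_{xy}$ (the first defining relation of $H_{A_l}$) collapses the single-observation projections of $\mathcal{L}$ into the differences $E[\overline{\psi}(X,Y)\mid X]-E[\overline{\psi}(X,X')\mid X]$ and its $Y$-analogue, which the last two conditions defining $H_{A_l}$ force to be $o(\sqrt{1/(nmp)})$ and hence negligible after the $\sqrt{nmp}$ scaling; the surviving term is $\sqrt{nm}\sum_{i<j}\Pi_{w,ij}\varphi^{(1)}(e_{ij})\mathcal{K}(Z_i,Z_j)$. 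Since $\mathcal{K}$ is doubly centered, $E[\mathcal{K}(Z_i,Z_j)\mid Z_i]=E[\mathcal{K}(Z_i,Z_j)\mid Z_j]=0$; therefore, ordering the independent rows $Z_1,\dots,Z_{n+m}$ and putting $\mathcal{F}_k=\sigma(Z_1,\dots,Z_k)$, the array $\{\sqrt{nm}\,D_k\}$ with $D_k=\sum_{j<k}\Pi_{w,jk}\varphi^{(1)}(e_{jk})\mathcal{K}(Z_j,Z_k)$ is a martingale-difference array summing to the target. I would then invoke the Hall--Heyde martingale CLT, for which the conditional-variance requirement $\sum_k nm\,E[D_k^2\mid\mathcal{F}_{k-1}]\overset{p}{\rightarrow} nm\sigma^2_{n,w}$ and the conditional Lindeberg condition are exactly what the three displays of Assumption \ref{ass:4} (the fourth-moment, the squared-product, and the four-fold product bounds) are built to verify. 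The orthogonality of $\mathcal{K}$ collapses the limiting variance to $nm\sum_{i<j}\Pi_{w,ij}^2[\varphi^{(1)}(e_{ij})]^2 v_{ij}$, and splitting the pairs into the three groups $(X_i,Y_j)$, $(X_{i_1},X_{i_2})$, $(Y_{j_1},Y_{j_2})$ reproduces $\sigma^2_{n,w}$ as in Theorem \ref{thm:lowall}.

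The hard part will be upgrading this single-$w$ CLT to the uniform statement $\sup_w|P(\cdot)-\Phi(\cdot)|=o(1)$: because $\min\{n,m\}\to\infty$ there are infinitely many values of $w$, so a pointwise limit plus union bound no longer suffices. My plan is to make the dependence on $w$ quantitative. Both the conditional variance $\sum_k nm\,E[D_k^2\mid\mathcal{F}_{k-1}]$ and the Lyapunov-type quantity controlling the Lindeberg condition are low-degree polynomials in $w$ whose coefficients are the functionals appearing in Assumptions \ref{ass:1} and \ref{ass:4}; after the $1/(nm)$, $1/n^2$ and $1/m^2$ scalings these coefficients remain bounded for all $0\le w\le\min\{n,m\}$, and the $o(\cdot)$ rates in Assumption \ref{ass:4} are themselves independent of $w$. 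Feeding these $w$-uniform inputs into a martingale Berry--Esseen inequality then bounds $|P(\sqrt{nmp}(\text{ED}_n^{k}(\varGamma_w\mathbf{Z})-\mu_{n,w})\le a)-\Phi(a/\sqrt{nm\sigma^2_{n,w}})|$ by a $w$-free $o(1)$ term, which is the desired uniform conclusion.
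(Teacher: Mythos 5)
Your proposal retraces the paper's own proof in its essential architecture: the same decomposition of $\sqrt{nmp}\,\text{ED}_n^k(\varGamma_w\mathbf{Z})$ into mean, degenerate-$\mathcal{K}$, projection, and remainder pieces; the same use of the zero row sums of the weights $\varPi_{w,ij}$ together with the defining conditions of $H_{A_l}$ to collapse the single-observation projections into the differences $E[\overline{\psi}(X,Y)\mid X]-E[\overline{\psi}(X,X')\mid X]$ (this is the paper's Lemma \ref{lemma:lowRl}); the same martingale construction with filtration $\sigma(Z_1,\dots,Z_l)$ and increments $\sum_{j<l}\varPi_{w,jl}\varphi^{(1)}(e_{jl})\mathcal{K}(Z_j,Z_l)$, analyzed through a martingale Berry--Esseen bound (the paper uses Theorem 1 of \cite{hall1981rates} in Lemma \ref{lem:high3}) whose error involves exactly the three displays of Assumption \ref{ass:4} and is free of the permutation; and the same route to uniformity in $w$, namely a permutation-uniform quantitative bound combined with the sandwich inequality $P(X\le a)\le P(Y\le a+\epsilon)+P(|X-Y|>\epsilon)$ and continuity of $\Phi$.

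There is, however, one genuine gap: your control of the Taylor remainders. You argue that continuity of $\varphi^{(2)}$ on $(0,\infty)$ gives $|\mathcal{R}_2(Z_i,Z_j)|=O_p\bigl(\mathcal{L}^2(Z_i,Z_j)\bigr)$ \emph{per pair}, and then sum against the weights using $E[\mathcal{L}^2]\in\{\alpha_x^2,\alpha_y^2,\alpha_{xy}^2\}$. This per-pair stochastic-order statement does not survive a supremum (or even a sum) over the $O((n+m)^2)$ pairs, and for the paper's central kernels it is not even clear that $E|\mathcal{R}_2|$ is finite without further work: for the $L^2$-norm ($\varphi(x)=\sqrt{x}$) and the Laplacian kernel, $\varphi^{(2)}$ blows up at $0$, so the factor $\int_0^1\!\!\int_0^1 u\,\varphi^{(2)}\bigl(e_{ij}+uv\mathcal{L}(Z_i,Z_j)\bigr)\,dv\,du$ is only bounded on the event that $\mathcal{L}(Z_i,Z_j)$ stays away from $-e_{ij}$ \emph{simultaneously for all pairs}. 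The paper supplies the missing step by a truncation argument (Lemma \ref{lem:event}, reused in Lemma \ref{lem:high2}): the events that some pair has $|\mathcal{L}|\ge e_{ij}/2$ are shown to have probability $o(1)$ by a union bound over the $nm$, $n^2$, $m^2$ pairs --- which is precisely why Assumption \ref{ass:2} carries those combinatorial factors rather than being a plain $\alpha^2=o(1)$ condition --- and on the complement the remainders admit deterministic bounds $|\mathcal{R}_1|\le C|\mathcal{L}|$, $|\mathcal{R}_2|\le C\mathcal{L}^2$, after which Markov's inequality finishes. Your sketch invokes Assumption \ref{ass:2} only through the size of $E[\mathcal{L}^2]$, missing its union-bound role; the same repair is needed where you claim Assumption \ref{ass:3} alone yields $\sqrt{nmp}\,\sup_w|R_w|=o_p(1)$ (the paper's Lemma \ref{lem:high2} needs Assumptions \ref{ass:2} and \ref{ass:3} jointly). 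With that truncation inserted, your plan goes through and coincides with the paper's proof.
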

	Then, the following theorem states the asymptotic distribution of $\text{ED}_n^{k}(\mathbf \Gamma \mathbf{Z})$.
	\begin{thm} \label{lem:rand}
		For $\mathbf \Gamma \sim \text{Uniform}(\mathbb P_{n+m})$, which is independent of the data,
		\begin{itemize}
			\item[(i)] Under Assumptions \ref{ass:1} and \ref{ass:2}, 
			\begin{align*}
			\emph{ED}_n^{k}(\mathbf \Gamma \mathbf{Z}) \overset{p}{\rightarrow}  0.
			\end{align*}
			\item[(ii)] Under Assumptions \ref{ass:1}, \ref{ass:2}, \ref{ass:3}, \ref{ass:4} and local alternative $H_{A_l}$,
			\begin{align*}
			\sqrt{nmp} \left( \begin{array}{l}
			\emph{ED}_n^k(\mathbf \Gamma \mathbf{Z})  \\
			\emph{ED}_n^k(\mathbf \Gamma' \mathbf{Z}) 
			\end{array}  \right) \overset{d}{\rightarrow} N \left(0,  
			\left( \begin{array}{cc}
			\sigma^2 &  0 \\
			0 & \sigma^2
			\end{array} \right) \right) 
			\end{align*}
			where $\mathbf \Gamma'$ is an independent copy of $\mathbf \Gamma $ and $\sigma^2$ is the asymptotic variance defined as
			\begin{align*}
			\sigma^2 := 4 v_{xy} [\varphi^{(1)}(e_{xy})]^2 + 2 \rho v_x[ \varphi^{(1)}(e_{x})]^2 + \frac{2}{\rho} v_y[ \varphi^{(1)}(e_{y})]^2.
			\end{align*}
		\end{itemize}	
	\end{thm}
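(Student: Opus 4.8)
The plan is to deduce both parts from the uniform-in-$w$ results of Theorem~\ref{lem:per} by conditioning on the number of ``swapped'' samples. Recall that for $\mathbf{\Gamma}\sim\text{Uniform}(\mathbb{P}_{n+m})$ we have $W:=N(\mathbf{\Gamma})\sim\text{Hypergeometric}(n+m,m,n)$, and that conditional on $W=w$ the rows of $\mathbf{\Gamma}\mathbf{Z}$ are exchangeable within $\mathbb{S}_w$, so $\text{ED}_n^k(\mathbf{\Gamma}\mathbf{Z})$ inherits the limit laws established for a fixed $\varGamma_w\in\mathbb{S}_w$. For part (i) I would combine the uniform estimate $\sup_w|\text{ED}_n^k(\varGamma_w\mathbf{Z})-\mu_{n,w}|=o_p(1)$ from Theorem~\ref{lem:per}(i) with the claim $\mu_{n,W}\overset{p}{\rightarrow}0$. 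Writing $\mu_{n,w}=(2\varphi(e_{xy})-\varphi(e_x)-\varphi(e_y))P_n(w)$, the key observation is that under the HDMSS scaling the quadratic $P_n$ has a \emph{near double root} at the hypergeometric mean $E[W]=nm/(n+m)$: a direct expansion gives $P_n(E[W])=o(1)$ and $P_n'(E[W])=O(1/n)$ with leading coefficient of order $1/n^2$, so $P_n(w)\asymp b_n(w-E[W])^2$ with $b_n=O(1/n^2)$. Since $\mathrm{Var}(W)=O(n)$, Chebyshev gives $(W-E[W])^2=O_p(n)$ and hence $\mu_{n,W}=O_p(1/n)\overset{p}{\rightarrow}0$; together with the uniform bound this yields $\text{ED}_n^k(\mathbf{\Gamma}\mathbf{Z})\overset{p}{\rightarrow}0$.

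For part (ii) I would first note that $H_{A_l}$ forces $e_{xy}=e_x=e_y$, so $\mu_{n,w}\equiv0$ and the centering disappears. By the Cram\'er--Wold device the joint statement reduces to proving, for arbitrary $t_1,t_2$, that $t_1\sqrt{nmp}\,\text{ED}_n^k(\mathbf{\Gamma}\mathbf{Z})+t_2\sqrt{nmp}\,\text{ED}_n^k(\mathbf{\Gamma}'\mathbf{Z})\overset{d}{\rightarrow}N(0,(t_1^2+t_2^2)\sigma^2)$. The crucial simplification is that, by \eqref{eq:permuteWeight}, this linear combination is itself an interpoint-distance statistic $\sum_{i<j}\widetilde\Pi_{ij}k(Z_i,Z_j)$ with combined weights $\widetilde\Pi_{ij}=t_1\Pi_{\mathbf{\Gamma},ij}+t_2\Pi_{\mathbf{\Gamma}',ij}$. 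Conditional on $(\mathbf{\Gamma},\mathbf{\Gamma}')$, I would run the same Taylor expansion plus martingale CLT argument underlying Theorem~\ref{lem:per}(ii) (whose Lindeberg and conditional-variance requirements are furnished by Assumptions~\ref{ass:2}, \ref{ass:3} and \ref{ass:4}), obtaining conditional asymptotic normality with conditional variance $s_n^2=t_1^2\,nm\sigma^2_{n,W}+t_2^2\,nm\sigma^2_{n,W'}+2t_1t_2\,C_n$, where $C_n:=nm\sum_{i<j}\Pi_{\mathbf{\Gamma},ij}\Pi_{\mathbf{\Gamma}',ij}\,g_{ij}$ with $g_{ij}:=[\varphi^{(1)}(e_{ij})]^2 v_{ij}$ and $v_{ij}\in\{v_x,v_y,v_{xy}\}$ dictated by the original type of the pair.

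It then remains to show $s_n^2\overset{p}{\rightarrow}(t_1^2+t_2^2)\sigma^2$. For the diagonal terms, the same double-root concentration of $W$ and $W'$ around $nm/(n+m)$ reduces $nm\sigma^2_{n,W}$ to $nm\sigma^2_{n,E[W]}$, and a direct (if lengthy) evaluation of the three coefficient polynomials at $w=nm/(n+m)$ yields the limits $4$, $2\rho$ and $2/\rho$ for the $v_{xy},v_x,v_y$ blocks, i.e.\ $nm\sigma^2_{n,W}\overset{p}{\rightarrow}\sigma^2$. The genuinely new step—and the one I expect to be the main obstacle—is the cross term, where I must show $C_n\overset{p}{\rightarrow}0$; this is precisely the asymptotic independence of the two permuted statistics. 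Because $\mathbf{\Gamma}\perp\mathbf{\Gamma}'$ and a short computation gives $E[\Pi_{\mathbf{\Gamma},ij}]=0$ for every fixed pair, we have $E[C_n]=0$, and the product structure together with independence gives $\mathrm{Var}(C_n)=(nm)^2\sum_{i<j}\sum_{i'<j'}g_{ij}g_{i'j'}\,(E[\Pi_{\mathbf{\Gamma},ij}\Pi_{\mathbf{\Gamma},i'j'}])^2$. The work lies in bounding the weight correlations $E[\Pi_{\mathbf{\Gamma},ij}\Pi_{\mathbf{\Gamma},i'j'}]$ under a single random permutation according to whether the index pairs share $0$, $1$ or $2$ elements; each regime contributes $O(1/n)$ or smaller, so $\mathrm{Var}(C_n)\to0$ and $C_n\overset{p}{\rightarrow}0$.

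Finally, since the conditional limit law is the \emph{deterministic} $N(0,(t_1^2+t_2^2)\sigma^2)$ once $s_n^2$ converges in probability, the unconditional limit coincides with it, and Cram\'er--Wold delivers the claimed bivariate Gaussian with diagonal covariance, establishing asymptotic independence of $\text{ED}_n^k(\mathbf{\Gamma}\mathbf{Z})$ and $\text{ED}_n^k(\mathbf{\Gamma}'\mathbf{Z})$. I would expect the combinatorial bound on $E[\Pi_{\mathbf{\Gamma},ij}\Pi_{\mathbf{\Gamma},i'j'}]$ driving $C_n\overset{p}{\rightarrow}0$ to be the most delicate part, with the diagonal-coefficient evaluations being routine but bookkeeping-heavy.
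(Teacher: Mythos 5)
Your proposal is correct and follows essentially the same route as the paper: part (i) via the uniform remainder bound plus hypergeometric concentration of $W$ giving $\mu_{n,W}=o_p(1)$, and part (ii) via Cram\'er--Wold, a conditional martingale CLT for the combined weights $t_1\Pi_{\mathbf{\Gamma},ij}+t_2\Pi_{\mathbf{\Gamma}',ij}$ (the paper's Lemma \ref{lem:high3}), the limit $nm\sigma^2_{n,W}\overset{p}{\rightarrow}\sigma^2$, and exactly the paper's second-moment bound on the cross term, which uses $E[\Pi_{\mathbf{\Gamma},ij}]=0$, independence of the two permutations, and case-by-case bounds on $E[\Pi_{\mathbf{\Gamma},ij}\Pi_{\mathbf{\Gamma},i'j'}]$ to get $\mathrm{Var}(C_n)=O(1/n)$. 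One minor bookkeeping slip: in part (i) the linear term $P_n'(E[W])(W-E[W])=O(1/n)\cdot O_p(\sqrt{n})=O_p(n^{-1/2})$ actually dominates your quadratic term, so $\mu_{n,W}=O_p(n^{-1/2})$ rather than $O_p(1/n)$ and the ``near double root'' heuristic is not literally accurate, but the conclusion $\mu_{n,W}\overset{p}{\rightarrow}0$ is unaffected.
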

We need the limiting distribution of  $( \text{ED}_n^k(\mathbf \Gamma \mathbf{Z}), \text{ED}_n^k(\mathbf \Gamma' \mathbf{Z}) )$ to show that the variance of randomization distribution go to 0, from which it follows that the randomization distribution converges in probability to the limit of its mean. Furthermore, we can show that the critical values are concentrating on some constants.
		\begin{corollary} \label{cor:main} Let $\bm{\Gamma}_1, \cdots, \bm{\Gamma}_S $ be i.i.d and  uniformly sampled from $\mathbb{P}_{n+m}$.
		\begin{itemize}
			\item[(i)] Under Assumptions \ref{ass:1} and \ref{ass:2}, as $n \wedge m \wedge p \wedge S \rightarrow  \infty$,
			\begin{align*}
			\widehat{R}(t) \overset{p}{\rightarrow}  \mathbb{I}_{\{ t\geq 0\} } \text{ and }
			\widetilde{R}(t)  \overset{p}{\rightarrow}  \mathbb{I}_{\{ t\geq 0\}} . 
			\end{align*}
			Consequently, we have $ Q_{\widehat{R},1-\alpha} \overset{p}{\rightarrow}  0 $ and $ Q_{\widetilde{R},1-\alpha} \overset{p}{\rightarrow}  0$.
			\item[(ii)] Under Assumptions \ref{ass:1}, \ref{ass:2}, \ref{ass:3}, \ref{ass:4} and local alternative $H_{A_l}$, as $n \wedge m \wedge p \wedge S \rightarrow  \infty$,
			\begin{align*}
			\frac{1}{(n+m)!} \sum\limits_{i=1}^{(n+m)!} \mathbb{I}_{ \left\lbrace \sqrt{nmp} \emph{ED}_n^k(\Gamma_{i} \mathbf{Z})  \leq t \right\rbrace } & \overset{p}{\rightarrow}  \Phi (t/\sigma), \\
			\frac{1}{S} \sum\limits_{i=1}^{S} \mathbb{I}_{ \left\lbrace \sqrt{nmp} \emph{ED}_n^k(\mathbf \Gamma_{i} \mathbf{Z})  \leq t \right\rbrace } & \overset{p}{\rightarrow}  \Phi (t/\sigma) 
			\end{align*} 
			Consequently,  we have $ \sqrt{nmp} Q_{\widehat{R}, 1 - \alpha} \overset{p}{\rightarrow} \sigma Q_{\Phi, 1-\alpha} $ and $ \sqrt{nmp} Q_{\widetilde{R}, 1 - \alpha} \overset{p}{\rightarrow} \sigma Q_{\Phi, 1- \alpha} $, where $\sigma^2$ is defined in Theorem \ref{lem:rand}.
		\end{itemize}
	\end{corollary}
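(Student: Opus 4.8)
The plan is to show that each randomization CDF, viewed as a random function of the data $\mathbf{Z}$, converges in probability to the stated limiting CDF at every fixed argument $t$, and then to transfer this pointwise convergence to convergence of the $(1-\alpha)$-quantiles. The starting observation is that $\widehat{R}(t)$ is exactly the conditional probability $P\big(\text{ED}_n^k(\mathbf{\Gamma}\mathbf{Z}) \leq t \mid \mathbf{Z}\big)$, where $\mathbf{\Gamma}\sim\text{Uniform}(\mathbb{P}_{n+m})$ is drawn independently of the data. Hence it suffices to control the first two moments of $\widehat{R}(t)$ over the data: if $\E[\widehat{R}(t)]$ converges to the target value and $\var(\widehat{R}(t))\rightarrow 0$, then $\widehat{R}(t)$ converges in probability to that value.

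For part (i), I would compute $\E[\widehat{R}(t)] = P\big(\text{ED}_n^k(\mathbf{\Gamma}\mathbf{Z}) \leq t\big)$, where the probability is now unconditional, and invoke Theorem~\ref{lem:rand}(i) to obtain $\text{ED}_n^k(\mathbf{\Gamma}\mathbf{Z}) \overset{p}{\rightarrow} 0$; this gives $\E[\widehat{R}(t)]\rightarrow\mathbb{I}_{\{t\geq 0\}}$ for every $t\neq 0$. For the second moment I would write $\E[\widehat{R}(t)^2] = P\big(\text{ED}_n^k(\mathbf{\Gamma}\mathbf{Z}) \leq t,\ \text{ED}_n^k(\mathbf{\Gamma}'\mathbf{Z}) \leq t\big)$ with $\mathbf{\Gamma}'$ an independent copy of $\mathbf{\Gamma}$; since both statistics converge in probability to $0$, the joint event has probability tending to $\mathbb{I}_{\{t\geq 0\}}$, so $\var(\widehat{R}(t))\rightarrow 0$ and $\widehat{R}(t)\overset{p}{\rightarrow}\mathbb{I}_{\{t\geq 0\}}$. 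To pass to $\widetilde{R}(t)$, note that conditionally on $\mathbf{Z}$ the summands $\mathbb{I}_{\{\text{ED}_n^k(\mathbf{\Gamma}_i\mathbf{Z})\leq t\}}$ are i.i.d.\ Bernoulli with mean $\widehat{R}(t)$, so $\E[(\widetilde{R}(t)-\widehat{R}(t))^2\mid\mathbf{Z}] = O(1/S)$ (the leading $S^{-1}\mathbb{I}_{\{\text{ED}_n^k(\mathbf{Z})\leq t\}}$ term is also $O(1/S)$); taking expectations and using $S\rightarrow\infty$ yields $\widetilde{R}(t)-\widehat{R}(t)\overset{p}{\rightarrow}0$, hence $\widetilde{R}(t)\overset{p}{\rightarrow}\mathbb{I}_{\{t\geq 0\}}$ as well.

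Part (ii) follows the same moment scheme applied to the rescaled CDF $t\mapsto \frac{1}{(n+m)!}\sum_i\mathbb{I}_{\{\sqrt{nmp}\,\text{ED}_n^k(\Gamma_i\mathbf{Z})\leq t\}}$, which equals $P\big(\sqrt{nmp}\,\text{ED}_n^k(\mathbf{\Gamma}\mathbf{Z})\leq t\mid\mathbf{Z}\big)$. The marginal CLT in Theorem~\ref{lem:rand}(ii) gives mean converging to $\Phi(t/\sigma)$. The decisive point is the second moment: by the joint convergence in Theorem~\ref{lem:rand}(ii), the pair $\big(\sqrt{nmp}\,\text{ED}_n^k(\mathbf{\Gamma}\mathbf{Z}),\sqrt{nmp}\,\text{ED}_n^k(\mathbf{\Gamma}'\mathbf{Z})\big)$ tends to a bivariate normal with \emph{diagonal} covariance $\sigma^2 I$, so the two coordinates are asymptotically independent and the joint tail probability factorizes into $\Phi(t/\sigma)^2$. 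This exactly matches the square of the mean, so the variance vanishes and the rescaled CDF converges in probability to $\Phi(t/\sigma)$; the same conditional-LLN argument as above transfers this to the $S$-sample version.

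Finally I would convert CDF convergence into quantile convergence. In part (ii) the limit $\Phi(\cdot/\sigma)$ is continuous and strictly increasing, so pointwise-in-probability convergence of the CDF yields $\sqrt{nmp}\,Q_{\widehat{R},1-\alpha}\overset{p}{\rightarrow}\sigma Q_{\Phi,1-\alpha}$ by the usual inverse-function argument (and likewise for $\widetilde{R}$). In part (i) the limit is a point mass at $0$, whose CDF jumps at $0$; here I would use a sandwich argument---$\widehat{R}(-\epsilon)\overset{p}{\rightarrow}0<1-\alpha$ and $\widehat{R}(\epsilon)\overset{p}{\rightarrow}1>1-\alpha$ force $|Q_{\widehat{R},1-\alpha}|\leq\epsilon$ with probability tending to one for every $\epsilon>0$---to conclude $Q_{\widehat{R},1-\alpha}\overset{p}{\rightarrow}0$. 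The main obstacle is establishing the vanishing of the variance of the randomization distribution, and this rests entirely on the asymptotic independence of two independently permuted statistics encoded in the diagonal limiting covariance of Theorem~\ref{lem:rand}(ii); a secondary bookkeeping issue is organizing the joint limit $n\wedge m\wedge p\wedge S\rightarrow\infty$ so that the $S\rightarrow\infty$ (conditional LLN) and $n\wedge m\wedge p\rightarrow\infty$ (CLT) steps combine cleanly, which can be handled by a triangle inequality at fixed continuity points of the limit.
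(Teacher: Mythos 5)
Your proposal is correct and is in substance the same argument as the paper's: the paper proves this corollary in one line by citing Theorem 15.2.3 of Lehmann and Romano (2006) (Hoeffding's theorem on randomization distributions), whose hypothesis is exactly the joint bivariate normal limit with diagonal covariance for $\left(\sqrt{nmp}\,\text{ED}_n^k(\mathbf{\Gamma}\mathbf{Z}),\ \sqrt{nmp}\,\text{ED}_n^k(\mathbf{\Gamma}'\mathbf{Z})\right)$ supplied by Theorem \ref{lem:rand}, and whose proof is precisely your first/second-moment computation of $\widehat{R}(t)$ via conditioning on $\mathbf{Z}$ plus Chebyshev. You have effectively re-derived the cited black box rather than invoking it --- including the conditional LLN step for $\widetilde{R}$, the asymptotic-independence argument that makes the variance vanish, and the quantile sandwich at the point-mass limit (correctly restricted to continuity points $t\neq 0$) --- and all of these steps are sound.
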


	The power behavior of $\text{ED}_n^k(\mathbf{Z})$ w.r.t permutation test under the HDMSS is stated in the following theorem.
	\begin{thm}
		\label{lemma:power2}
		Assume that $ 2 \varphi(e_{xy}) \geq  \varphi(e_{x})+  \varphi(e_{y}) $. For any $c \in \{ Q_{\widehat{R},1-\alpha}, Q_{\widetilde{R},1-\alpha} \}$, the following holds.
		\begin{itemize}
			\item[1,] \emph{[\textbf{Consistency}]} Under Assumptions \ref{ass:1}, \ref{ass:2}. Then,  we have
			$$
			\lim_{p \rightarrow \infty} P_{H_{A_c}} \left(\emph{\text{ED}}_n^k(\mathbf{Z}) > c \right) = 1,
			$$
			which means that the asymptotic power of $\emph{\text{ED}}^k$ based permutation test is 1 as $p \wedge n \wedge m \rightarrow \infty$.
			\item[2,] \emph{[\textbf{Trivial Power}]} Under Assumptions \ref{ass:1}, \ref{ass:2}, \ref{ass:3}, \ref{ass:4}. Then, we have 
			$$
			\lim_{p \rightarrow \infty}P_{H_{A_l}}\left(\emph{\text{ED}}_n^k(\mathbf{Z}) > c\right) \leq  \alpha,
			$$
			Thus, we have the asymptotic power of $\emph{\text{ED}}^k$ based permutation test is no more than the level $\alpha$ when $p\wedge n \wedge m \rightarrow \infty$.
		\end{itemize}
	\end{thm}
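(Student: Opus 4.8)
The plan is to reduce both claims to Slutsky-type comparisons between the observed statistic and the permutation critical value, drawing on the three HDMSS results already in hand: the uniform-in-$w$ convergence of Theorem~\ref{lem:per}, the randomization CLT of Theorem~\ref{lem:rand} (which produces the limiting variance $\sigma^2$), and the critical-value concentration of Corollary~\ref{cor:main}. The one structural fact I will use repeatedly is that the observed statistic is exactly the $w=0$ slice, $\text{ED}_n^k(\mathbf{Z})=\text{ED}_n^k(\varGamma_0\mathbf{Z})$, because the identity permutation keeps all $X$-samples in the first $n$ rows. For the consistency part, I set $\delta:=2\varphi(e_{xy})-\varphi(e_x)-\varphi(e_y)$, which is strictly positive on $H_{A_c}$ under the standing hypothesis $2\varphi(e_{xy})\ge\varphi(e_x)+\varphi(e_y)$. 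Theorem~\ref{lem:per}(i) at $w=0$ then gives $\text{ED}_n^k(\mathbf{Z})\overset{p}{\to}\mu_{n,0}=\delta$, while Corollary~\ref{cor:main}(i) gives $c\overset{p}{\to}0$ for both $c\in\{Q_{\widehat R,1-\alpha},Q_{\widetilde R,1-\alpha}\}$; hence $\text{ED}_n^k(\mathbf{Z})-c\overset{p}{\to}\delta>0$ and the rejection probability tends to $1$.

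For the trivial-power part, observe that on $H_{A_l}$ the equalities $e_{xy}=e_x=e_y$ force $\mu_{n,w}\equiv0$, so the statistic is already centered. Applying Theorem~\ref{lem:per}(ii) at $w=0$ yields $\sqrt{nmp}\,\text{ED}_n^k(\mathbf{Z})\overset{d}{\to}N(0,\tau^2)$, where $\tau^2=\lim_{n,m}nm\,\sigma^2_{n,0}=4v_{xy}[\varphi^{(1)}(e_{xy})]^2+\frac{2}{\rho}v_x[\varphi^{(1)}(e_x)]^2+2\rho\,v_y[\varphi^{(1)}(e_y)]^2$, the coefficients coming from $\frac{nm}{n(n-1)}\to\frac1\rho$ and $\frac{nm}{m(m-1)}\to\rho$. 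Corollary~\ref{cor:main}(ii) supplies $\sqrt{nmp}\,c\overset{p}{\to}\sigma Q_{\Phi,1-\alpha}$, and Slutsky's theorem together with the continuity of $\Phi$ then gives the power limit $P\!\left(N(0,\tau^2)>\sigma Q_{\Phi,1-\alpha}\right)=1-\Phi\!\left(\frac{\sigma}{\tau}Q_{\Phi,1-\alpha}\right)$. Since $1-\Phi(Q_{\Phi,1-\alpha})=\alpha$ and $Q_{\Phi,1-\alpha}>0$ for $\alpha<1/2$, this limit is at most $\alpha$ if and only if $\sigma\ge\tau$.

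The main obstacle is thus the variance comparison $\sigma^2\ge\tau^2$: the standard deviation $\sigma$ governing the critical value arises from the permutation distribution, whose mass concentrates near $w^{\ast}=nm/(n+m)$, whereas $\tau$ is the standard deviation at the atypical slice $w=0$. Writing $Q:=[\varphi^{(1)}(e_x)]^2v_x$ and $R:=[\varphi^{(1)}(e_y)]^2v_y$, the cross terms coincide and a direct subtraction gives $\sigma^2-\tau^2=2(\rho-\tfrac1\rho)(Q-R)$, so the whole theorem hinges on controlling the sign of $(\rho-1/\rho)(Q-R)$. I expect this bookkeeping---tracking the quadratic-in-$w$ coefficients of $\sigma^2_{n,w}$ through the hypergeometric averaging that defines $\sigma^2$---rather than any fresh probabilistic argument, to be the genuine work; it collapses to the equality $\sigma=\tau$ (hence power $\to\alpha$) exactly on the stronger set $H_{A_t}$, where $v_{xy}=v_x=v_y$ and $e_x=e_y$ force $Q=R$.
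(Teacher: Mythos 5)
Your part 1 is correct and is precisely the paper's argument: Theorem \ref{lem:per}(i) at $w=0$ gives $\text{ED}_n^k(\mathbf{Z})\overset{p}{\rightarrow}\mu_{n,0}=2\varphi(e_{xy})-\varphi(e_x)-\varphi(e_y)>0$ on $H_{A_c}$, Corollary \ref{cor:main}(i) gives $c\overset{p}{\rightarrow}0$, and Slutsky finishes.

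For part 2 your skeleton is also the paper's (Theorem \ref{lem:per}(ii) at $w=0$ for the statistic, Corollary \ref{cor:main}(ii) for the critical value), but the variance comparison you single out as the genuine work is exactly the step the paper suppresses, and your computation shows it cannot be closed under $H_{A_l}$ alone. The paper's proof jumps to $P(N(0,\sigma^2)>\sigma Q_{\Phi,1-\alpha})=\alpha$, i.e., it takes the asymptotic variance of the observed ($w=0$) statistic to be the permutation variance $\sigma^2$; the discussion following the theorem even asserts that $\sigma^2$ ``is also the limit of $nm\sigma^2_{n,0}$.'' Your algebra --- which I have verified, including the hypergeometric concentration at $w^{\ast}\approx nm/(n+m)$ --- refutes that identification:
\begin{align*}
\lim nm\sigma^2_{n,0}=\tau^2 = 4 v_{xy}[\varphi^{(1)}(e_{xy})]^2+\tfrac{2}{\rho}\,v_x[\varphi^{(1)}(e_x)]^2+2\rho\, v_y[\varphi^{(1)}(e_y)]^2,
\end{align*}
so the $v_x,v_y$ coefficients are swapped relative to $\sigma^2$, and $\sigma^2-\tau^2=2(\rho-1/\rho)\bigl(v_x[\varphi^{(1)}(e_x)]^2-v_y[\varphi^{(1)}(e_y)]^2\bigr)$ as you derived. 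Hence the correct power limit is $1-\Phi\bigl((\sigma/\tau)Q_{\Phi,1-\alpha}\bigr)$, which (for $\alpha<1/2$) is $\leq\alpha$ iff $(\rho-1/\rho)(v_x-v_y)\geq 0$, using $e_x=e_y$ on $H_{A_l}$. That sign condition is not implied by $H_{A_l}$ together with Assumptions \ref{ass:1}, \ref{ass:2}, \ref{ass:3}, \ref{ass:4}: for the $L^2$-norm, take $X$ with i.i.d. standardized components and $Y$ a standardized $1$-dependent stationary sequence; then all $H_{A_l}$ conditions hold exactly (identical marginal means and variances), the assumptions hold once $nm=o(p)$, yet $v_x<v_y$, so choosing $n>m$ forces $\sigma<\tau$ and a limiting power strictly above $\alpha$. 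So the obstacle you flagged is not a gap in your argument but an error in the paper's own proof (and, as stated, in the theorem): part 2 becomes provable --- by your argument, with $\sigma=\tau$ and limit exactly $\alpha$ --- only under an additional hypothesis such as $v_x[\varphi^{(1)}(e_x)]^2=v_y[\varphi^{(1)}(e_y)]^2$ (in particular on $H_{A_t}$) or $n/m\rightarrow 1$. This restriction also restores consistency with the HDLSS counterpart, Theorem \ref{lemma:power1}, which claims trivial power only on $H_{A_t}$, not on $H_{A_l}$.
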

	Comparing with Theorem \ref{lemma:power1}, the $ \text{ED}_n^k(\mathbf{Z}) $ based permutation test have trivial power under $H_{A_l}$ and the HDMSS setting. This is due to the interesting facts that $nm\sigma^2_{n,W}$ converges in probability to $\sigma^2$, which is also the limit of $nm\sigma^2_{n,0}$ as $n \rightarrow \infty$ and 
	\begin{align*}
	\cov \left( \text{ED}_n^k(\mathbf \Gamma \mathbf{Z}),
	\text{ED}_n^k(\mathbf \Gamma' \mathbf{Z})  \right) \rightarrow 0 \text{ as } n \rightarrow \infty,
	\end{align*}
	which ensures that the randomization distribution converges in probability to its mean limit.
	
		\section{Numerical Studies}
	\label{sec:twosample:sim}
	In this section, we consider several examples to demonstrate the finite sample performance of $\text{ED}^k$ based permutation test for different distance metrics. In our numerical comparison, we include the tests of \citet{li2018asymptotic} (denoted as JL) and \citet{biswas2014nonparametric} (denoted as BG) as these two were shown to have higher power over others in \citet{li2018asymptotic}. The critical values of JL test are determined by its asymptotic distribution, whereas  BG test is also implemented as a permutation test.
	\subsection{Performance on simulated data}
	In all our simulations, we set $\alpha = 0.05$ and perform 1000 Monte Carlo replications with 300 permutations for each test.  The first example is adopted from the simulation setting of \cite{li2018asymptotic} to study the size accuracy.
	\begin{example}
		\label{eg:twosample0}
		Generate samples as
		\begin{align*}
		& X = (V^{1/2} R V^{1/2})^{1/2} Z_1, \\
		& Y = (V^{1/2} R V^{1/2})^{1/2} Z_2,
		\end{align*}
		where $R = (r_{ij})_{i,j=1}^p$, $ r_{ij} = \rho^{|i-j|}$ and $\rho = 0.5 \text{ or }0.8$; $V$ is a diagonal matrix with $V_{ii}^{1/2}=1$ or uniformly drawn from (1,5). $Z_1$, $Z_2$ are i.i.d copies of $Z$ with
		\begin{align*}
		Z =(\underbrace{z_{1}, z_2, \cdots ,z_{p}}_{ \overset{iid}{\sim} N( 0, 1)})  \text{ or } Z =(\underbrace{z_{1}, z_2, \cdots ,z_{p}}_{ \overset{iid}{\sim} \text{Exponential}(1)}) - \mathbf{1}_p.
		\end{align*}
	\end{example}
	In Example \ref{eg:twosample0}, $X$ and $Y$ follow the same distribution and we consider cases that $n=m=50$ or $n=70, m=30$. From Table \ref{tab2}, we can see that all the tests have quite accurate size.
	\begin{table}[!htbp] \centering
		\caption{Size comparison from Example \ref{eg:twosample0} for $p=500$}
		\label{tab2}
		\scalebox{0.8}{
			\begin{tabular}{@{\extracolsep{5pt}} ccccccccccc}
				\\[-1.8ex]\hline \hline \\[-1.8ex]
				&$\rho$ & $V_{ii}^{1/2}$ & $n$ & $m$ & $ \text{ED}^{L^2\text{-norm}} $ & $ \text{ED}^{\text{Gaussian}} $ &$ \text{ED}^{\text{Laplacian}} $ & $ \text{ED}^{L^1\text{-norm}} $ & BG & JL   \\
				\hline \\[-1.8ex]
				\multirow{8}{*}{\rotatebox{90}{Normal}}& 0.5&1 & 50 & 50 & 0.06 &0.06& 0.058& 0.059& 0.053& 0.053  \\
				&0.5&1 & 70 & 30 & 0.07& 0.07& 0.068& 0.073& 0.047& 0.057 \\
				&0.5&Un(1,5) & 50 & 50 & 0.052& 0.052& 0.05& 0.051& 0.056& 0.057 \\
				&0.5&Un(1,5) & 70 & 30 & 0.059& 0.059& 0.061& 0.05& 0.049& 0.045 \\
				&0.8&1 & 50 & 50 & 0.053& 0.053& 0.052& 0.059& 0.054& 0.055 \\
				&0.8&1 & 70 & 30 & 0.045& 0.046& 0.046& 0.05& 0.052& 0.055\\
				&0.8&Un(1,5) & 50 & 50 & 0.045& 0.045& 0.049& 0.048& 0.054& 0.054 \\
				&0.8&Un(1,5) & 70 & 30 & 0.05& 0.05& 0.049& 0.046& 0.051& 0.051 \\
				\hline \\[-1.8ex] 	
				\multirow{8}{*}{\rotatebox{90}{Exponential}}& 0.5&1 & 50 & 50 & 0.06& 0.06& 0.058& 0.059& 0.053& 0.053 \\
				&0.5&1 & 70 & 30 & 0.063& 0.063& 0.063& 0.058& 0.048& 0.053\\
				&0.5&Un(1,5) & 50 & 50 & 0.057 & 0.057 & 0.058 &0.055& 0.049& 0.06\\
				&0.5&Un(1,5) & 70 & 30 &0.056& 0.056& 0.06& 0.058& 0.059& 0.058 \\
				&0.8&1 & 50 & 50 & 0.054& 0.054& 0.051& 0.047& 0.065& 0.062  \\
				&0.8&1 & 70 & 30 & 0.061& 0.061& 0.062& 0.065& 0.057& 0.06 \\
				&0.8&Un(1,5) & 50 & 50 &0.051& 0.05& 0.052& 0.046& 0.045& 0.057 \\
				&0.8&Un(1,5) & 70 & 30 &0.062& 0.062& 0.062& 0.062& 0.06& 0.064 \\
				\hline \\[-1.8ex] 	
		\end{tabular} }
	\end{table}
	To compare the power, we first use an example from \cite{li2018asymptotic}, which include the situation when $X$ and $Y$ only differ in their means or only differ in their covariance matrices or differ in both, where $\beta \in [0,1]$ is the percentage of the $p$ components that differ in their distributions.
	
		\begin{figure}[ht]
		\centering
		\begin{tikzpicture}[scale=0.5]
		\begin{axis}[
		name = tx1,
		title= {Example \ref{eg:JL} (i)} ,
		xlabel={$ \beta $},
		ylabel={ Power },
		ymin=0, ymax=1,
		grid = major,
		]
		\addlegendentry{$\text{ED}^{L^2\text{-norm}}$}
		\addplot [mark=o ,loosely dotted, every mark/.append style={solid}] table[x index=0, y index=1, col sep=comma] {1.dat};
		
		\addlegendentry{$\text{ED}^{\text{Gaussian}}$}
		\addplot [mark=square, dashed, every mark/.append style={solid}] table[x index=0, y index=2, col sep=comma] {1.dat};
		
		\addlegendentry{$\text{ED}^{\text{Laplacian}}$}
		\addplot [mark=triangle, loosely dashed, every mark/.append style={solid}] table[x index=0, y index=3, col sep=comma] {1.dat};
		
		\addlegendentry{$\text{ED}^{L^{1}}$}
		\addplot  table[x index=0, y index=4, col sep=comma] {1.dat};
		
		\addlegendentry{$\text{BG}$}
		\addplot [mark=diamond, densely dotted, every mark/.append style={solid}] table[x index=0, y index=5, col sep=comma] {1.dat};
		
		\addlegendentry{$\text{JL}$}
		\addplot [mark=x, dotted, every mark/.append style={solid}] table[x index=0, y index=6, col sep=comma] {1.dat};
		
		\legend{}
		\end{axis}

		\begin{axis}[
		name = tx2,
		at={(tx1.south east)},
		xshift=1cm,
		title= {Example \ref{eg:JL} (ii)} ,
		xlabel={$\beta $},
		ymin=0, ymax=1,
		grid = major,
		legend pos=outer north east,
		]
		\addlegendentry{$\text{ED}^{L^2\text{-norm}}$}
		\addplot [mark=o,loosely dotted, every mark/.append style={solid}] table[x index=0, y index=1, col sep=comma] {2.dat};
		
		\addlegendentry{$\text{ED}^{\text{Gaussian}}$}
		\addplot [mark=square, dashed, every mark/.append style={solid}] table[x index=0, y index=2, col sep=comma] {2.dat};
		
		\addlegendentry{$\text{ED}^{\text{Laplacian}}$}
		\addplot [mark=triangle, loosely dashed, every mark/.append style={solid}] table[x index=0, y index=3, col sep=comma] {2.dat};
		
		\addlegendentry{$\text{ED}^{L^{1}\text{-norm}}$}
		\addplot  table[x index=0, y index=4, col sep=comma] {2.dat};
		
		\addlegendentry{$\text{BG}$}
		\addplot [mark=diamond, densely dotted, every mark/.append style={solid}] table[x index=0, y index=5, col sep=comma] {2.dat};
		
		\addlegendentry{$\text{JL}$}
		\addplot [mark=x, dotted, every mark/.append style={solid}] table[x index=0, y index=6, col sep=comma] {2.dat};
		\legend{}
		\end{axis}
		
		\begin{axis}[
		at={(tx2.south east)},
		xshift=1cm,
		title= {Example \ref{eg:JL} (iii)} ,
		xlabel={$\beta $},
		ymin=0, ymax=1,
		grid = major,
		legend style={at={(-0.6,-0.2)},anchor=north, legend columns=-1}
		]
		\addlegendentry{$\text{ED}^{L^2\text{-norm}}$}
		\addplot [mark=o,loosely dotted, every mark/.append style={solid}] table[x index=0, y index=1, col sep=comma] {3.dat};
		
		\addlegendentry{$\text{ED}^{\text{Gaussian}}$}
		\addplot [mark=square, dashed, every mark/.append style={solid}] table[x index=0, y index=2, col sep=comma] {3.dat};
		
		\addlegendentry{$\text{ED}^{\text{Laplacian}}$}
		\addplot [mark=triangle, loosely dashed, every mark/.append style={solid}] table[x index=0, y index=3, col sep=comma] {3.dat};
		
		\addlegendentry{$\text{ED}^{L^{1}\text{-norm}}$}
		\addplot  table[x index=0, y index=4, col sep=comma] {3.dat};
		
		\addlegendentry{$\text{BG}$}
		\addplot [mark=diamond, densely dotted, every mark/.append style={solid}] table[x index=0, y index=5, col sep=comma] {3.dat};
		
		\addlegendentry{$\text{JL}$}
		\addplot [mark=x, dotted, every mark/.append style={solid}] table[x index=0, y index=6, col sep=comma] {3.dat};
		\legend{}
		\end{axis}
		
		\begin{axis}[
		name = ax1,
		at={(tx1.below south west)},
		yshift=-6.7cm,
		title= {Example \ref{eg:JL} (i)} ,
		xlabel={$ \beta $},
		ylabel={ Power },
		ymin=0, ymax=1,
		grid = major,
		]
		\addlegendentry{$\text{ED}^{L^2\text{-norm}}$}
		\addplot [mark=o,loosely dotted, every mark/.append style={solid}] table[x index=0, y index=1, col sep=comma] {e1.dat};
		
		\addlegendentry{$\text{ED}^{text{Gaussian}}$}
		\addplot [mark=square, dashed, every mark/.append style={solid}] table[x index=0, y index=2, col sep=comma] {e1.dat};
		
		\addlegendentry{$\text{ED}^{text{Laplacian}}$}
		\addplot [mark=triangle, loosely dashed, every mark/.append style={solid}] table[x index=0, y index=3, col sep=comma] {e1.dat};
		
		\addlegendentry{$\text{ED}^{L^{1}}$}
		\addplot table[x index=0, y index=4, col sep=comma] {e1.dat};
		
		\addlegendentry{$\text{BG}$}
		\addplot [mark=diamond, densely dotted, every mark/.append style={solid}] table[x index=0, y index=5, col sep=comma] {e1.dat};
		
		\addlegendentry{$\text{JL}$}
		\addplot [mark=x, dotted, every mark/.append style={solid}] table[x index=0, y index=6, col sep=comma] {e1.dat};
		
		\legend{}
		\end{axis}

		\begin{axis}[
		name = ax2,
		at={(ax1.south east)},
		xshift=1cm,
		title= {Example \ref{eg:JL} (ii)} ,
		xlabel={$\beta $},
		ymin=0, ymax=1,
		grid = major,
		legend pos=outer north east,
		]
		\addlegendentry{$\text{ED}^{L^2\text{-norm}}$}
		\addplot [mark=o,loosely dotted, every mark/.append style={solid}]  table[x index=0, y index=1, col sep=comma] {e2.dat};
		
		\addlegendentry{$\text{ED}^{\text{Gaussian}}$}
		\addplot [mark=square, dashed, every mark/.append style={solid}] table[x index=0, y index=2, col sep=comma] {e2.dat};
		
		\addlegendentry{$\text{ED}^{\text{Laplacian}}$}
		\addplot [mark=triangle, loosely dashed, every mark/.append style={solid}] table[x index=0, y index=3, col sep=comma] {e2.dat};
		
		\addlegendentry{$\text{ED}^{L^{1}\text{-norm}}$}
		\addplot  table[x index=0, y index=4, col sep=comma] {e2.dat};
		
		\addlegendentry{$\text{BG}$}
		\addplot [mark=diamond, densely dotted, every mark/.append style={solid}]  table[x index=0, y index=5, col sep=comma] {e2.dat};
		
		\addlegendentry{$\text{JL}$}
		\addplot [mark=x, dotted, every mark/.append style={solid}] table[x index=0, y index=6, col sep=comma] {e2.dat};
		\legend{}
		\end{axis}
		
		\begin{axis}[
		at={(ax2.south east)},
		xshift=1cm,
		title= {Example \ref{eg:JL} (iii)} ,
		xlabel={$\beta $},
		ymin=0, ymax=1,
		grid = major,
		legend style={at={(-0.6,-0.2)},anchor=north, legend columns=-1}
		]
		\addlegendentry{$\text{ED}^{L^2\text{-norm}}$}
		\addplot[mark=o,loosely dotted, every mark/.append style={solid}]  table[x index=0, y index=1, col sep=comma] {e3.dat};
		
		\addlegendentry{$\text{ED}^{\text{Gaussian}}$}
		\addplot [mark=square, dashed, every mark/.append style={solid}] table[x index=0, y index=2, col sep=comma] {e3.dat};
		
		\addlegendentry{$\text{ED}^{\text{Laplacian}}$}
		\addplot [mark=triangle, loosely dashed, every mark/.append style={solid}] table[x index=0, y index=3, col sep=comma] {e3.dat};
		
		\addlegendentry{$\text{ED}^{L^{1}\text{-norm}}$}
		\addplot   table[x index=0, y index=4, col sep=comma] {e3.dat};
		
		\addlegendentry{$\text{BG}$}
		\addplot [mark=diamond, densely dotted, every mark/.append style={solid}]  table[x index=0, y index=5, col sep=comma] {e3.dat};
		
		\addlegendentry{$\text{JL}$}
		\addplot [mark=x, dotted, every mark/.append style={solid}] table[x index=0, y index=6, col sep=comma] {e3.dat};
		\end{axis}
		\end{tikzpicture}
		\caption{Power comparison for example \ref{eg:JL} and $n=70$, $m=30$, $p=500$, where in the top 3 figures $Z_1, Z_2$ are generated from normal distribution and in the bottom 3 figures, $Z_1, Z_2$ are generated from exponential distribution.}
		\label{figuretest}
	\end{figure}
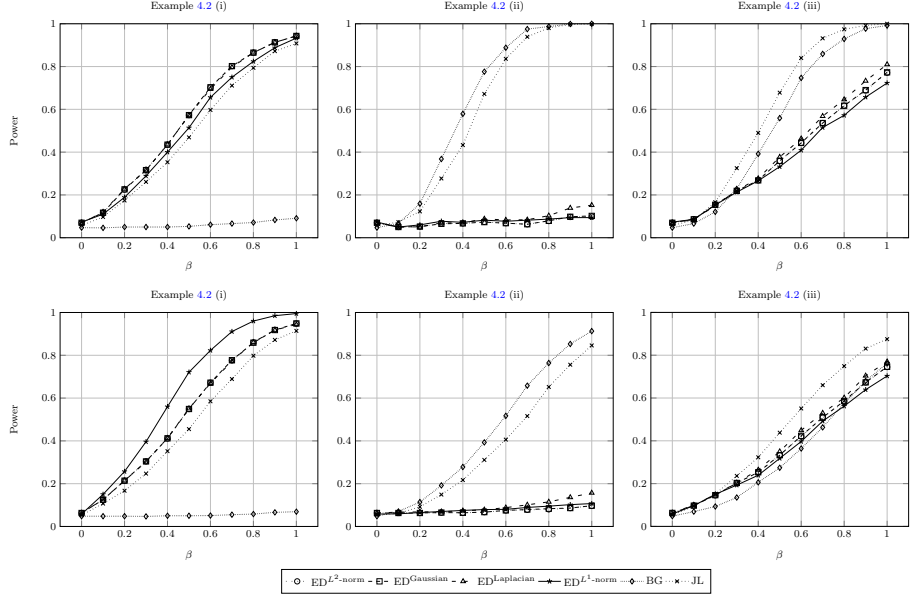
	
	\begin{example}
		\label{eg:JL}
		Let $R, V, Z_1, Z_2$ be defined the same as in Example \ref{eg:twosample0} and we choose $\rho = 0.5$ here. Generate samples as
		\begin{itemize}
			\item[(i)]
			\begin{align*}
			& X = (V^{1/2} R V^{1/2})^{1/2} Z_1, \\
			& Y =( 0.125 \times \mathbf{1}_{\beta p}, \mathbf{0}_{(1-\beta)p} ) + (V^{1/2} R V^{1/2})^{1/2} Z_2.
			\end{align*}
			\item[(ii)] Let $V^{*}$ be a diagonal matrix with $V^{*1/2}_{ii} =1.05$ for $i =1, 2, \cdots, \beta p$ and $V^{*1/2}_{ii} =1$ for $i =\beta p+1, \cdots, \beta p$.
			\begin{align*}
			& X = (V^{1/2} R V^{1/2})^{1/2} Z_1, \\
			& Y = (V^{*1/2} R V^{*1/2})^{1/2} Z_2.
			\end{align*}
			\item[(iii)] Let $V^{*1/2}_{ii} =1.04$ for $i =1, 2, \cdots, \beta p$ and $V^{*1/2}_{ii} =1$ for $i =\beta p+1, \cdots, \beta p$.
			\begin{align*}
			& X = (V^{1/2} R V^{1/2})^{1/2} Z_1, \\
			& Y =( 0.1 \times \mathbf{1}_{\beta p}, \mathbf{0}_{(1-\beta)p} ) + (V^{*1/2} R V^{*1/2})^{1/2} Z_2.
			\end{align*}
		\end{itemize}
	\end{example}
	From Figure \ref{figuretest}, we can see that (1) when there is a small difference in the means, $\text{ED}^k$-based tests and JL perform similarly, while BG barely show any power.
	(2) when there is a small difference in the scales, JL and BG are consistent and $\text{ED}^k$-based tests have very little power. Similar phenomenon by \citet{li2018asymptotic} were also observed, i.e., $\text{ED}^k$ based permutation test is not sensitive to small scale differences and the method proposed by \citet{li2018asymptotic} and \citet{biswas2014nonparametric} have dominant power in this case. Note that there is a tuning parameter involved in JL test and its choice could have a big impact on the size and power; results not shown. (3) when there are differences for both the means and scales, all the tests performs comparably.
	
	Next, Example \ref{eg:twosample1} examines the situation when $ X $ and $Y$ have the same marginal univariate mean and variance, but different marginal univariate distributions.
	
	\begin{example}
		\label{eg:twosample1}
		Generate samples as
		\begin{itemize}
			\item[(i)] Let Rademacher(0.5) be the Rademacher distribution with success probability 0.5, e.g. $P(y_{iu}= - 1) = P(y_{iu}=1) = 0.5$.
			\begin{align*}
			X &=(x_{1},...,x_{p}) \overset{iid}{\sim} N( 0, 1),  \\
			Y &= (\underbrace{y_{1}, y_{2}, \cdots, y_{\beta p}}_{\overset{iid}{\sim} \text{Rademacher}(0.5)}, \underbrace{y_{\beta p + 1}, y_{\beta p+2} \cdots,  y_{p}}_{\overset{iid}{\sim} N(0,1)}).
			\end{align*}
			\item[(ii)]
			\begin{align*}
			X &=(x_{1},...,x_{p}) \overset{iid}{\sim} N(0, 1),  \\
			Y &= (\underbrace{y_{1}, y_{2}, \cdots, y_{\beta p}}_{\overset{iid}{\sim} \text{Uniform}(- \sqrt{3}, \sqrt{3})}, \underbrace{y_{\beta p + 1}, y_{\beta p+2} \cdots,  y_{p}}_{\overset{iid}{\sim} N(0,1)}).
			\end{align*}
		\end{itemize}
	\end{example}
	From Figure \ref{figure2}, we see that only $\text{ED}^{L^{1}\text{-norm}}$ based permutation test has power growing as $\beta$ elevates ($p$ fixed) or $p$ increases ($\beta$ fixed). This phenomenon matches with our theories, which indicate that $L^2\text{-norm}$, Gaussian and Laplacian kernel can detect only marginal mean and variance differences. For $\text{ED}^{L^{1}\text{-norm}}$ based permutation test, the power is growing more rapidly for Example \ref{eg:twosample1} (i) than Example \ref{eg:twosample1} (ii), which might suggest that $L^1$-distance is more sensitive for the difference between continuous and discrete distributions.
	It is also apparent that the JL and BG tests show little power in this example.
	The next example examines the case where $X$ and $Y$ have the same marginal univariate distributions.
	\begin{figure}[ht]
		\centering
		\begin{tikzpicture}[scale = 0.6]
		\begin{axis}[
		name = ax1,
		title= {Example \ref{eg:twosample1} (i)} ,
		xlabel={$ \beta $},
		ylabel={ Power },
		ymin=0, ymax=1,
		grid = major,
		]
		\addlegendentry{$\text{ED}^{L^2\text{-norm}}$}
		\addplot [mark=o,loosely dotted, every mark/.append style={solid}]  table[x index=0, y index=1, col sep=comma] {exp11.dat};
		
		\addlegendentry{$\text{ED}^{text{Gaussian}}$}
		\addplot [mark=square, dashed, every mark/.append style={solid}]  table[x index=0, y index=2, col sep=comma] {exp11.dat};
		
		\addlegendentry{$\text{ED}^{text{Laplacian}}$}
		\addplot  [mark=triangle, loosely dashed, every mark/.append style={solid}]  table[x index=0, y index=3, col sep=comma] {exp11.dat};
		
		\addlegendentry{$\text{ED}^{L^{1}}$}
		\addplot table[x index=0, y index=4, col sep=comma] {exp11.dat};
		
		\addlegendentry{$\text{BG}$}
		\addplot [mark=diamond, densely dotted, every mark/.append style={solid}]   table[x index=0, y index=5, col sep=comma] {exp11.dat};
		
		\addlegendentry{$\text{JL}$}
		\addplot [mark=x, dotted, every mark/.append style={solid}] table[x index=0, y index=6, col sep=comma] {exp11.dat};
		
		\legend{}
		\end{axis}
		
		\begin{axis}[
		at={(ax1.south east)},
		xshift=1cm,
		title= {Example \ref{eg:twosample1} (ii)} ,
		xlabel={$\beta $},
		ymin=0, ymax=1,
		grid = major,
		legend pos=outer north east,
		]
		\addlegendentry{$\text{ED}^{L^2\text{-norm}}$}
		\addplot [mark=o,loosely dotted, every mark/.append style={solid}]  table[x index=0, y index=1, col sep=comma] {exp12.dat};
		
		\addlegendentry{$\text{ED}^{\text{Gaussian}}$}
		\addplot [mark=square, dashed, every mark/.append style={solid}]  table[x index=0, y index=2, col sep=comma] {exp12.dat};
		
		\addlegendentry{$\text{ED}^{\text{Laplacian}}$}
		\addplot  [mark=triangle, loosely dashed, every mark/.append style={solid}]  table[x index=0, y index=3, col sep=comma] {exp12.dat};
		
		\addlegendentry{$\text{ED}^{L^{1}\text{-norm}}$}
		\addplot  table[x index=0, y index=4, col sep=comma] {exp12.dat};
		
		\addlegendentry{$\text{BG}$}
		\addplot [mark=diamond, densely dotted, every mark/.append style={solid}]  table[x index=0, y index=5, col sep=comma] {exp12.dat};
		
		\addlegendentry{$\text{JL}$}
		\addplot [mark=x, dotted, every mark/.append style={solid}] table[x index=0, y index=6, col sep=comma] {exp12.dat};
		\end{axis}
		
		\begin{axis}[
		name = bx1,
		at={(ax1.below south west)},
		yshift=-6.3cm,
		xlabel={$p $},
		ylabel={ Power },
		ymin=0, ymax=1,
		grid = major,
		legend pos=outer north east,
		]
		\addlegendentry{$\text{ED}^{L^2\text{-norm}}$}
		\addplot [mark=o,loosely dotted, every mark/.append style={solid}]  table[x index=0, y index=1, col sep=space] {revise11.txt};
		
		\addlegendentry{$\text{ED}^{\text{Gaussian}}$}
		\addplot [mark=square, dashed, every mark/.append style={solid}]  table[x index=0, y index=2, col sep=space] {revise11.txt};
		
		\addlegendentry{$\text{ED}^{\text{Laplacian}}$}
		\addplot  [mark=triangle, loosely dashed, every mark/.append style={solid}]  table[x index=0, y index=3, col sep=space] {revise11.txt};
		
		\addlegendentry{$\text{ED}^{L^{1}\text{-norm}}$}
		\addplot  table[x index=0, y index=4, col sep=space] {revise11.txt};
		
		\addlegendentry{$\text{BG}$}
		\addplot [mark=diamond, densely dotted, every mark/.append style={solid}]  table[x index=0, y index=5, col sep=space] {revise11.txt};
		
		\addlegendentry{$\text{JL}$}
		\addplot [mark=x, dotted, every mark/.append style={solid}] table[x index=0, y index=6, col sep=space] {revise11.txt};
		\legend{}
		\end{axis}
		
		\begin{axis}[
		name = bx2,
		at={(bx1.south east)},
		xshift=1cm,
		xlabel={$p$},
		ymin=0, ymax=1,
		grid = major,
		legend pos=outer north east,
		]
		\addlegendentry{$\text{ED}^{L^2\text{-norm}}$}
		\addplot [mark=o,loosely dotted, every mark/.append style={solid}]  table[x index=0, y index=1, col sep=space] {revise12.txt};
		
		\addlegendentry{$\text{ED}^{\text{Gaussian}}$}
		\addplot [mark=square, dashed, every mark/.append style={solid}]  table[x index=0, y index=2, col sep=space] {revise12.txt};
		
		\addlegendentry{$\text{ED}^{\text{Laplacian}}$}
		\addplot  [mark=triangle, loosely dashed, every mark/.append style={solid}]  table[x index=0, y index=3, col sep=space] {revise12.txt};
		
		\addlegendentry{$\text{ED}^{L^{1}\text{-norm}}$}
		\addplot  table[x index=0, y index=4, col sep=space] {revise12.txt};
		
		\addlegendentry{$\text{BG}$}
		\addplot [mark=diamond, densely dotted, every mark/.append style={solid}]  table[x index=0, y index=5, col sep=space] {revise12.txt};
		
		\addlegendentry{$\text{JL}$}
		\addplot [mark=x, dotted, every mark/.append style={solid}] table[x index=0, y index=6, col sep=space] {revise12.txt};
		\legend{}
		\end{axis}
		
		\end{tikzpicture}
		\caption{Power comparison for Example \ref{eg:twosample1} and $n=70$, $m=30$. For the top two figures, the dimension $p$ is equal to 500 and we plot the power as $\beta$ ranges from $0$ to 1. For the bottom two figures, $\beta$ is fixed to be $0.5$ and the power is plotted with respect to $p$.}
		\label{figure2}
	\end{figure}
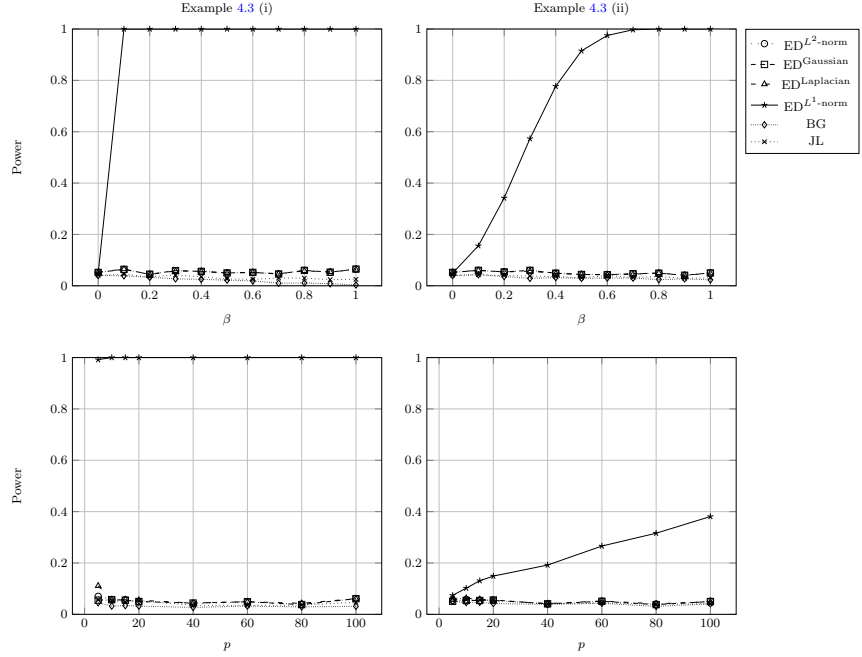

	\begin{example}
		\label{eg:twosample2}
		Generate samples as
		\begin{itemize}
			\item[(i)] Let $(y_{1}', y_{2}', \cdots , y_{\beta p/2}') \overset{iid}{\sim} \text{Bernoulli}(0.5)$
			\begin{align*}
			X &=(x_{1},...,x_{p}) \overset{iid}{\sim}  \text{Bernoulli}(0.5),  \\
			Y &= \big(y_{1}', \mathbb{I}_{ \{ y_{1}'=1 \} }, y_{2}', \mathbb{I}_{ \{ y_{2}'=1 \} } \cdots , y_{\beta p/2}', \mathbb{I}_{ \{y_{\beta p/2}'=1  \}}, \underbrace{y_1, y_2,  \cdots, y_{(1-\beta) p}}_{\overset{iid}{\sim} \text{Bernoulli}(0.5)}\big).
			\end{align*}
			\item[(ii)] Let $(y_{1}', y_{2}', \cdots , y_{\beta p/3}') \overset{iid}{\sim} \text{Bernoulli}(0.5)$ and $ (y_{1}'', y_{2}'', \cdots , y_{\beta p/3}'') \overset{iid}{\sim} \text{Bernoulli}(0.5) $
			\begin{align*}
			X &=(x_{1},...,x_{p}) \overset{iid}{\sim}  \text{Bernoulli}(0.5),  \\
			Y &= \big(y_{1}', y_{1}'', \mathbb{I}_{ \{ y_{1}'=y_{1}'' \} }, \cdots , y_{\beta p/3}', y_{\beta p/3}'', \mathbb{I}_{ \{y_{\beta p/3}'=y_{\beta p/3}''  \}}, \underbrace{y_1, y_2,  \cdots, y_{(1-\beta) p}}_{\overset{iid}{\sim} \text{Bernoulli}(0.5)}\big).
			\end{align*}
		\end{itemize}
	\end{example}
	Notice that in Example \ref{eg:twosample2} (i) $X$, $Y$ have the same marginal univariate distribution, but different marginal bivariate distributions and in Example \ref{eg:twosample2} (ii) $X$, $Y$ have the same marginal bivariate distribution, but different joint distribution. Theorem \ref{lemma:power1} (ii) and Theorem \ref{lemma:power2} (ii) both provide insights that $L^2$-norm, $L^1$-norm, Gaussian or Laplacian kernel based tests all suffer substantial power loss under Example \ref{eg:twosample2} (i). On the other hand, Theorem \ref{lemma:power1} (iii) suggests us that since Example \ref{eg:twosample2} (ii) belong to class $H_{A_t}$, all these tests have trivial power. The simulation results of Example \ref{eg:twosample2} are in Figure \ref{figure3} and they again corroborate our theoretical findings.
	
	\begin{figure}[ht]
		\centering
		\begin{tikzpicture}[scale = 0.6]
		\begin{axis}[
		name = ax1,
		title= {Example \ref{eg:twosample2} (i)} ,
		xlabel={$ \beta $},
		ylabel={Power},
		ymin=0, ymax=0.3,
		grid = major,
		]
		\addlegendentry{$\text{ED}^{L^2\text{-norm}}$}
		\addplot [mark=o,loosely dotted, every mark/.append style={solid}] table[x index=0, y index=1, col sep=comma] {exp22.dat};
		
		\addlegendentry{$\text{ED}^{text{Gaussian}}$}
		\addplot [mark=square, dashed, every mark/.append style={solid}] table[x index=0, y index=2, col sep=comma] {exp22.dat};
		
		\addlegendentry{$\text{ED}^{text{Laplacian}}$}
		\addplot [mark=triangle, loosely dashed, every mark/.append style={solid}] table[x index=0, y index=3, col sep=comma] {exp22.dat};
		
		\addlegendentry{$\text{ED}^{L^{1}}$}
		\addplot table[x index=0, y index=4, col sep=comma] {exp22.dat};
		
		\addlegendentry{$\text{BG}$}
		\addplot [mark=diamond, densely dotted, every mark/.append style={solid}]  table[x index=0, y index=5, col sep=comma] {exp22.dat};
		
		\addlegendentry{$\text{JL}$}
		\addplot [mark=x, dotted, every mark/.append style={solid}] table[x index=0, y index=6, col sep=comma] {exp22.dat};
		
		\legend{}
		\end{axis}
		
		\begin{axis}[
		at={(ax1.south east)},
		xshift=1cm,
		title= {Example \ref{eg:twosample2} (ii)} ,
		xlabel={$ \beta $},
		ymin=0, ymax=0.3,
		grid = major,
		legend pos=outer north east,
		]
		\addlegendentry{$\text{ED}^{L^2\text{-norm}}$}
		\addplot [mark=o,loosely dotted, every mark/.append style={solid}] table[x index=0, y index=1, col sep=comma] {exp21.dat};
		
		\addlegendentry{$\text{ED}^{\text{Gaussian}}$}
		\addplot [mark=square, dashed, every mark/.append style={solid}] table[x index=0, y index=2, col sep=comma] {exp21.dat};
		
		\addlegendentry{$\text{ED}^{\text{Laplacian}}$}
		\addplot [mark=triangle, loosely dashed, every mark/.append style={solid}] table[x index=0, y index=3, col sep=comma] {exp21.dat};
		
		\addlegendentry{$\text{ED}^{L^{1}\text{-norm}}$}
		\addplot table[x index=0, y index=4, col sep=comma] {exp21.dat};
		
		\addlegendentry{$\text{BG}$}
		\addplot [mark=diamond, densely dotted, every mark/.append style={solid}]  table[x index=0, y index=5, col sep=comma] {exp21.dat};
		
		\addlegendentry{$\text{JL}$}
		\addplot [mark=x, dotted, every mark/.append style={solid}] table[x index=0, y index=6, col sep=comma] {exp21.dat};
		\end{axis}
		
		\begin{axis}[
		name = bx1,
		at={(ax1.below south west)},
		yshift=-6.3cm,
		xlabel={$p$},
		ylabel={Power},
		ymin=0, ymax=1,
		grid = major,
		legend pos=outer north east,
		]
		\addlegendentry{$\text{ED}^{L^2\text{-norm}}$}
		\addplot [mark=o,loosely dotted, every mark/.append style={solid}]  table[x index=0, y index=1, col sep=space] {revise21.txt};
		
		\addlegendentry{$\text{ED}^{\text{Gaussian}}$}
		\addplot [mark=square, dashed, every mark/.append style={solid}]  table[x index=0, y index=2, col sep=space] {revise21.txt};
		
		\addlegendentry{$\text{ED}^{\text{Laplacian}}$}
		\addplot  [mark=triangle, loosely dashed, every mark/.append style={solid}]  table[x index=0, y index=3, col sep=space] {revise21.txt};
		
		\addlegendentry{$\text{ED}^{L^{1}\text{-norm}}$}
		\addplot  table[x index=0, y index=4, col sep=space] {revise21.txt};
		
		\addlegendentry{$\text{BG}$}
		\addplot [mark=diamond, densely dotted, every mark/.append style={solid}]  table[x index=0, y index=5, col sep=space] {revise21.txt};
		
		\addlegendentry{$\text{JL}$}
		\addplot [mark=x, dotted, every mark/.append style={solid}] table[x index=0, y index=6, col sep=space] {revise21.txt};
		\legend{}
		\end{axis}
		
		\begin{axis}[
		name = bx2,
		at={(bx1.south east)},
		xshift=1cm,
		xlabel={$p$},
		ymin=0, ymax=1,
		grid = major,
		legend pos=outer north east,
		]
		\addlegendentry{$\text{ED}^{L^2\text{-norm}}$}
		\addplot [mark=o,loosely dotted, every mark/.append style={solid}]  table[x index=0, y index=1, col sep=space] {revise22.txt};
		
		\addlegendentry{$\text{ED}^{\text{Gaussian}}$}
		\addplot [mark=square, dashed, every mark/.append style={solid}]  table[x index=0, y index=2, col sep=space] {revise22.txt};
		
		\addlegendentry{$\text{ED}^{\text{Laplacian}}$}
		\addplot  [mark=triangle, loosely dashed, every mark/.append style={solid}]  table[x index=0, y index=3, col sep=space] {revise22.txt};
		
		\addlegendentry{$\text{ED}^{L^{1}\text{-norm}}$}
		\addplot  table[x index=0, y index=4, col sep=space] {revise22.txt};
		
		\addlegendentry{$\text{BG}$}
		\addplot [mark=diamond, densely dotted, every mark/.append style={solid}]  table[x index=0, y index=5, col sep=space] {revise22.txt};
		
		\addlegendentry{$\text{JL}$}
		\addplot [mark=x, dotted, every mark/.append style={solid}] table[x index=0, y index=6, col sep=space] {revise22.txt};
		\legend{}
		\end{axis}
		
		\end{tikzpicture}
		\caption{Power comparison for Example \ref{eg:twosample2} and $n=70$, $m=30$. For the top two figures, the dimension $p$ is equal to 500 and we plot the power as $\beta$ ranges from $0$ to 1. For the bottom two figures, $\beta$ is fixed to be $1$ and the power is plotted with respect to $p$.}
		\label{figure3}
	\end{figure}
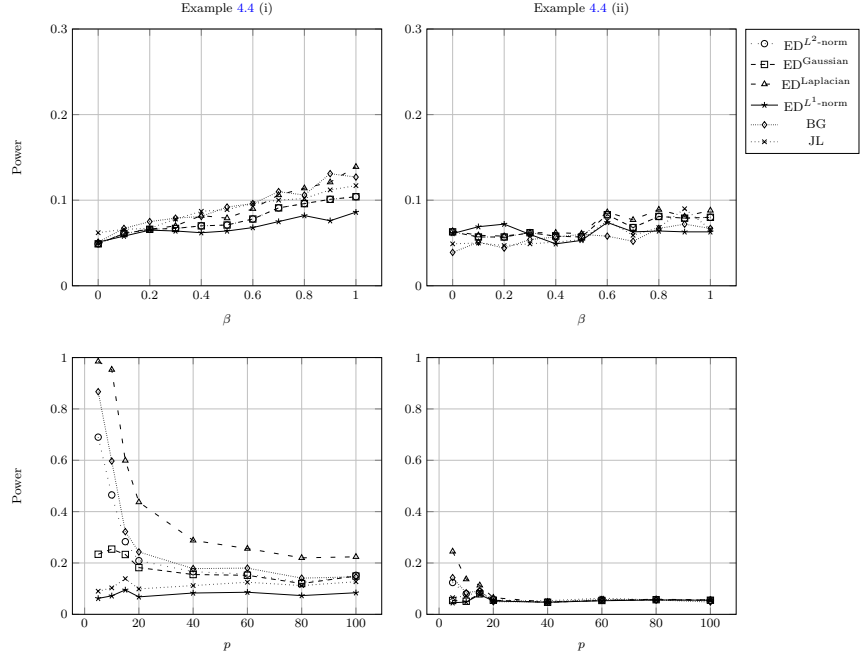
	
	\subsection{Performance on real data}
	\label{sub:1}
	We also compare the power of the above tests on the following real data sets.
	
	\begin{figure}
		\centering
		\begin{tikzpicture}[scale = 0.81]
		\begin{axis}[
		name=straw,
		ylabel=Strawberry,
		width=\textwidth,
		height=0.3\textwidth,
		ytick=\empty,
		xtick=\empty,
		every axis plot/.append style={ultra thick},
		legend pos=outer north east]
		\addplot[blue, mark=none] table[x index = 0, y index = 1, col sep=comma] {strawberry.dat};
		\addlegendentry{Class 1}
		\addplot[red, mark=none, dashed] table[x index =0, y index = 2, col sep=comma] {strawberry.dat};
		\addlegendentry{Class 2}
		\end{axis}
		
		\begin{axis}[
		name = kitchen,
		at ={(straw.below south west)},
		yshift=-3.5cm,
		ylabel=KitchenAppliances,
		width=\textwidth,
		height=0.3\textwidth,
		ytick=\empty,
		xtick=\empty,
		every axis plot/.append style={ultra thick}
		]
		\addplot[blue, mark=none] table[x index = 0, y index = 1, col sep=comma] {SmallKitchenAppliances.dat};
		\addlegendentry{Class 1}
		\addplot[red, mark=none, dashed] table[x index =0, y index = 2, col sep=comma] {SmallKitchenAppliances.dat};
		\addlegendentry{Class 2}
		\legend{}
		\end{axis}
		
		\begin{axis}[
		name = worm,
		at ={(kitchen.below south west)},
		yshift=-3.5cm,
		ylabel=Earthquakes,
		width=\textwidth,
		height=0.3\textwidth,
		ytick=\empty,
		xtick=\empty,
		every axis plot/.append style={ultra thick}
		]
		\addplot[blue, mark=none] table[x index = 0, y index = 1, col sep=comma] {Earthquakes.dat};
		\addlegendentry{Class 1}
		\addplot[red, mark=none, dashed] table[x index =0, y index = 2, col sep=comma] {Earthquakes.dat};
		\addlegendentry{Class 2}
		\legend{}
		\end{axis}
		\end{tikzpicture}
		\caption{A glance of the data in Section \ref{sub:1}, where we plot one point from each of the two classes for each data set.}
		\label{figure4}
	\end{figure}
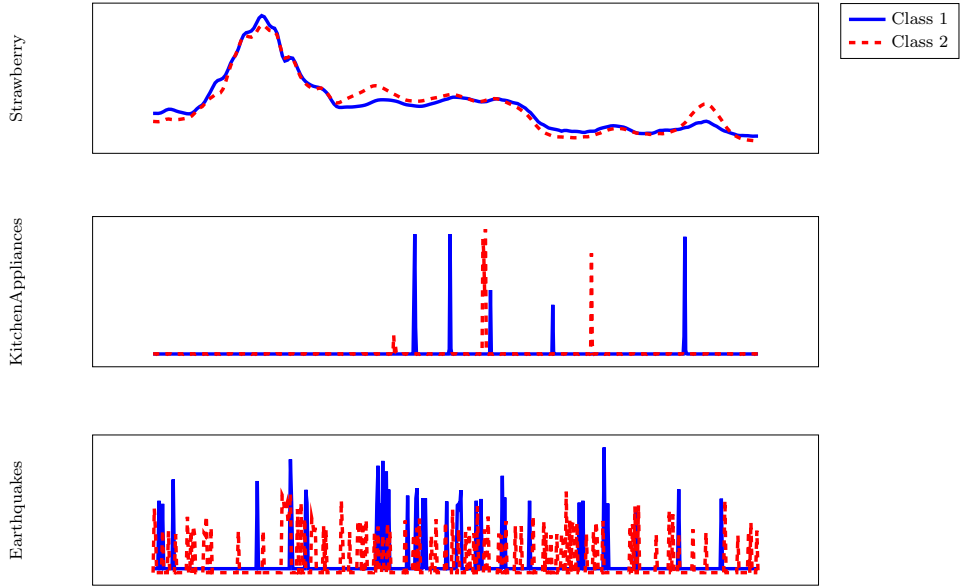
	
	\begin{itemize}
		\item Strawberry data: this data set contains the spectrographs of fruit purees. There are totally two classes: one is strawberry purees (authentic samples) and the other one is non-strawberry purees (adulterated strawberries and other fruits). Each data point is of length 235.
		\item SmallKitchenAppliances data: this data sets contains records of the electricity usage of some kitchen appliances. We only use classes Kettle and Microwave. Each data point has readings taken every 2 minutes over 24 hours.
		\item Earthquakes data: this data set is from Northern California Earthquake Data Center and has classes of positive and negative major earthquake events. There are 368 negative and 93 positive cases and each data point is of length 512.
	\end{itemize}
	
	All the above data sets are downloaded from UCR Time Series
	Classification Archive \cite{UCRArchive2018} (\url{https://www.cs.ucr.edu/~eamonn/time_series_data_2018/}) and a glance of these data sets is provided in Figure \ref{figure4}. For each of the three data sets, the data points have two classes and we want to compare the underlining distributions of the two classes. Following the procedures of \cite{biswas2014nonparametric} and \cite{sarkar2018high}, for each $m=n \in \{ 10,20,30,40,50,60 \}$, we randomly sample $n$ points from each class and test  whether the two distributions are the same using the afore-mentioned tests. The same procedure is repeated 1000 times to calculate the power.
	
	The experimental results for these data sets are shown in Figure \ref{figure5}, from which we see that all the tests have very high power for the Strawberry data with relatively low sample size. As for the SmallKitchenAppliances and Earthquakes data sets, the $L^1$-norm based test demonstrates superior power compared to other tests. It is also worth noting that BG and JL  barely exhibit any power for the Earthquakes data.
	\begin{figure}[ht]
		\centering
		\begin{tikzpicture}[scale = 0.5]
		\begin{axis}[
		name = ax1,
		title= {Strawberry} ,
		xlabel={$ n=m $},
		ylabel={Power},
		ymin=0, ymax=1,
		grid = major,
		]
		\addlegendentry{$\text{ED}^{L^2\text{-norm}}$}
		\addplot [mark=o,loosely dotted, every mark/.append style={solid}] table[x index=0, y index=1, col sep=comma] {r11.dat};
		
		\addlegendentry{$\text{ED}^{text{Gaussian}}$}
		\addplot [mark=square, dashed, every mark/.append style={solid}] table[x index=0, y index=2, col sep=comma] {r11.dat};
		
		\addlegendentry{$\text{ED}^{text{Laplacian}}$}
		\addplot [mark=triangle, loosely dashed, every mark/.append style={solid}] table[x index=0, y index=3, col sep=comma] {r11.dat};
		
		\addlegendentry{$\text{ED}^{L^{1}}$}
		\addplot table[x index=0, y index=4, col sep=comma] {r11.dat};
		
		\addlegendentry{$\text{BG}$}
		\addplot [mark=diamond, densely dotted, every mark/.append style={solid}]  table[x index=0, y index=5, col sep=comma] {r11.dat};
		
		\addlegendentry{$\text{JL}$}
		\addplot [mark=x, dotted, every mark/.append style={solid}] table[x index=0, y index=6, col sep=comma] {r11.dat};
		
		\legend{}
		\end{axis}
		
		\begin{axis}[
		name = bx1,
		at={(ax1.south east)},
		xshift=1cm,
		title= {SmallKitchenAppliances} ,
		xlabel={$ n=m $},
		ymin=0, ymax=1,
		grid = major,
		legend pos=outer north east,
		]
		\addlegendentry{$\text{ED}^{L^2\text{-norm}}$}
		\addplot [mark=o,loosely dotted, every mark/.append style={solid}] table[x index=0, y index=1, col sep=comma] {r12.dat};
		
		\addlegendentry{$\text{ED}^{\text{Gaussian}}$}
		\addplot [mark=square, dashed, every mark/.append style={solid}] table[x index=0, y index=2, col sep=comma] {r12.dat};
		
		\addlegendentry{$\text{ED}^{\text{Laplacian}}$}
		\addplot [mark=triangle, loosely dashed, every mark/.append style={solid}] table[x index=0, y index=3, col sep=comma] {r12.dat};
		
		\addlegendentry{$\text{ED}^{L^{1}\text{-norm}}$}
		\addplot table[x index=0, y index=4, col sep=comma] {r12.dat};
		
		\addlegendentry{$\text{BG}$}
		\addplot [mark=diamond, densely dotted, every mark/.append style={solid}]  table[x index=0, y index=5, col sep=comma] {r12.dat};
		
		\addlegendentry{$\text{JL}$}
		\addplot [mark=x, dotted, every mark/.append style={solid}] table[x index=0, y index=6, col sep=comma] {r12.dat};
		\legend{}
		\end{axis}

		\begin{axis}[
		name = x1,
		at={(bx1.south east)},
		xshift=1cm,
		title= {Earthquakes} ,
		xlabel={$ n=m $},
		ymin=0, ymax=1,
		grid = major,
		legend style={at={(-0.6,-0.2)},anchor=north, legend columns=-1}
		]
		\addlegendentry{$\text{ED}^{L^2\text{-norm}}$}
		\addplot [mark=o,loosely dotted, every mark/.append style={solid}] table[x index=0, y index=1, col sep=comma] {r21.dat};
		
		\addlegendentry{$\text{ED}^{\text{Gaussian}}$}
		\addplot [mark=square, dashed, every mark/.append style={solid}] table[x index=0, y index=2, col sep=comma] {r21.dat};
		
		\addlegendentry{$\text{ED}^{\text{Laplacian}}$}
		\addplot [mark=triangle, loosely dashed, every mark/.append style={solid}] table[x index=0, y index=3, col sep=comma] {r21.dat};
		
		\addlegendentry{$\text{ED}^{L^{1}}$}
		\addplot table[x index=0, y index=4, col sep=comma] {r21.dat};
		
		\addlegendentry{$\text{BG}$}
		\addplot [mark=diamond, densely dotted, every mark/.append style={solid}]  table[x index=0, y index=5, col sep=comma] {r21.dat};
		
		\addlegendentry{$\text{JL}$}
		\addplot [mark=x, dotted, every mark/.append style={solid}] table[x index=0, y index=6, col sep=comma] {r21.dat};
		
		\end{axis}
		
		\end{tikzpicture}
		\caption{Power comparison for real data examples in Section \ref{sub:1}.}
		\label{figure5}
	\end{figure}
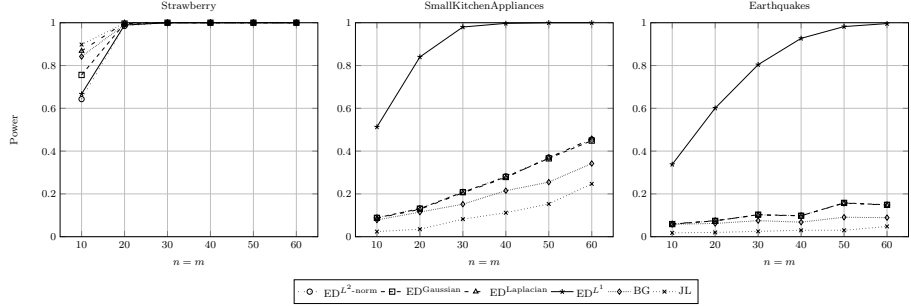
	
	\section{Discussions\&Conclusion}
	\label{sec:twosample:con}
	

	In this paper, we study the two-sample hypothesis testing problem in a high dimension and low/medium
	sample size setting. Our focus is on the interpoint distance based permutation tests, such as those based on  Energy
	Distance (ED) and Maximum Mean Discrepancy (MMD). Our theory demonstrates that all these tests under examination
	are unable to detect the difference between two high dimensional distributions beyond univariate marginal distributions.
	In particular, the ED test with $L^2$-norm and MMD with Gaussian or Laplacian kernels suffer substantial power loss under the HDLSS and have trivial power under the HDMSS
	when   the average of component-wise mean and variance discrepancies between two distributions are both asymptotically
	zero at the rate of $o(1/\sqrt{nmp})$. Thus these tests mainly target mean and variance differences in marginal distributions. By contrast, if we
	use $L^1$-norm in ED test, then the non-negligible difference in marginal univariate distributions, as quantified by cumulative
	energy distance of marginal distributions, can be detected with high power. Thus the theory suggests that
	
	1), The ED with $L^2$-norm, and MMD with Gaussian and Laplacian kernels are of the same category, as they all depend on
	the interpoint distance as measured by Euclidean distance, which leads to undesirable power limitation.
	
	2), Although in a low dimensional setting the use of $L^1$-norm in ED is not preferred due to the fact that it does not completely
	characterize the difference between two distributions since $\text{ED}^1(F,G)=0$ does not necessarily imply $F=G$, it seems to have some advantage over the ED with $L^2$-norm and MMD with Gaussian and
	Laplacian kernels  in the high dimensional setting, as shown in both theory and numerical studies.
	
	3), As shown in our simulations and data illustration, the existing interpoint distance test by \cite{li2018asymptotic} and \cite{biswas2014nonparametric} also suffer from low power when the two distributions have the same marginal mean and variances but different marginal distributions. So in this sense, they are also
	inferior to the ED test with $L^1$-norm.
	
	4), The difference in marginal distributions of two high dimensional distributions can be interpreted as the main effect of the distribution differences. It is
	a standard statistical practice to  test for the nullity of main effects first, before proceeding to the higher-order interactions. Thus we advocate the use of $L^1$-norm based test to test for the presence of main differences in two high dimensional distributions.
	
	To conclude the paper, we shall mention a few future directions. First, we are holding the bandwidth parameter in
	Gaussian and Laplacian kernels fixed for theoretical convenience, and it would be interesting to relax this restriction by allowing it to be data-dependent. Second, there might be some intrinsic difficulty of capturing all kinds of differences in two high dimensional distributions with limited sample sizes, so it seems natural to ask whether it is possible to detect any difference beyond marginal univariate distributions. If possible, what would be the form of the new tests?
	We leave these topics for future investigation.

	\appendix
	\section{Technical Details}
	\subsection{Proof of Sufficient Conditions for Local Alternatives} \label{subsec:hac}
	When $\psi(x,y) = (x-y)^2$, $\varphi$ is strictly concave, strictly increasing on $(0, +\infty)$ (e.g. $L^2$-norm, Gaussian kernel myltiplied by -1 and Laplacian kernel multiplied by -1), we first note that $2 e_{xy} - e_x - e_y =  2 \lim_{p \rightarrow \infty} \sum_{u=1}^{p} (E(x_u)-E(y_u))^2/p \geq 0$  and 
	\begin{align*}
	\varphi(e_{xy}) -\frac{ \varphi(e_x) + \varphi(e_y)}{2} \geq  \varphi(e_{xy}) - \varphi \left( \frac{e_{x} + e_y}{2} \right) \geq 0,
	\end{align*}
	where the equality holds iff $e_{xy} = e_{x} = e_{y}$. Also, some algebra shows that
	\begin{multline*}
	e_{xy} = e_x  + \lim\limits_{p \rightarrow \infty} \frac{1}{p} \sum_{u=1}^{p} (E(x_u)-E(y_u))^2 + \lim\limits_{p \rightarrow \infty} \frac{1}{p} \sum_{u=1}^{p} (var(y_u)-var(x_u) )  \\= e_y  + \lim\limits_{p \rightarrow \infty} \frac{1}{p} \sum_{u=1}^{p} (E(x_u)-E(y_u))^2 + \lim\limits_{p \rightarrow \infty} \frac{1}{p} \sum_{u=1}^{p} (var(x_u)-var(y_u) ).
	\end{multline*}
	Thus, in summary we have
	\begin{multline*}
	2 \varphi(e_{xy}) = \varphi(e_x) + \varphi(e_y) \Leftrightarrow e_{xy} = e_{x} = e_{y} \\ \Leftrightarrow 
	\sum_{u=1}^{p} (E(x_u)-E(y_u))^2 = o(p)  \text{ and } \left|  \sum_{u=1}^{p} (var(x_u) - var(y_u)) \right|= o(p).
	\end{multline*}
	This proves the result for $H_{A_c}$ characterization. Next, for sufficient conditions of $H_{A_l}$, if we have 
	$$
	\begin{array}{l}
	\sum\limits_{u=1}^{p} (E(x_u)-E(y_u))^2 = o\left(\frac{\sqrt{p}}{\sqrt{nm}}\right) \text{ and }
	\left| \sum\limits_{u=1}^{p} (var(x_u) - var(y_u)) \right| =  o\left(\frac{\sqrt{p}}{\sqrt{nm}}\right),
	\end{array}
	$$ 
	then it holds that $e_{xy} = e_x=e_y$ and 
	\begin{multline*}
	 E \left[ \left|  E\left[\overline{\psi}(X,Y)|X \right] -  E\left [\overline{\psi}(X,X')|X \right] \right| \right] \\
	\leq  \frac{2}{p} \sqrt{ \sum_{u=1}^p  E(x_u^2)  \sum_{u=1}^p ( E(x_u) - E(y_u))^2 } + \frac{1}{p} \left|  \sum_{u=1}^p (\var(y_u) - \var(x_u)) \right| \\
	 \hspace{1cm}  + \frac{1}{p}  \sqrt{ \sum_{u=1}^p  (E(x_u) + E(y_u))^2  \sum_{u=1}^p ( E(x_u) - E(y_u))^2 }= o \left( \frac{1}{\sqrt{nmp}}\right).
	\end{multline*} 
	For $H_{A_t}$, a straight forward calculation shows that
	\begin{align*}
	\psi (x, y) - E\left[\psi (x, y)|x\right]  -  E\left[\psi (x, y)|y\right] +  E\left[\psi (x , y)\right]  = -2(x-E(x))(y-E(y))
	\end{align*}
	and
	$
	v_{xy} = \sum_{u,v=1}^p 4\cov(x_u, x_v) \cov (y_u, y_v)/p.
	$ Thus, from Cauchy–Schwarz inequality, we have 
	$$
	\sum_{u,v=1}^{p} (\cov(x_u, x_v) - \cov (y_u, y_v) )^2 = o(p) \Rightarrow v_{xy} = v_x = v_y.
	$$
	When $ \psi(x,y) = |x-y| $, $\varphi (x)=x$, the results follow from the following equality.
	\begin{align*}
	& 2 \varphi(e_{xy}) - \varphi(e_x) - \varphi (e_y) \\
	= & 2 \lim\limits_{p \rightarrow \infty} \frac{1}{p} \sum_{u=1}^{p} E \left[ |x_{1u} - y_{1u}| \right] -  \lim\limits_{p \rightarrow \infty} \frac{1}{p} \sum_{u=1}^{p} E \left[ |x_{1u} - x_{2u}| \right] - \lim\limits_{p \rightarrow \infty} \frac{1}{p} \sum_{u=1}^{p} E \left[ |y_{1u} - y_{2u}| \right] \\
	= & \lim\limits_{p \rightarrow \infty} \frac{1}{p} \sum_{u=1}^{p}\left( 2E \left[ |x_{1u} - y_{1u}| \right] - E \left[ |x_{1u} - x_{2u}| \right] -  E \left[ |y_{1u} - y_{2u}| \right] \right) \\
	= & \lim\limits_{p \rightarrow \infty} \frac{1}{p} \sum_{u=1}^{p} \text{ED}(F_u, G_u).
	\end{align*}
	
	\subsection{Proof of Theorem \ref{thm:lowall}}
	\begin{proof}
		(i)  Taking a first order Taylor expansion w.r.t $\varphi$ gives
		\begin{align*}
		k(Z_i, Z_j)  = \varphi \left( e_{ij} \right) + \mathcal{R}_{1}(Z_i, Z_j),
		\end{align*}
		where  $\mathcal{R}_{1}(Z_i, Z_j) $ is an operator that acts on random variables 
		\begin{align*}
		\mathcal{R}_{1}(Z_i, Z_j) =  \mathcal{L}(Z_i, Z_{j})\int_{0}^{1} \varphi^{(1)} \left(e_{ij} + v \mathcal{L}(Z_i, Z_{j}) \right) dv
		\end{align*}
		For each fixed permutation matrix $ \varGamma_w \in \mathbb{S}_{w}$
		\begin{align} \label{eq:dep1}
		\text{ED}_n^{k}( \varGamma_w \mathbf{Z}) & = \underbrace{\sum_{i=2}^{n+m}\sum_{j=1}^{i-1} \varPi_{w,ij} \varphi \left( e_{ij} \right)}_{:= \mu_n ( \varGamma_w \mathbf{ Z})} + \underbrace{\sum_{i=2}^{n+m}\sum_{j=1}^{i-1}  \varPi_{w,ij} \mathcal{R}_{1}(Z_i, Z_j)}_{:= R_1(\varGamma_w \mathbf{ Z})},
		\end{align}
		where $\mu_n ( \varGamma_w \mathbf{ Z})$ is the asymptotic mean for the permuted data and equals
		\begin{multline*}
		\frac{2}{mn} \left( \left(w^{2} + (n-w)(m-w)\right) \varphi (e_{xy} ) + (n-w)w \varphi( e_{x}) + (m-w)w \varphi(e_{y} ) \right) \\
		-  \frac{1}{n(n-1)} \left(2w(n-w) \varphi( e_{xy} ) + (n-w)(n-w-1)\varphi(e_{x}) + w(w-1) \varphi(e_{y}) \right) \\
		-  \frac{1}{m(m-1)} \left(2w(m-w) \varphi (e_{xy})    + w(w-1) \varphi( e_{x}) + (m-w)(m-w-1) \varphi( e_{y} )\right).
		\end{multline*}
		Then, after re-arranging the terms according to the powers of $w$, we have 
		$$
		\mu_n  (\varGamma_w \mathbf{Z}) =  \left( 2\varphi(e_{xy}) - \varphi(e_{x}) - \varphi(e_{y}) \right) f(w),
		$$
		where $f(w)$ is a second order polynomial with respect to $w$
		\begin{align*}
		f(w) =  1
		-  \left( \frac{2m-1}{m(m-1)} + \frac{2n-1}{n(n-1)} \right)w
		+  \left( \frac{2}{mn}+ \frac{1}{n(n-1)} + \frac{1}{m(m-1)} \right) w^{2} .
		\end{align*}
		For the remainder term $R_1( \varGamma_w \mathbf{ Z})$,
		notice that $\mathcal{L}(Z_i, Z_j) \overset{p}{\rightarrow} 0$ for any $1 \leq i < j < n+m$. By the continuous mapping theorem, we know
		\begin{align*}
		\int_{0}^{1} \varphi^{(1)}(e_{ij} + v\mathcal{L}(Z_i, Z_j)) dv \overset{p}{\rightarrow}  \int_{0}^{1} \varphi^{(1)}(e_{ij}) dv.
		\end{align*}
		Thus, it holds that $ \mathcal{R}_1(Z_i, Z_j) \asymp_p \mathcal{L}(Z_i, Z_j) $ and $R_{1}( \varGamma_w \mathbf Z) = O_p(\alpha_{xy}+ \alpha_{x} + \alpha_{y}) = o_p(1)$.
		
	(ii)  Taking a second order Taylor expansion w.r.t $\varphi$ gives
	\begin{align*}
	k(Z_i, Z_j)  = \varphi \left(e_{ij} \right) +  \varphi^{(1)} \left(e_{ij} \right) \frac{ \mathcal{K}(Z_i, Z_j) + \mathcal{W}(Z_i, Z_j) }{\sqrt{p}}   + \mathcal{R}_{2}(Z_i, Z_j),
	\end{align*}
	where $\mathcal{R}_{2} $ and $\mathcal{W}$ are defined as
	\begin{align*}
	\mathcal{R}_{2}(Z_i, Z_j) & = \mathcal{L}^2(Z_i, Z_{j}) \int_{0}^{1} \int_{0}^{1} u \varphi^{(2)}\left(e_{ij} + uv \mathcal{L}(Z_i, Z_{j}) \right) dvdu, \\
	\mathcal{W}(Z_i, Z_j) & = \frac{1}{\sqrt{p}} \sum_{u=1}^{p} \left(E[ \psi (z_{iu}, z_{ju})| z_{iu}] +  E[ \psi (z_{iu}, z_{ju})| z_{ju}] - E[\psi(z_{iu}, z_{ju})] - e_{ij} \right) .
	\end{align*}
	Accordingly, we can decompose the sample energy distance as 
	\begin{multline} \label{eq:dep2}
	\sqrt{p} \left( \text{ED}_n^{k}(\varGamma_w \mathbf{Z}) - \mu_n ( \varGamma_w \mathbf Z)  \right)    =   \underbrace{ \sum_{i=2}^{n+m}\sum_{j=1}^{i-1}  \varPi_{w,ij}\varphi^{(1)}\left( e_{ij} \right) \mathcal{K}(Z_i, Z_j)}_{:=L( \varGamma_w \mathbf{ Z} )}  \\ + \underbrace{ \sum_{i=2}^{n+m}\sum_{j=1}^{i-1}  \varPi_{w,ij} \varphi^{(1)}(e_{ij}) \mathcal{W}(Z_i, Z_j)}_{:=R_l( \varGamma_w \mathbf{ Z})} + \underbrace{ \sqrt{p}\sum_{i=2}^{n+m}\sum_{j=1}^{i-1} \varPi_{w,ij}\mathcal{R}_2(Z_i,Z_j)}_{:= R_{2}( \varGamma_w \mathbf{ Z}) }.
	\end{multline}
	For the leading term $L(\varGamma_w \mathbf Z)$, notice that under Assumption \ref{Ass:1},  $( \mathcal{K}(Z_i, Z_j))_{i<j} $ converges jointly to a multivariate normal with mean 0 and a diagonal covariance matrix. Thus, given a permutation matrix $\varGamma_w$, we are able to obtain $ L(\varGamma_w \mathbf Z) \overset{d}{\rightarrow} N(0, \sigma^2_n(\varGamma_w \mathbf Z)) $, where
		\begin{align*}
		\sigma^2_n(\varGamma_w \mathbf Z) =  & \frac{4}{n^2(n-1)^2} \bigg\{ \frac{(n -w)(n -w-1)}{2}v_x[\varphi^{(1)}(e_{x})]^2 \\
		& \hspace{1cm} + \frac{w (w -1) }{2} v_y[\varphi^{(1)}(e_{y})]^2 + (n -w)w v_{xy} [\varphi^{(1)}(e_{xy})]^2 \bigg\} \\
		+ & \frac{4}{m^2(m-1)^2}\bigg\{ \frac{w (w -1)}{2}v_x[ \varphi^{(1)}(e_{x})]^2 \\
		& \hspace{1cm} + \frac{(m-w)(m-w-1)}{2}v_y[ \varphi^{(1)}(e_{y})]^2 + w(m-w)v_{xy}[ \varphi^{(1)}(e_{xy})]^2 \bigg\} \\ 
		+ & \frac{4}{n^2m^2} \bigg\{ (n-w)w v_x[ \varphi^{(1)}(e_{x})]^2 \\
		& \hspace{1cm} + w(m-w)v_y[ \varphi^{(1)}(e_{y})]^2 + ((n-w)(m-w)+w^2)v_{xy} [ \varphi^{(1)}(e_{xy})]^2  \bigg\}.
	\end{align*}
	By collecting terms with respect to $v_{xy}, v_x, v_y$, we obtain
	\begin{align*}
	\sigma^2_{n,w}= &  \bigg\{ \frac{4}{nm} -4 \left( \frac{n+m}{n^2m^2} - \frac{n}{n^2(n-1)^2} - \frac{m}{m^2(m-1)^2} \right)w \\ 
		& \hspace{1cm}+ 4 \left(\frac{2}{n^2m^2} - \frac{1}{n^2(n-1)^2} - \frac{1}{m^2(m-1)^2} \right) w^{2} \bigg\} v_{xy} [\varphi^{(1)}(e_{xy})]^2  \\
		+ & \bigg\{ \frac{2}{n(n-1)} + 2\left( \frac{2n}{n^2m^2} - \frac{2n-1}{n^2(n-1)^2} - \frac{1}{m^2(m-1)^2} \right) w \\
		& \hspace{1cm} - 2\left(\frac{2}{m^2n^2} - \frac{1}{n^2(n-1)^2} - \frac{1}{m^2(m-1)^2} \right) w^{2}  \bigg\} v_x[ \varphi^{(1)}(e_{x})]^2 \\
		+ & \bigg\{ \frac{2}{m(m-1)} + 2\left( \frac{2m}{n^2m^2}  - \frac{1}{n^2(n-1)^2} - \frac{2m-1}{m^2(m-1)^2} \right) w \\
		& \hspace{1cm} -2 \left(\frac{2}{n^2m^2} - \frac{1}{m^2(m-1)^2} - \frac{1}{n^2(n-1)^2} \right) w^{2}  \bigg\} v_y[ \varphi^{(1)}(e_{y})]^2.
		\end{align*}
	We then conclude the result by showing that the remainder terms are negligible. $R_l( \varGamma_w \mathbf{ Z}) = o_p(1)$ is proved in lemma \ref{lemma:lowRl}. For the $R_2( \varGamma_w \mathbf Z)$ term, it can be shown similarly that $\mathcal{R}_2(Z_i, Z_j) \asymp_p \mathcal{L}^2(Z_i, Z_j) = O_p(\alpha_{xy}^2 + \alpha_{x}^2 + \alpha_y^2) $, which implies that $R_2( \varGamma_w \mathbf Z) = O_p(\sqrt{p}(\alpha_{xy}^2 + \alpha_{x}^2 + \alpha_y^2) ) = o_p(1)$ under Assumption \ref{Ass:4}. 
	\end{proof}

	\subsection{Proof of Proposition \ref{Lemmalow: asymHp}}
\begin{proof}
	(i) From Equation \ref{eq:dep1}, we obtain
	\begin{align*}
	\text{ED}_n^{k}(\mathbf \Gamma \mathbf{Z}) & = \underbrace{\sum_{i=2}^{n+m}\sum_{j=1}^{i-1} \mathbf \Pi_{ij} \varphi \left( e_{ij} \right)}_{:= \mu_{n, W}} +o_p(1),
	\end{align*}
	where $\mathbf{\Pi}_{ij}$ corresponds to $\mathbf{\Gamma}$ and $W = N(\mathbf \Gamma) \sim \text{Hypergeometric}(m+n,m,n)$.
	
	(ii) It follows from Equation \ref{eq:dep2}, Lemma \ref{lemma:lowRl} and the proof of Theorem \ref{thm:lowall} that
	\begin{align*}
	\sqrt{p} \left( \text{ED}_n^{k}(\mathbf \Gamma \mathbf{Z}) - \mu_n ( \mathbf \Gamma \mathbf Z)  \right)    =   \underbrace{ \sum_{i=2}^{n+m}\sum_{j=1}^{i-1} \mathbf \Pi_{ij}\varphi^{(1)}\left( e_{ij} \right) \mathcal{K}(Z_i, Z_j)}_{:=L(\mathbf \Gamma \mathbf{ Z} )} + o_p(1).
	\end{align*} 
	Then, under Assumption \ref{Ass:1}, it is not hard to see that $ L(\mathbf \Gamma \mathbf{ Z} ) \overset{d}{\rightarrow} N(0, \sigma^2_{n,W}) $, where $W = N(\mathbf \Gamma) \sim \text{Hypergeometric}(m+n,m,n)$. This concludes the proposition.
\end{proof}
	
	\subsection{Proof of Theorem \ref{lemma:power1}}
	1, For any $\mathbf{a} \in \mathbb{R}^{(n+m)!}$, we define the $\alpha$-th quantile of the set $ \{ a_{1}, \cdots , a_{(n+m)!} \} $ as
	$$
	Q_{1-\alpha}\left\lbrace  a_{1}, \cdots , a_{(n+m)!} \right\rbrace  = \min \left\lbrace  a_{i}:  \frac{1}{(n+m)!} \sum\limits_{i=1}^{(n+m)!} \mathbb{I}_{\{ a_{i} \leq t \} } \geq 1 - \alpha \right\rbrace.
	$$
	Then, we can view $Q_{1 - \alpha}$ as a continuous function on $\mathbb{R}^{(n+m)!}$.
	
	(i) By Theorem \ref{thm:lowall}, for any fixed $\Gamma_{i} \in \mathbb{P}_{n+m}$, we have
	$
	\text{ED}_n^k( \Gamma_i \mathbf{Z})   \overset{p}{\rightarrow}  \mu_n( \Gamma_i \mathbf{Z}).
	$
	The continuous mapping theorem implies
	$$
	Q_{1-\alpha} \left\{ \text{ED}_n^k(\Gamma_{1} \mathbf{Z}),
	\cdots,
	\text{ED}_n^k(\Gamma_{(n+m)!} \mathbf{Z}) \right\} 
	\overset{p}{\rightarrow} Q_{1 - \alpha}
	\left\{ 
	\mu_n( \Gamma_{1} \mathbf{Z}),
	\cdots ,
	\mu_n( \Gamma_{(n+m)!} \mathbf{Z}) \right\}.
	$$
	Then, it follows from the definition of $\mu_n(\cdot)$ in Theorem \ref{thm:lowall} that
	$$
	\mu_n(\mathbf Z) = \mu_n (\Gamma_{0} \mathbf{Z}) = \max \left\lbrace \mu_n( \Gamma_{1} \mathbf Z ), \cdots , \mu_n( \Gamma_{(n+m)!} \mathbf Z)  \right\rbrace .
	$$
	Notice that 
	$$\left\lbrace
	\begin{array}{ll}
	\frac{n!m!}{(n+m)!} < 1 - \alpha & \text{if }m \neq n, \\
	\frac{2 (n!)^2}{(2n)!} < 1 - \alpha & \text{if }m = n,
	\end{array}\right.
	\text{ implies } \left\lbrace
	\begin{array}{ll}
	\frac{|\mathbb{S}_0| }{(n+m)!} < 1 - \alpha & \text{if }m \neq n, \\
	\frac{|\mathbb{S}_0| + |\mathbb{S}_{\min \{n,m\}}| }{(n+m)!} < 1 - \alpha & \text{if }m = n,
	\end{array}\right.
	$$
	and so $\mu_n (  \Gamma_{0} \mathbf{Z}) > Q_{1- \alpha} \left\lbrace \mu_n( \Gamma_{1} \mathbf Z ), \cdots , \mu_n( \Gamma_{(n+m)!} \mathbf Z)  \right\rbrace$.
	Thus, as $p \rightarrow \infty$, we conclude
	\begin{multline*}
	P \left( \text{ED}_n^k (\mathbf{Z}) > Q_{1 - \alpha} \left\lbrace \text{ED}_n^k(\Gamma_{1} \mathbf{Z}) , \cdots ,  \text{ED}_n^k(\Gamma_{(n+m)! } \mathbf{Z})  \right\rbrace \right) \\
	\rightarrow   P\left(\mu_n(\Gamma_{0} \mathbf{Z}) > Q_{1 - \alpha} \left\lbrace \mu_n( \Gamma_{1} \mathbf Z ), \cdots , \mu_n( \Gamma_{(n+m)!} \mathbf Z)  \right\rbrace \right) =  1.
	\end{multline*}
	(ii) For any random permutation matrix $\mathbf{\Gamma}_s$, by Proposition \ref{Lemmalow: asymHp}, we have $\text{ED}_n^k( \mathbf \Gamma_s \mathbf{Z})   \overset{p}{\rightarrow}  \mu_{n, W_s}$, where $W_s  = N(\mathbf{\Gamma}_s) \sim \text{Hypergeometric}(n+m,m,n)$. Then, the continuous mapping theorem implies that
	\begin{multline*}
	P \left( \text{ED}_n^k (\mathbf{Z}) > Q_{1 - \alpha} \left\lbrace \text{ED}_n^k( \mathbf{Z}) , \text{ED}_n^k(\mathbf{\Gamma}_1 \mathbf{Z}), \cdots ,  \text{ED}_n^k(\mathbf{\Gamma}_{S-1} \mathbf{Z})  \right\rbrace \right) \\
	\rightarrow   P\left(\mu_{n,0} > Q_{1 - \alpha} \left\lbrace \mu_{n,0}, \mu_{n,W_1}, \cdots , \mu_{n,W_{S-1}}  \right\rbrace \right).
	\end{multline*}
	Since $\mu_{n,0} = \max_{w} \mu_{n,w}$, in order to have $\mu_{n,0} = Q_{1 - \alpha} \left\lbrace \mu_{n,0}, \mu_{n,W_1}, \cdots , \mu_{n,W_{S-1}}  \right\rbrace$, at least $ \lfloor \alpha S \rfloor +1 $ elements of $\left\lbrace \mu_{n,0}, \mu_{n,W_1}, \cdots , \mu_{n,W_{S-1}}  \right\rbrace$ should be equal to $\mu_{n,0}$. Thus, we get
	\begin{multline*}
	 P\left(\mu_{n,0} > Q_{1 - \alpha} \left\lbrace \mu_{n,0} , \mu_{n,W_1}, \cdots , \mu_{n,W_{S-1}}  \right\rbrace \right) \\ = 1 - 
	 P\left(\mu_{n,0} = Q_{1 - \alpha} \left\lbrace \mu_{n,0} , \mu_{n,W_1}, \cdots , \mu_{n,W_{S-1}}  \right\rbrace \right)
	 \\ \geq  \left\{ 
	 \begin{array}{ll}
	 1 - \frac{S-1}{ \lfloor \alpha S \rfloor } \frac{n!m!}{(n+m)!}, & \text{if } n \neq m , \\
	 1 - \frac{S-1}{ \lfloor \alpha S \rfloor } \frac{2(n!)^2}{(n+m)!}, & \text{if } n = m .
	 \end{array}	\right.
	\end{multline*}

	2, (i) Since $ \mu_n ( \Gamma_u \mathbf Z) =0 $ for all $u=1,2, \cdots, (n+m)!$ under $H_{A_l}$, Assumption \ref{Ass:1} implies that
	\begin{align*}
	\sqrt{p}  \text{ED}_n^{k}(\Gamma_u \mathbf{Z})    \overset{d}{\rightarrow   }  \sum\limits_{i=1}^{n} \sum\limits_{j=1}^{m} \Pi_{u,ij} b_{ij}  -   \sum\limits_{1 \leq i < j \leq n} \Pi_{u,ij} c_{ij} -  \sum\limits_{1 \leq i < j \leq m} \Pi_{u,ij} d_{ij},
	\end{align*}
	where $ \Pi_{u,ij} $ corresponds to $\Gamma_u$. Then, the continuous mapping theorem entails
	\begin{multline*}
		 P \left( \text{ED}_n^k (\mathbf{Z}) > Q_{1 - \alpha} \left\lbrace \text{ED}_n^k(\Gamma_{1} \mathbf{Z}) , \cdots ,  \text{ED}_n^k(\Gamma_{(n+m)! } \mathbf{Z})  \right\rbrace \right) \\
		=   P \left( \sqrt{p} \text{ED}_n^k (\mathbf{Z}) >  Q_{1 - \alpha} \left\lbrace \sqrt{p} \text{ED}_n^k(\Gamma_{1} \mathbf{Z}) , \cdots , \sqrt{p} \text{ED}_n^k(\Gamma_{(n+m)! } \mathbf{Z})  \right\rbrace \right)\\
		\rightarrow   	P(V(\Gamma_0) >  Q_{\widehat{T}, 1-\alpha} ) .
	\end{multline*}
	
	(ii) Conditioned on $\mathbf{\Gamma}_1,\mathbf{\Gamma}_2, \cdots, \mathbf{\Gamma}_{S-1}$, the result can be shown similarly with part(i). Then, since the number of permutations is fixed and finite, the unconditioned version follows straightforwardly.
	
	3, (i)  By construction, we have
	\begin{align*}
 	\frac{1}{(n+m)!} \sum\limits_{u=1}^{(n+m)!} \mathbb{I}_{ \left\lbrace V(\Gamma_u)  > Q_{1-\alpha} \left\{  V(\Gamma_1), \cdots, V(\Gamma_{(n+m)!}) \right\} \right\rbrace }  \leq \alpha.
	\end{align*}
	If $\varphi' ( e_{xy}) = \varphi'( e_{x}) = \varphi'( e_{y})$ and $
	v_{xy} = v_{x} = v_{y}$, then $V(\Gamma_{u}) =^d V(\Gamma_0)$ for any $u = 1, 2, \cdots, (n+m)!$ and so
	$$ 
	\left(V(\Gamma_u)  , Q_{1-\alpha} \left\{  V(\Gamma_1), \cdots, V(\Gamma_{(n+m)!}) \right\}\right) = ^d\left(V(\Gamma_0)  , Q_{1-\alpha} \left\{  V(\Gamma_1), \cdots, V(\Gamma_{(n+m)!}) \right\}\right) .
	$$ 
	Thus, we have
	\begin{multline*}
	P\left(V(\Gamma_0)  > Q_{1-\alpha} \left\{  V(\Gamma_1), \cdots, V(\Gamma_{(n+m)!}) \right\} \right)  
	 = \\ E \left[  \frac{1}{(n+m)!} \sum\limits_{u=1}^{(n+m)!} \mathbb{I}_{ \left\lbrace V(\Gamma_u)  > Q_{1-\alpha} \left\{  V(\Gamma_1), \cdots, V(\Gamma_{(n+m)!}) \right\} \right\rbrace } \right] \leq \alpha.
	\end{multline*}
	(ii) The proof follows similarly from part (i) by observing that for any $s = 1,2, \cdots, S$
	\begin{multline*}
	\left(V(\mathbf \Gamma_s)  , Q_{1-\alpha} \left\{V(\Gamma_0),  V(\mathbf \Gamma_1), \cdots, V(\mathbf \Gamma_{S-1}) \right\}\right) \\ = ^d\left(V(\Gamma_0)  , Q_{1-\alpha} \left\{V(\Gamma_0),  V(\mathbf \Gamma_1), \cdots, V(\mathbf \Gamma_{S-1}) \right\}\right).
	\end{multline*}

	\subsection{Proof of Theorem \ref{lem:per}}
	\noindent (i) Recall that for a fixed permutation matrix $ \varGamma_w \in \mathbb{S}_{w}$ that corresponds to  $ \varPi_{w,ij} $,
	\begin{align*}
	\text{ED}_n^{k}( \varGamma_w \mathbf{Z}) & = \underbrace{\sum_{i=2}^{n+m}\sum_{j=1}^{i-1}  \varPi_{w,ij} \varphi \left(e_{ij} \right)}_{:=\mu_n (\varGamma_w \mathbf Z)} + \underbrace{\sum_{i=2}^{n+m}\sum_{j=1}^{i-1} \varPi_{w,ij} \mathcal{R}_1(Z_i, Z_j)}_{:=R_1( \varGamma_w \mathbf{ Z} )}.
	\end{align*}
	Under the HDMSS setting, part (i) follows from Lemma \ref{lem:event}.
	
	\begin{lemma} \label{lem:event}
		Under Assumption \ref{ass:2}, 
		$
		\sup_{\Gamma} \left| R_{1}(\Gamma \mathbf{ Z}) \right| = o_p(1).
		$
	\end{lemma}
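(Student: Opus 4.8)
The plan is to reduce the supremum over all $(n+m)!$ permutation matrices to a single deterministic statistic and then control that statistic on a high-probability event where the first-order Taylor remainder behaves linearly. First I would observe that, for any $\Gamma \in \mathbb{P}_{n+m}$, each weight $\Pi_{ij}$ takes one of the three fixed magnitudes $2/(n(n-1))$, $2/(m(m-1))$, $2/(mn)$, so that
$$
\sup_{\Gamma \in \mathbb{P}_{n+m}} |R_1(\Gamma \mathbf{Z})| \leq \max\left\{ \tfrac{2}{n(n-1)}, \tfrac{2}{m(m-1)}, \tfrac{2}{mn} \right\} \sum_{1 \leq i < j \leq n+m} |\mathcal{R}_1(Z_i, Z_j)| \leq \frac{C_0}{\min\{n,m\}^2} \sum_{1 \leq i < j \leq n+m} |\mathcal{R}_1(Z_i, Z_j)|
$$
for an absolute constant $C_0$. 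Since $n/m = \rho \in (0,\infty)$ under HDMSS we have $\min\{n,m\}^2 \asymp nm$, so the supremum over permutations is absorbed into a fixed prefactor of order $1/(nm)$ and the problem reduces to showing $T := \sum_{i<j} |\mathcal{R}_1(Z_i, Z_j)| = o_p(nm)$.

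The key difficulty is that $\mathcal{R}_1(Z_i, Z_j) = \mathcal{L}(Z_i, Z_j) \int_0^1 \varphi^{(1)}(e_{ij} + v \mathcal{L}(Z_i, Z_j)) \, dv$ involves $\varphi^{(1)}$, which need not be bounded (e.g. $\varphi(x)=\sqrt{x}$ for the $L^2$-norm blows up near the origin). I would therefore work on the event
$$
\mathcal{E}_\delta := \left\{ \max_{1 \leq i < j \leq n+m} |\mathcal{L}(Z_i, Z_j)| \leq \delta \right\},
$$
with $\delta$ fixed strictly below $\min\{e_x, e_y, e_{xy}\}$ (all positive for the kernels under consideration, so that $\varphi$ is twice continuously differentiable near each $e_{ij}$). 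On $\mathcal{E}_\delta$ the argument $e_{ij} + v\mathcal{L}(Z_i, Z_j)$ remains in the compact set $[\min_{ij} e_{ij} - \delta,\, \max_{ij} e_{ij} + \delta] \subset (0,\infty)$, on which $|\varphi^{(1)}|$ is bounded by some constant $C_1$; hence $|\mathcal{R}_1(Z_i, Z_j)| \leq C_1 |\mathcal{L}(Z_i, Z_j)|$ pairwise.

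To show $P(\mathcal{E}_\delta^c) \to 0$ I would apply Chebyshev's inequality pairwise and a union bound, splitting the pairs into the $\binom{n}{2}$ within-$X$, $\binom{m}{2}$ within-$Y$, and $nm$ cross pairs and using $E[\mathcal{L}^2(X,X')]=\alpha_x^2$, $E[\mathcal{L}^2(Y,Y')]=\alpha_y^2$, $E[\mathcal{L}^2(X,Y)]=\alpha_{xy}^2$:
$$
P(\mathcal{E}_\delta^c) \leq \frac{1}{\delta^2} \left( \binom{n}{2} \alpha_x^2 + \binom{m}{2} \alpha_y^2 + nm\, \alpha_{xy}^2 \right) = o(1),
$$
which is exactly where the stronger Assumption \ref{ass:2} ($n^2\alpha_x^2 = o(1)$, $m^2 \alpha_y^2 = o(1)$, $nm\alpha_{xy}^2 = o(1)$) is needed; the HDLSS Assumption \ref{Ass:3} no longer suffices once the number of pairs grows.

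Finally, on $\mathcal{E}_\delta$ the reduction gives $\sup_\Gamma|R_1(\Gamma\mathbf Z)| \leq C_0 C_1 \min\{n,m\}^{-2} \sum_{i<j}|\mathcal{L}(Z_i,Z_j)|$, whose expectation I would bound using $E|\mathcal{L}| \leq \sqrt{E[\mathcal{L}^2]}$ together with the same pair counts to get
$$
E\left[ \frac{1}{\min\{n,m\}^{2}} \sum_{i<j} |\mathcal{L}(Z_i, Z_j)| \right] \lesssim \alpha_x + \alpha_y + \alpha_{xy} = o(1),
$$
the last equality again from Assumption \ref{ass:2}. Combining this Markov bound with $P(\mathcal{E}_\delta^c) \to 0$ via $P(\sup_\Gamma|R_1|>\epsilon) \leq P(\sup_\Gamma|R_1|>\epsilon, \mathcal{E}_\delta) + P(\mathcal{E}_\delta^c)$ yields $\sup_\Gamma |R_1(\Gamma\mathbf Z)| = o_p(1)$. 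The main obstacle, and the reason this is isolated as a lemma, is the simultaneous handling of the unbounded $\varphi^{(1)}$ and of all growing-in-number permutations; the event-truncation step dispatches both at once, and Assumption \ref{ass:2} is precisely calibrated so that both $P(\mathcal{E}_\delta^c)$ and the displayed expectation vanish.
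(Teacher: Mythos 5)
Your proposal is correct and takes essentially the same approach as the paper's proof: the paper likewise truncates on the high-probability event that every pairwise $|\mathcal{L}(Z_i,Z_j)|$ stays below half the corresponding limiting mean (its events $B_{\mathbf{X}\mathbf{Y}}$, $B_{\mathbf{X}}$, $B_{\mathbf{Y}}$), controls those events by a union bound and second moments under Assumption \ref{ass:2}, uses continuity of $\varphi^{(1)}$ on the resulting compact subset of $(0,\infty)$ to get the pairwise bound $|\mathcal{R}_1(Z_i,Z_j)|\le C\,|\mathcal{L}(Z_i,Z_j)|$, and concludes with the uniform weight bound $|\Pi_{ij}|\lesssim 1/(nm)$ together with Markov's inequality and $E|\mathcal{L}|\le\sqrt{E[\mathcal{L}^2]}$. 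The only differences are cosmetic: you use a single fixed threshold $\delta<\min\{e_x,e_y,e_{xy}\}$ and one event $\mathcal{E}_\delta$, where the paper uses thresholds $e_{ij}/2$ and three separate events.
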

\begin{proof}
Consider the events $B_{\mathbf{X}\mathbf{Y}}, B_{\mathbf{X}}, B_{\mathbf{Y}} $ and their complements $B^c_{\mathbf{X}\mathbf{Y}}, B_{\mathbf{X}}^c, B_{\mathbf{Y}}^c$, where
	\begin{align*}
	& B_{\mathbf{X}\mathbf{Y}} = \left\lbrace \min\limits_{1\leq s \leq n, 1 \leq t\leq m}  \mathcal{L}(X_{s}, Y_{t})  \leq -\frac{1}{2}e_{xy} \text{ or }\max\limits_{1\leq s \leq n,  1 \leq t\leq m}   \mathcal{L}(X_{s}, Y_{t})   \geq  \frac{1}{2} e_{xy} \right\rbrace, \\
	& B_{\mathbf{X}} = \left\lbrace \min\limits_{1\leq s \neq t \leq n}   \mathcal{L}(X_{s}, X_{t})   \leq -\frac{1}{2}e_x \text{ or } \max\limits_{1\leq s \neq t \leq n}   \mathcal{L}(X_{s}, X_{t})  \geq  \frac{1}{2} e_x  \right\rbrace, \\
	& B_{\mathbf{Y}} = \left\lbrace \min\limits_{1\leq s \neq t \leq m}   \mathcal{L}(Y_{s}, Y_{t})   \leq -\frac{1}{2}e_y  \text{ or } \max\limits_{1\leq s \neq t \leq m}  \mathcal{L}(Y_{s}, Y_{t})   \geq  \frac{1}{2}e_y \right\rbrace. 
	\end{align*}
	Then, under assumption \ref{ass:2}, $ \text{as } n\wedge m \wedge p \rightarrow \infty $
	\begin{align*}
	P(B_{\mathbf{X}\mathbf{Y}}) 
	& = P\left( \bigcup\limits_{1\leq s \leq n, 1 \leq t \leq m} \left\lbrace  \mathcal{L}(X_{s}, Y_{t})   \leq -\frac{1}{2}e_{xy}  \text{ or } \mathcal{L}(X_{s}, Y_{t})  \geq \frac{1}{2} e_{xy} \right\rbrace \right) \\
	& \leq \sum\limits_{1\leq s \leq n, 1 \leq t \leq m} P \left( \left| \mathcal{L}(X_{s}, Y_{t}) \right| \geq \frac{1}{2} e_{xy} \right) \\
	& \leq nm P \left(  |\mathcal{L}(X, Y) |  \geq \frac{1}{2} e_{xy} \right) \\
	& \leq \frac{ 4 nm E\left[ \mathcal{L}(X, Y)^2 \right] }{ e_{xy}^2 }  \\
	& = o(1) .
	\end{align*}
	Similarly, we can show that $ P(B_{\mathbf{X}}) = o(1) $ and $  P(B_{\mathbf{Y}}) = o(1) $. Conditioned on event $ B^c_{\mathbf{X}\mathbf{Y}}B^c_{\mathbf{X}}B^c_{\mathbf{Y}} $, we have $  e_{ij} \leq e_{ij} + v \mathcal{L}(Z_i, Z_{j}) \leq 3 e_{ij}/2 $ for any $0 \leq v \leq 1$. Suppose $\varphi^{(1)}(\cdot)$ is a continuous function on $(0, + \infty)$, we know there exist a constant $C$ such that $| \varphi^{(1)} ( e_{ij} + v \mathcal{L}(Z_i, Z_{j}) )| \leq C $ and consequently, we have
	\begin{align*}
	\begin{array}{ll}
	|\mathcal{R}_1(Z_i, Z_j)| \leq C' |\mathcal{L}(Z_i, Z_{j})|,
	\end{array}
	\end{align*}
	where $C'$ is a constant depends only on $\varphi, e_{xy}, e_x$ and $e_y$. Let $\Pi_{ij}$ corresponds to $\Gamma$,
	\begin{align*}
	& \sup_{\Gamma} \left|  \sum_{i=2}^{n+m}\sum_{j=1}^{i-1}  \Pi_{ij} \mathcal{R}_1(Z_i, Z_j) \right| \\
	\leq & \sup_{\Gamma} \left|  \sum_{i=2}^{n+m}\sum_{j=1}^{i-1}  \Pi_{ij} \mathcal{R}_1(Z_i, Z_j) \left\lbrace  \mathbb{I}_{B^c_{\mathbf{X}\mathbf{Y}}} \mathbb{I}_{B^c_{\mathbf{X}}} \mathbb{I}_{B^c_{\mathbf{Y}}} + \mathbb{I}_{B_{\mathbf{X}\mathbf{Y}}} +  \mathbb{I}_{B_{\mathbf{X}}} +  \mathbb{I}_{B_{\mathbf{Y}}} \right\rbrace  \right| \\
	\leq &  \sup_{\Gamma} \sum_{i=2}^{n+m}\sum_{j=1}^{i-1} | \Pi_{ij} \mathcal{R}_1(Z_i, Z_j)| \mathbb{I}_{ B_{\mathbf{X}\mathbf{Y}}^c}\mathbb{I}_{B^c_{\mathbf{X}}}\mathbb{I}_{B^c_{\mathbf{Y}}} + o_p(1) \\
	\leq &   \sum_{i=2}^{n+m}\sum_{j=1}^{i-1} \frac{C''}{nm} | \mathcal{R}_1(Z_i, Z_j)| \mathbb{I}_{ B_{\mathbf{X}\mathbf{Y}}^c}\mathbb{I}_{B^c_{\mathbf{X}}}\mathbb{I}_{B^c_{\mathbf{Y}}} + o_p(1),
	\end{align*}
	where $C''$ is a constant depends only on $\rho$. Then, for any $\epsilon > 0$, by Markov's inequality
	\begin{multline*}
	 P \left(  \frac{1 }{mn}  \sum_{i=2}^{n+m}\sum_{j=1}^{i-1} |\mathcal{R}_1(Z_i, Z_j)|  \mathbb{I}_{ B_{\mathbf{X}\mathbf{Y}}^c}\mathbb{I}_{B^c_{\mathbf{X}}}\mathbb{I}_{B^c_{\mathbf{Y}}} > \epsilon  \right) \\ \leq   \frac{1}{\epsilon mn}   \sum_{i=2}^{n+m}\sum_{j=1}^{i-1}  E[|\mathcal{R}_1(Z_i, Z_j)|  \mathbb{I}_{ B_{\mathbf{X}\mathbf{Y}}^c}\mathbb{I}_{B^c_{\mathbf{X}}}\mathbb{I}_{B^c_{\mathbf{Y}}}] 
	\leq  C''' \frac{  (\alpha_{xy} + \alpha_x + \alpha_y)}{\epsilon},
	\end{multline*}
	where $C'''$ is a constant depends only on $ \rho$, $\varphi, e_{xy}, e_x$ and $e_y$.
\end{proof}
(ii) Similar to the HDLSS setting, we consider the following decomposition
\begin{multline*}
\sqrt{nmp}   \text{ED}_n^{k}( \varGamma_w \mathbf{Z}) =  \sqrt{nmp} \mu_n ( \varGamma_w \mathbf Z)  + \underbrace{\sqrt{nm} \sum_{i=2}^{n+m}\sum_{j=1}^{i-1}   \varPi_{w,ij}\varphi^{(1)}\left( e_{ij} \right) \mathcal{K}(Z_i, Z_j)}_{:=\sqrt{nm}L( \varGamma_w \mathbf{ Z} )}  \\ + \underbrace{\sqrt{nm}\sum_{i=2}^{n+m}\sum_{j=1}^{i-1}\varPi_{w,ij} \varphi^{(1)}(e_{ij}) \mathcal{W}(Z_i, Z_j)}_{:=\sqrt{nm} R_l( \varGamma_w \mathbf{ Z})} + \underbrace{ \sqrt{nmp}\sum_{i=2}^{n+m}\sum_{j=1}^{i-1}  \varPi_{w,ij}\mathcal{R}_2(Z_i,Z_j)}_{:= \sqrt{nm} R_{2}( \varGamma_w \mathbf{ Z}) }.
\end{multline*}
Next, for any $- \infty < a < \infty$ and $\epsilon>0$, using the inequality $P(X \leq a) \leq P(Y \leq a + \epsilon) + P(|X-Y| > \epsilon)$, we can show that 
\begin{multline*}
P(\sqrt{nm}L(\varGamma_w \mathbf Z) \leq a - \epsilon) - P(|\sqrt{nm} R_l(\varGamma_w \mathbf{ Z}) + \sqrt{nm} R_2(\varGamma_w \mathbf{ Z})| > \epsilon) \\ \leq P(\sqrt{nmp} [  \text{ED}_n^{k}(\varGamma_w \mathbf{Z}) - \mu_n (\varGamma_w \mathbf Z) ] \leq a  ) \\ \leq P(\sqrt{nm} L(\varGamma_w \mathbf Z) \leq a + \epsilon) + P(| \sqrt{nm}R_l(\varGamma_w \mathbf{ Z}) + \sqrt{nm}R_2(\varGamma_w \mathbf{ Z})| > \epsilon).
\end{multline*}
Then, some algebra shows that 
\begin{align*}
& \sup_w \left|P(\sqrt{nmp} [  \text{ED}_n^{k}(\varGamma_w \mathbf{Z}) - \mu_n (\varGamma_w \mathbf Z) ] \leq a  ) - \Phi \left(a / \sqrt{nm\sigma^2_n(\varGamma_w \mathbf{Z})} \right)  \right| \\
\leq & \sup_w \left| P(\sqrt{nm} L(\varGamma_w \mathbf Z) \leq a - \epsilon) - \Phi \left( (a-\epsilon) / \sqrt{nm\sigma^2_n(\varGamma_w \mathbf{Z})} \right) \right|  \\
& \hspace{2cm} +  \sup_w \left|  \Phi \left( a / \sqrt{nm\sigma^2_n (\varGamma_w \mathbf{Z})} \right) - \Phi \left( (a-\epsilon) /\sqrt{nm \sigma^2_n (\varGamma_w \mathbf{Z})} \right) \right| \\
+ & \sup_w \left| P(\sqrt{nm} L(\varGamma_w \mathbf Z) \leq a + \epsilon) - \Phi \left( (a+\epsilon) /\sqrt{nm\sigma^2_n(\varGamma_w \mathbf{Z})} \right) \right|  \\
& \hspace{2cm} +  \sup_w \left|  \Phi \left( a /\sqrt{nm \sigma^2_n(\varGamma_w \mathbf{Z})} \right) - \Phi \left( (a+\epsilon) / \sqrt{nm\sigma^2_n(\varGamma_w \mathbf{Z})} \right) \right| \\
+ & 2 \sup_w P(|\sqrt{nm}R_l(\varGamma_w \mathbf{ Z}) +\sqrt{nm} R_2(\varGamma_w \mathbf{ Z})| > \epsilon).
\end{align*}
Next, by Lemma \ref{lem:high2}, \ref{lemma:lowRl} and \ref{lem:high3}.  the right hand side can be made arbitrarily small by first choose $\epsilon$ small enough, then $n,m,p$ large enough.

\begin{lemma} \label{lem:high2}
	Under Assumption \ref{ass:2} and \ref{ass:3}, 
	$
	\sup_{\Gamma} \left|\sqrt{nm} R_{2}(\Gamma \mathbf{ Z}) \right| = o_p(1).
	$
\end{lemma}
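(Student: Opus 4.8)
The plan is to follow the structure of the proof of Lemma \ref{lem:event} almost verbatim, the only new ingredients being the appearance of the squared operator $\mathcal{L}^2$ (rather than $\mathcal{L}$) through $\mathcal{R}_2$, the extra factor $\sqrt{nmp}$, and the use of the stronger Assumption \ref{ass:3}. First I would recall from the proof of Theorem \ref{thm:lowall} that
$$
\mathcal{R}_{2}(Z_i, Z_j) = \mathcal{L}^2(Z_i, Z_{j}) \int_{0}^{1} \int_{0}^{1} u\, \varphi^{(2)}\!\left(e_{ij} + uv\, \mathcal{L}(Z_i, Z_{j}) \right) dv\,du,
$$
and reuse the events $B_{\mathbf{X}\mathbf{Y}}, B_{\mathbf{X}}, B_{\mathbf{Y}}$ introduced in Lemma \ref{lem:event}. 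Since these events involve only the data and not $\Gamma$, and since Assumption \ref{ass:2} already guarantees $P(B_{\mathbf{X}\mathbf{Y}}) + P(B_{\mathbf{X}}) + P(B_{\mathbf{Y}}) = o(1)$, it suffices to control $\sqrt{nm}\,R_2(\Gamma\mathbf{Z})$ on the good event $B^c_{\mathbf{X}\mathbf{Y}}B^c_{\mathbf{X}}B^c_{\mathbf{Y}}$, uniformly in $\Gamma$.

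On the good event, $e_{ij} + uv\,\mathcal{L}(Z_i, Z_j)$ stays in a compact subinterval of $(0,\infty)$ bounded away from $0$, so the continuity of $\varphi^{(2)}$ supplies a constant $C'$ (depending only on $\varphi, e_{xy}, e_x, e_y$) with $|\mathcal{R}_2(Z_i, Z_j)| \leq C'\, \mathcal{L}^2(Z_i, Z_j)$ simultaneously for all pairs. Using the uniform weight bound $|\Pi_{ij}| \leq C''/(nm)$, which holds under HDMSS because $n \asymp m$ makes all three weight magnitudes of order $1/(nm)$, I would peel off the supremum over $\Gamma$ exactly as in Lemma \ref{lem:event}:
$$
\sup_\Gamma \big| \sqrt{nm}\,R_2(\Gamma\mathbf{Z}) \big| \leq \frac{C'C''\sqrt{nmp}}{nm} \sum_{i=2}^{n+m}\sum_{j=1}^{i-1} \mathcal{L}^2(Z_i, Z_j)\, \mathbb{I}_{B^c_{\mathbf{X}\mathbf{Y}}}\mathbb{I}_{B^c_{\mathbf{X}}}\mathbb{I}_{B^c_{\mathbf{Y}}} + o_p(1).
$$
The right-hand side no longer depends on $\Gamma$, so Markov's inequality reduces the claim to showing that its expectation is $o(1)$.

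The final and most delicate step is the expectation bookkeeping. Splitting the double sum by pair type yields $\binom{n}{2}$ terms with $E[\mathcal{L}^2(X,X')] = \alpha_x^2$, $\binom{m}{2}$ terms with $\alpha_y^2$, and $nm$ terms with $\alpha_{xy}^2$, so that the expected bound is of order
$$
\frac{\sqrt{nmp}}{nm}\left( n^2\alpha_x^2 + m^2\alpha_y^2 + nm\,\alpha_{xy}^2 \right) \asymp n\sqrt{p}\,\alpha_x^2 + m\sqrt{p}\,\alpha_y^2 + \sqrt{nmp}\,\alpha_{xy}^2,
$$
where the equivalence uses $n \asymp m$ to rewrite $\sqrt{nmp}\cdot n/m$ as a constant multiple of $n\sqrt{p}$ and $\sqrt{nmp}\cdot m/n$ as a constant multiple of $m\sqrt{p}$. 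Each of the three terms is exactly $o(1)$ by Assumption \ref{ass:3}, which completes the argument. I expect the main obstacle to be precisely this rate-matching: one must track how the extra $\sqrt{nmp}$ factor (absent in Lemma \ref{lem:event}) combines with the normalization $1/(nm)$ and the per-type pair counts so as to land exactly on the three quantities that Assumption \ref{ass:3} is designed to annihilate — this is the reason the stronger Assumption \ref{ass:3}, rather than \ref{ass:2}, is required here.
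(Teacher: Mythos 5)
Your proposal is correct and matches the paper's approach: the paper's own proof of this lemma is a one-line remark that the argument of Lemma \ref{lem:event} carries over once one observes that, conditioned on the event $B^c_{\mathbf{X}\mathbf{Y}}B^c_{\mathbf{X}}B^c_{\mathbf{Y}}$, it holds that $|\mathcal{R}_2(Z_i,Z_j)| \leq C\,\mathcal{L}^2(Z_i,Z_j)$ for some constant $C$. Your write-up simply makes explicit the details the paper leaves implicit --- the weight bound $|\Pi_{ij}| \leq C''/(nm)$, the Markov step, and the rate bookkeeping $\frac{\sqrt{nmp}}{nm}\left(n^2\alpha_x^2 + m^2\alpha_y^2 + nm\,\alpha_{xy}^2\right) = o(1)$ under Assumption \ref{ass:3} --- and those details are accurate.
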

\begin{proof}
	The proof is similar with Lemma \ref{lem:event} by observing that conditioned on event $B^c_{\mathbf{X}\mathbf{Y}}B^c_{\mathbf{X}}B^c_{\mathbf{Y}} $, it holds for some constant $C$ that
	$
	|\mathcal{R}_2(Z_i,Z_j)| \leq C \left| \mathcal{L}^2(Z_i, Z_j) \right|.
	$
\end{proof}

	\begin{lemma} \label{lemma:lowRl} Under $H_{A_l}$, 
	$
	\sup_{\Gamma \in \mathbb{P}_{n+m}}|\sqrt{nm}R_l( \Gamma \mathbf{ Z})| = o_p(1).
	$
\end{lemma}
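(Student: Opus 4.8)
The plan is to exploit that $\mathcal{W}(Z_i,Z_j)$ is, up to a deterministic constant, a sum of two \emph{single-sample} projections, combined with the fact that every row of the permuted weight matrix sums to zero. Throughout I work under $H_{A_l}$, so $e_{xy}=e_x=e_y=:e$ and $\varphi^{(1)}(e_{ij})\equiv\varphi^{(1)}(e)$ factors out, giving $R_l(\Gamma\mathbf Z)=\varphi^{(1)}(e)\sum_{j<i}\Pi_{ij}\mathcal{W}(Z_i,Z_j)$, where $(\Pi_{ij})$ is the weight matrix attached to $\Gamma$ and $T(i)$ denotes the true type ($X$ if $i\le n$, else $Y$) of $Z_i$. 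First I would record the decomposition
\begin{align*}
\mathcal{W}(Z_i,Z_j)=P_i^{(j)}+P_j^{(i)}+\eta_{ij},
\end{align*}
where $P_i^{(j)}=\tfrac{1}{\sqrt p}\sum_{u}\big(E[\psi(z_{iu},z_{ju})\mid z_{iu}]-E[\psi(z_{iu},z_{ju})]\big)$ is a mean-zero function of $Z_i$ whose distribution depends on $Z_j$ only through $T(j)$, and $\eta_{ij}=\sqrt p\,(E[\overline{\psi}(Z_i,Z_j)]-e)$ is deterministic.

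Next I would reorganize the double sum one focal sample at a time. By symmetry of $\Pi_{ij}$,
\begin{align*}
\sum_{j<i}\Pi_{ij}\big(P_i^{(j)}+P_j^{(i)}\big)=\sum_{i}\sum_{j\neq i}\Pi_{ij}P_i^{(j)}=\sum_{i}\big(w_i^X P_i^X+w_i^Y P_i^Y\big),
\end{align*}
where $P_i^X,P_i^Y$ are the two values of $P_i^{(j)}$ according to whether $T(j)=X$ or $Y$, and $w_i^X=\sum_{j\neq i:\,T(j)=X}\Pi_{ij}$, $w_i^Y=\sum_{j\neq i:\,T(j)=Y}\Pi_{ij}$. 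The decisive identity is that, for every permutation and every $i$,
\begin{align*}
w_i^X+w_i^Y=\sum_{j\neq i}\Pi_{ij}=0,
\end{align*}
because a sample placed in the $X$-block contributes $(n-1)\big(-\tfrac{2}{n(n-1)}\big)+m\cdot\tfrac{2}{mn}=0$, and symmetrically for the $Y$-block. Hence $w_i^X=-w_i^Y$, and the sum collapses to $\sum_i w_i^Y(P_i^Y-P_i^X)$: the non-vanishing ``diagonal'' projection $P_i^X$ is eliminated and only the small difference $P_i^Y-P_i^X$ survives.

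To finish I would bound the surviving pieces uniformly in $\Gamma$. The weights obey $|w_i^Y|\le m\max_{i,j}|\Pi_{ij}|=O(1/n)$ uniformly in $i$ and in $\Gamma$, while the third and fourth defining conditions of $H_{A_l}$ give, for $X$-type $i$,
\begin{align*}
E\big|P_i^Y-P_i^X\big|\le 2\sqrt p\,E\big|E[\overline{\psi}(X,Y)\mid X]-E[\overline{\psi}(X,X')\mid X]\big|=o(1/\sqrt{nm}),
\end{align*}
and the analogous bound for $Y$-type $i$. Since the weight bound is free of $\Gamma$, the supremum passes onto a quantity that no longer depends on the permutation,
\begin{align*}
\sqrt{nm}\,\sup_{\Gamma}\Big|\sum_i w_i^Y\big(P_i^Y-P_i^X\big)\Big|\le \sqrt{nm}\,O(1/n)\sum_i\big|P_i^Y-P_i^X\big|,
\end{align*}
whose expectation is $\sqrt{nm}\,O(1/n)\,(n+m)\,o(1/\sqrt{nm})=O\big((n+m)/n\big)\,o(1)=o(1)$, so Markov's inequality yields $o_p(1)$. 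For the deterministic remainder $\sum_{j<i}\Pi_{ij}\eta_{ij}$ I would split $\eta_{ij}=\tfrac12\big(\eta^{T(i)}+\eta^{T(j)}\big)+\delta_{ij}$: the symmetric part again vanishes by the zero-row-sum identity, while $\delta_{ij}=\tfrac{\sqrt p}{2}\big(2E[\overline{\psi}(X,Y)]-E[\overline{\psi}(X,X')]-E[\overline{\psi}(Y,Y')]\big)\mathbb{I}\{T(i)\neq T(j)\}=o(1/\sqrt{nm})$ by the second condition of $H_{A_l}$, and the $nm$ cross-type pairs each carry weight $O(1/(nm))$, contributing $\sqrt{nm}\,o(1/\sqrt{nm})\,O(1)=o(1)$.

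The main obstacle is precisely the supremum over the factorially many permutations in $\mathbb{P}_{n+m}$, which rules out a union bound; everything hinges on the zero-row-sum identity $\sum_{j\neq i}\Pi_{ij}=0$, valid for every $\Gamma$, which at once cancels the leading single-sample projection and renders the residual bound independent of the permutation, so that the supremum may be taken before invoking Markov's inequality. In the HDLSS regime $n,m$ are fixed and all the above bounds hold trivially with $\Gamma$-free constants.
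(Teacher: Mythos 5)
Your proof is correct and takes essentially the same route as the paper's: the paper's explicit row-by-row coefficient computation, which collapses each focal sample's weighted sum of conditional expectations to $\left(\tfrac{1}{n}-\tfrac{w}{nm}-\tfrac{w}{n(n-1)}\right)\left\{\varphi^{(1)}(e_{xy})E\left[\overline{\psi}(X_i,Y)|X_i\right]-\varphi^{(1)}(e_x)E\left[\overline{\psi}(X_i,X)|X_i\right]\right\}$, is exactly your zero-row-sum identity $\sum_{j\neq i}\Pi_{ij}=0$ in disguise, and both arguments then pass the supremum onto a permutation-free bound and finish with Markov's inequality using the second, third and fourth defining conditions of $H_{A_l}$ (together with $e_{xy}=e_x=e_y$ so that $\varphi^{(1)}$ factors out).
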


\begin{proof}
	For any fixed permutaiton marix $\Gamma \in \mathbb{P}_{n+m}$, we have
	\begin{multline*}
	\sqrt{nm}R_l( \Gamma \mathbf{ Z}) 
	=  \sqrt{nmp} \sum_{i=2}^{n+m}\sum_{j=1}^{i-1}  \Pi_{ij}\varphi^{(1)}\left(e_{ij}\right) \left(E\left[ \overline{\psi} (Z_{i}, Z_{j})| Z_{i}\right] +  E\left[ \overline{\psi} (Z_{i}, Z_{j})| Z_{j}\right]  \right)  \\
	- \sqrt{nmp} \sum_{i=2}^{n+m}\sum_{j=1}^{i-1}  \Pi_{ij}\varphi^{(1)}\left(e_{ij} \right) \left(E\left[\overline{\psi}(Z_{i}, Z_{j})\right] + e_{ij} \right) .
	\end{multline*}
	Let $w = N( \Gamma)$, similar to the computation of $\mu_{n,w}$, we obtain
	\begin{multline*}
	\sqrt{nmp}\sum_{i=2}^{n+m}\sum_{j=1}^{i-1}  \Pi_{ij}\varphi^{(1)}\left(e_{ij}\right)  E\left[\overline{\psi}(Z_{i}, Z_{j}) \right]  =  \sqrt{nmp}   \bigg\{  2\varphi^{(1)}\left(e_{xy} \right) E \left[ \overline{\psi} (X, Y) \right] \\   
	- \varphi^{(1)}\left(e_{x}\right)  E \left[ \overline{\psi} (X, X')\right]  - \varphi^{(1)}\left( e_{y}\right)  E \left[ \overline{\psi} (Y, Y')\right] \bigg\}  f (w),
	\end{multline*}
	where the right hand side is of order $o_p(1)$ under $H_{A_l}$. Let $\pi$ corresponds to $\Gamma$, then for each $1 \leq i \leq n$ such that $1 \leq  \pi(i) \leq n$, it follows from the definition of ${\Pi}_{ij}$ that
	\begin{align*}
	& \frac{1}{4} \sum_{1 \leq j \leq n+m}^{j \neq i}  \Pi_{ij}\varphi^{(1)}\left(e_{ij}\right) E\left[ \overline{\psi} (X_{i}, Z_{j})| X_{i}\right] \\
	= & -\frac{ (n-w-1)}{n(n-1)} \varphi^{(1)}\left( e_{x}\right) E\left[ \overline{\psi} (X_{i}, X)| X_{i}\right]  - \frac{w}{n(n-1)} \varphi^{(1)}\left(e_{xy}\right) E\left[ \overline{\psi} (X_{i}, Y)| X_{i}\right]  \\
	& \hspace{1cm} +  \frac{w}{nm}  \varphi^{(1)}\left(e_{x}\right) E\left[ \overline{\psi} (X_{i}, X)| X_{i}\right]  + \frac{m- w }{nm}  \varphi^{(1)}\left( e_{xy} \right) E\left[ \overline{\psi} (X_{i}, Y)| X_{i}\right] \\
	= &\left(  \frac{1}{n} - \frac{w}{nm}-\frac{w}{n(n-1)} \right) \bigg\{ \varphi^{(1)}\left(e_{xy}\right) E\left[ \overline{\psi} (X_{i}, Y)| X_{i}\right]  -  \varphi^{(1)}(e_x) E\left[ \overline{\psi} (X_{i}, X)| X_{i}\right] \bigg\},
	\end{align*} 
	which entails 
	\begin{multline*}
	\sup_{\Gamma}\left| \frac{1}{4} \sum_{1 \leq j \leq n+m}^{j \neq i} \Pi_{ij}\varphi^{(1)}\left(e_{ij}\right) E\left[ \overline{\psi} (X_{i}, Z_{j})| X_{i}\right]  \right| \leq \\ \frac{C}{\sqrt{nm}} \left| \varphi^{(1)}\left(e_{xy}\right) E\left[ \overline{\psi} (X_{i}, Y)| X_{i}\right] -  \varphi^{(1)}(e_x) E\left[ \overline{\psi} (X_{i}, X)| X_{i}\right] \right|,
	\end{multline*}
	where $C$ is a constant that only depends on $\rho$. Using the same approach, the above bound can be shown to hold for each $1 \leq i \leq n$ such that $n+1 \leq  \pi(i) \leq n+m$. Similarly, we can show that for each $ n+1 \leq i \leq n+m $,
	\begin{multline*}
	\sup_{\Gamma} \left|  \frac{1}{4} \sum_{1 \leq j \leq n+m}^{j \neq i}  \Pi_{ij}\varphi^{(1)}(e_{ij}) E\left[ \overline{\psi} (Y_{i}, Z_{j})| Y_{i}\right] \right| \\
	\leq   \frac{C}{\sqrt{nm}}  \left| \varphi^{(1)}(e_{xy}) E\left[ \overline{\psi} (Y_{i}, X)| Y_{i}\right] - \varphi^{(1)}(e_{y}) E\left[ \overline{\psi} (Y_{i}, Y)| Y_{i}\right] \right|. 
	\end{multline*}	
	Consequently, the following bound holds
	\begin{multline*}
	\sup_{\Gamma}\left| \sqrt{nmp} \sum_{i=2}^{n+m}\sum_{j=1}^{i-1}  \Pi_{ij}\varphi^{(1)}(e_{ij}) \left(E\left[ \overline{\psi} (Z_{i}, Z_{j})| Z_{i}\right] +  E\left[ \overline{\psi} (Z_{i}, Z_{j})| Z_{j}\right]  \right) \right| \\
	\leq C' \sqrt{p} \sum_{i=1}^{n} \left| \varphi^{(1)}(e_{xy}) E\left[ \overline{\psi} (X_{i}, Y)| X_{i}\right] -  \varphi^{(1)}(e_x) E\left[ \overline{\psi} (X_{i}, X)| X_{i}\right] \right| \\
	+ C' \sqrt{p} \sum_{i=n+1}^{n+m} \left| \varphi^{(1)}(e_{xy}) E\left[ \overline{\psi} (Y_{i}, X)| Y_{i}\right] - \varphi^{(1)}(e_{y}) E\left[ \overline{\psi} (Y_{i}, Y)| Y_{i}\right] \right|,
	\end{multline*} 
	where $C'$ is a constant. Finally, an application of Markov's inequality shows that the right hand side is of order $o_p(1)$ under $H_{A_l}$. 
\end{proof}

	\begin{lemma}\label{lem:high3} Under Assumptions \ref{ass:1}.  Let $ \Gamma_{1}, \Gamma_{2} \in \mathbb{P}_{n+m}$. Then, for any constants $a_1,a_2, b$, we have
	\begin{align*}
	& \sup_{\Gamma_1, \Gamma_2} \left| P(a_1\sqrt{nm} L(\Gamma_{1} \mathbf Z) +a_2\sqrt{nm} L(\Gamma_{2} \mathbf Z) \leq b ) - \Phi \left( \frac{b }{\eta_{a_1, a_2}(\Gamma_1, \Gamma_2) } \right) \right| \\
	& \hspace{1cm} \leq
	C \bigg\{  \max\limits_{\Lambda_1, \Lambda_2\in \{ X,Y \} } E \left[ \mathcal{K}^4(\Lambda_1, \Lambda_2') \right]/n^2 \\ &\hspace{1cm} \;\;\; + \max\limits_{\Lambda_1, \Lambda_2, \Lambda_3\in \{ X,Y \} } E \left[ \mathcal{K}^2(\Lambda_1, \Lambda_3'') \mathcal{K}^2(\Lambda_2', \Lambda_3'') \right]/n \\ &\hspace{1cm} \;\;\; + \max_{\Lambda_1, \Lambda_2, \Lambda_3, \Lambda_4\in \{ X,Y \} } E \left[  \mathcal{K}(\Lambda_1, \Lambda_3'') \mathcal{K}(\Lambda_1, \Lambda_4''')  \mathcal{K}(\Lambda_2', \Lambda_4''') \mathcal{K}(\Lambda_2', \Lambda_3'')  \right] \\ 
	& \hspace{1cm} \;\;\; + (v_x - var [ \mathcal{K}(X, X') ])^2 + (v_y - var[ \mathcal{K}(Y, Y') ])^2 + (v_{xy} - var [ \mathcal{K}(X, Y) ])^2
	\bigg\}^{1/5},
	\end{align*}
	where $C$ is a constant depend on $\varphi, \rho, e_{x},e_{y}$ and $e_{xy}$ only; $\eta_{a_1, a_2}(\Gamma_1, \Gamma_2)$ is defined as
	\begin{align*}
	[\eta_{a_1, a_2}(\Gamma_1, \Gamma_2)]^2 = nm  \sum_{i=2}^{n+m}\sum_{j=1}^{i-1}  (a_1\Pi_{1,ij} + a_2 \Pi_{2,ij} )^2 [\varphi^{(1)}(e_{ij})]^2 v_{ij},
	\end{align*}
	where $\Pi_{1,ij}, \Pi_{2, ij}$ correspond to $\Gamma_1, \Gamma_2$ respectively and
	\begin{align*}
	v_{ij} = \left\lbrace  \begin{array}{ll}
	v_{x}, & \text{ if } 1 \leq i, j \leq n, \\
	v_{y}, & \text{ if } n+1 \leq i, j \leq n+m, \\
	v_{xy}, & \text{ otherwise}.
	\end{array} \right. 
	\end{align*} 
\end{lemma}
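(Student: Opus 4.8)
The plan is to recognize that the quantity $a_1\sqrt{nm}\,L(\Gamma_1\mathbf{Z})+a_2\sqrt{nm}\,L(\Gamma_2\mathbf{Z})$ is a linear combination of the double-centered kernels $\mathcal{K}(Z_i,Z_j)$ with fixed, data-independent weights, and to establish its asymptotic normality through a \emph{martingale} central limit theorem equipped with a Berry--Esseen rate. Writing $c_{ij}=\sqrt{nm}\,(a_1\Pi_{1,ij}+a_2\Pi_{2,ij})\,\varphi^{(1)}(e_{ij})$, the target is $S:=\sum_{i=2}^{n+m}\sum_{j=1}^{i-1}c_{ij}\mathcal{K}(Z_i,Z_j)=\sum_{i=2}^{n+m}D_i$ with $D_i:=\sum_{j=1}^{i-1}c_{ij}\mathcal{K}(Z_i,Z_j)$. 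The first and crucial step is to verify that $\{D_i\}$ is a martingale difference array with respect to the filtration $\mathcal{F}_i=\sigma(Z_1,\dots,Z_i)$. This follows from the double-centering in the definition \eqref{eq:K} of $\mathcal{K}$: for $j<i$ one has $E[\mathcal{K}(Z_i,Z_j)\mid Z_j]=0$ by direct computation, since the conditional mean over the free argument $Z_i$ cancels against the centering terms; consequently $E[D_i\mid\mathcal{F}_{i-1}]=0$ because $Z_i$ is independent of $\mathcal{F}_{i-1}$.

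Second, I would identify the normalizing variance. By Proposition~2.2.1 of \cite{zhu2019distance} the kernels $\{\mathcal{K}(Z_i,Z_j)\}$ are pairwise uncorrelated across distinct index pairs, so $\var(S)=\sum_{i<j}c_{ij}^2\,\var[\mathcal{K}(Z_i,Z_j)]$; replacing the finite-$p$ variances $\var[\mathcal{K}(X,Y)]$, $\var[\mathcal{K}(X,X')]$, $\var[\mathcal{K}(Y,Y')]$ by their limits $v_{xy},v_x,v_y$ yields exactly $\eta_{a_1,a_2}^2(\Gamma_1,\Gamma_2)$. The discrepancy between these is absorbed by the terms $(v_{xy}-\var[\mathcal{K}(X,Y)])^2+(v_x-\var[\mathcal{K}(X,X')])^2+(v_y-\var[\mathcal{K}(Y,Y')])^2$ in the bound. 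I would then invoke a quantitative martingale CLT of Haeusler type, which bounds $\sup_b|P(S/\eta\le b)-\Phi(b)|$ by the $1/5$-power of the sum of a Lyapunov term $\sum_i E[(D_i/\eta)^4]$ and a conditional-variance-fluctuation term $E\big[\big(\sum_i E[(D_i/\eta)^2\mid\mathcal{F}_{i-1}]-1\big)^2\big]$; the exponent $1/5$ in the statement signals precisely this route.

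Third, I would bound each ingredient and match it to the three moment quantities displayed in the statement. The key scaling input is that under the HDMSS regime every weight $\Pi_{\cdot,ij}$ is of order $1/(nm)$, hence $c_{ij}=O(1/\sqrt{nm})$ and $\eta^2=O(1)$. Expanding $E[D_i^4]$, the purely diagonal contributions $\sum_j c_{ij}^4 E[\mathcal{K}^4]$ accumulate to order $E[\mathcal{K}^4(\Lambda_1,\Lambda_2')]/n^2$, producing the first term. For the conditional variance, $E[D_i^2\mid\mathcal{F}_{i-1}]=\sum_{j,k<i}c_{ij}c_{ik}E[\mathcal{K}(Z_i,Z_j)\mathcal{K}(Z_i,Z_k)\mid\mathcal{F}_{i-1}]$, whose mean reproduces $\eta^2$ while its fluctuation is governed, after squaring and taking expectations, by the remaining two quantities $E[\mathcal{K}^2(\Lambda_1,\Lambda_3'')\mathcal{K}^2(\Lambda_2',\Lambda_3'')]/n$ (the $j=k$ terms) and $E[\mathcal{K}(\Lambda_1,\Lambda_3'')\mathcal{K}(\Lambda_1,\Lambda_4''')\mathcal{K}(\Lambda_2',\Lambda_4''')\mathcal{K}(\Lambda_2',\Lambda_3'')]$ (the $j\ne k$ terms). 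Uniformity over $\Gamma_1,\Gamma_2$ is automatic once one notes that the $\Pi$-weights take only finitely many values of the same order, so that the constant $C$ depends on $\varphi,\rho,e_x,e_y,e_{xy}$ only.

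The main obstacle I anticipate is the bookkeeping in the fourth-order expansions: one must classify the index patterns among $\{i,j,k,l\}$ (and, for the conditional variance, the pairs arising in $E[D_i^2\mid\mathcal{F}_{i-1}]$), retain only those configurations whose expectation survives the double-centering, and show that each surviving configuration is dominated by exactly one of the three moment quantities with the correct power of $n$. Carrying the independent-copy notation $\Lambda',\Lambda'',\Lambda'''$ through these expansions, and confirming that the off-diagonal cross terms of the conditional variance generate no additional non-negligible pieces, is where the real work lies; the final application of the martingale Berry--Esseen inequality is then routine.
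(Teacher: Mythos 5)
Your proposal is correct and follows essentially the same route as the paper's proof: the paper likewise writes $a_1\sqrt{nm}\,L(\Gamma_1\mathbf{Z})+a_2\sqrt{nm}\,L(\Gamma_2\mathbf{Z})$ as a martingale array $S_{n+m,l}=\sum_{i=2}^{l}\xi_{n+m,i}$ with respect to $\mathcal{F}_{n+m,l}=\sigma(Z_1,\dots,Z_l)$ and applies Hall's martingale Berry--Esseen bound with $\delta=1$ (the source of the $1/5$ exponent), bounding the Lyapunov term $\sum_i E[\xi_{n+m,i}^4]$ and the conditional-variance fluctuation $\var\big[\sum_i E[\xi_{n+m,i}^2\mid\mathcal{F}_{n+m,i-1}]\big]$ by exactly the three displayed moment quantities, with the squared differences such as $(v_x-\var[\mathcal{K}(X,X')])^2$ absorbing the replacement of the finite-$p$ variances by their limits in $\eta_{a_1,a_2}$. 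The fourth-order index-pattern bookkeeping you flag as the main remaining work is carried out in the paper via a short covariance-classification lemma for $\mathcal{G}_{\Lambda}(Z_{j_1},Z_{j_2})=E[\mathcal{K}(\Lambda,Z_{j_1})\mathcal{K}(\Lambda,Z_{j_2})\mid Z_{j_1},Z_{j_2}]$, which identifies the surviving configurations exactly as you describe.
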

\begin{proof}
	Notice that 
	\begin{align*}
	\sqrt{nm} \left( a_1  L(\Gamma_{1} \mathbf Z) +a_2 L(\Gamma_{2} \mathbf Z)\right)  = \sqrt{nm} \sum_{i=2}^{n+m}\sum_{j=1}^{i-1}  (a_1\Pi_{1,ij} + a_2 \Pi_{2,ij} ) \varphi^{(1)}(e_{ij}) \mathcal{K}(Z_i, Z_j).
	\end{align*}
	Then, for notational convenience, set
	$$
	\mathcal{H}(Z_i, Z_j) =  \sqrt{nm} (a_1\Pi_{1,ij} + a_2 \Pi_{2,ij} ) \varphi^{(1)}(e_{ij}) \mathcal{K}(Z_i, Z_j),
	$$ 
	and $S_{n+m,l} = \sum_{i=2}^{l} \xi_{n+m,i}$, where $\xi_{n+m,i} = \sum_{j=1}^{i-1} \mathcal{H}(Z_i, Z_j)$. Next, let 
	$$ 
	\mathcal{F}_{n+m,l} = \sigma(Z_{1}, Z_2 \cdots, Z_{l})
	$$ 
	be the $\sigma$-algebra generated by $ Z_{1}, \cdots, Z_{l} $, we have $\{ S_{n+m, l}, \mathcal{F}_{n+m, l}, 1 \leq l \leq n+m \}$ is a martingale array and thus we can apply the Berry-Esseen type bound for martingale sequences [Theorem 1 of \cite{hall1981rates}]. By setting $m=0$ and $\delta =1$ in Theorem 1 of \cite{hall1981rates}, we compute  
	\begin{align*}
	\sum_{i = 2}^{ n+m} E \left[  \xi_{n+m,i}^2  \right], var \left[ \sum_{i = 2}^{ n+m} E \left[\left.  \xi_{n+m,i}^2 \right| \mathcal{F}_{n+m,i-1} \right]\right] \text{ and } \sum_{i = 2}^{ n+m} E \left[  \xi_{n+m,i}^4  \right] 
	\end{align*}
	Firstly, due to the property of double centering, E$[ \mathcal{H}(Z_i, Z_j)\mathcal{H}(Z_{i'}, Z_{j'}) ] \neq 0$ only when $ \{i,j\} = \{i',j'\} $. Then, let $\eta_{a_1, a_2}(\Gamma_1, \Gamma_2)$ be in Theorem 1 of \cite{hall1981rates}
	\begin{align*}
	\frac{\eta^2_{a_1, a_2}(\Gamma_1, \Gamma_2)}{nm} := & \lim\limits_{p \rightarrow \infty} \frac{\sum_{i = 2}^{ n+m} E \left[  \xi_{n+m,i}^2  \right]}{nm} =  \sum_{i=2}^{n+m}\sum_{j=1}^{i-1}  (a_1\Pi_{1,ij} + a_2 \Pi_{2,ij} )^2 [\varphi^{(1)}(e_{ij})]^2 v_{ij}.
	\end{align*}
	Then, to calculate the variance, notice that 
	\begin{align*}
	var \left[ \sum_{i = 2}^{ n+m} E \left[\left.  \xi_{n+m,i}^2 \right| \mathcal{F}_{n+m,i-1} \right]\right] = \sum_{i_1, i_2=2}^{n+m} \sum_{j_1, j_2=1}^{i_1-1}  \sum_{j_3, j_4=1}^{i_2-1} \Theta(i_1, i_2; j_1, j_2, j_3, j_4),
	\end{align*}
	where $\Theta(i_1, i_2; j_1, j_2, j_3, j_4)$ is defined as
	\begin{multline*}
	\Theta(i_1, i_2; j_1, j_2, j_3, j_4) = \\ cov\left[E\left[ \mathcal{H}(Z_{i_1}, Z_{j_1}) \mathcal{H}(Z_{i_1}, Z_{j_2})| Z_{j_1}, Z_{j_2}\right], E\left[ \mathcal{H}(Z_{i_2}, Z_{j_3}) \mathcal{H}(Z_{i_2}, Z_{j_4})| Z_{j_3}, Z_{j_4}\right] \right].
	\end{multline*}
	Next, for any $ 1 \leq j_1, j_2 \leq n+m $, $\Lambda \in \{ X,Y \}$, denote 
	\begin{align*}
	\mathcal{G}_{\Lambda}(Z_{j_1}, Z_{j_2}) &= E[  \mathcal{K}(\Lambda, Z_{j_1}) \mathcal{K}(\Lambda, Z_{j_2}) | Z_{j_1}, Z_{j_2} ].
	\end{align*} 
	To bound each $ \Theta(i_1, i_2; j_1, j_2, j_3, j_4)$, we need to study the covariance between $ \mathcal{G}_{\Lambda_1}(Z_{j_1}, Z_{j_2})$ and $\mathcal{G}_{\Lambda_2}(Z_{j_1'}, Z_{j_2'}) $.
	\begin{lemma} \label{lemma:cov}
		Then, for any $ 1 \leq j_1, j_2, j_1', j_2' \leq n+m $, $\Lambda_1, \Lambda_2 \in \{ X,Y \}$,  we have
		\begin{multline*}
		cov\left[ \mathcal{G}_{\Lambda_1}(Z_{j_1}, Z_{j_2}), \mathcal{G}_{\Lambda_2}(Z_{j_1'}, Z_{j_2'}) \right] \\ =  \left\lbrace 
		\arraycolsep=1.4pt\def\arraystretch{1.6}
		\begin{array}{ll}
		E \left[ \mathcal{K}^2(\Lambda_1, Z_{j_1}) \mathcal{K}^2(\Lambda_2', Z_{j_1}) \right] - E \left[ \mathcal{K}^2(\Lambda_1,Z_{j_1}) \right] E \left[ \mathcal{K}^2(\Lambda_2, Z_{j_1}) \right] , & j_1=j_2=j_1'=j_2'; \\
		E \left[  \mathcal{K}(\Lambda_1, Z_{j_1}) \mathcal{K}(\Lambda_1, Z_{j_2})  \mathcal{K}(\Lambda_2', Z_{j_2}) \mathcal{K}(\Lambda_2', Z_{j_1})  \right] , & j_1=j_1'\neq j_2=j_2'; \\
		E \left[  \mathcal{K}(\Lambda_1, Z_{j_1}) \mathcal{K}(\Lambda_1, Z_{j_2})  \mathcal{K}(\Lambda_2', Z_{j_2}) \mathcal{K}(\Lambda_2', Z_{j_1})  \right] , & j_1=j_2'\neq j_2=j_1'; \\
		0, & \text{otherwise}.
		\end{array}
		\right. 
		\end{multline*}
	\end{lemma}
	\begin{proof}
		If $j_1 = j_2' \neq j_2 = j_1'$, 
		\begin{align*}
		& E \left[\mathcal{G}_{\Lambda_1}(Z_{j_1}, Z_{j_2})  \mathcal{G}_{\Lambda_2}(Z_{j_1'}, Z_{j_2'}) \right] \\
		= & E \left[ E\left[ \left. \mathcal{K}(\Lambda_1, Z_{j_1}) \mathcal{K}(\Lambda_1, Z_{j_2}) \right| Z_{j_1}, Z_{j_2}\right] E\left[ \left.  \mathcal{K}(\Lambda_2', Z_{j_1'}) \mathcal{K}(\Lambda_2', Z_{j_2'}) \right| Z_{j_1}, Z_{j_2} \right]\right] \\
		= &  E \left[ E\left[ \left.  \mathcal{K}(\Lambda_1, Z_{j_1}) \mathcal{K}(\Lambda_1, Z_{j_2})  \mathcal{K}(\Lambda_2', Z_{j_2}) \mathcal{K}(\Lambda_2', Z_{j_1}) \right|  Z_{j_1}, Z_{j_2} \right] \right]\\
		= & E \left[  \mathcal{K}(\Lambda_1, Z_{j_1}) \mathcal{K}(\Lambda_1, Z_{j_2})  \mathcal{K}(\Lambda_2', Z_{j_2}) \mathcal{K}(\Lambda_2', Z_{j_1})  \right] 
		\end{align*}
		It can be shown similarly for cases $ j_1=j_2=j_3=j_4$ and $  j_1=j_1'\neq j_2=j_2'$. Next, we show that for other cases, the covariance is 0. We take $j_1=j_1'$, $j_1 \neq j_2$, $j_1 \neq j_2'$, $j_2 \neq j_2'$ as an example 
		\begin{align*}
		& E \left[\mathcal{G}_{\Lambda_1}(Z_{j_1}, Z_{j_2})  \mathcal{G}_{\Lambda_2}(Z_{j_1'}, Z_{j_2'}) \right] \\
		= & E \left[ E\left[ \left. \mathcal{K}(\Lambda_1, Z_{j_1}) \mathcal{K}(\Lambda_1, Z_{j_2}) \right| Z_{j_1}, Z_{j_2}\right] E\left[ \left. \mathcal{K}(\Lambda_2', Z_{j_1}) \mathcal{K}(\Lambda_2', Z_{j_2'}) \right| Z_{j_1}, Z_{j_2'} \right] \right] \\
		= &  E \left[ E\left[ \left.  \mathcal{K}(\Lambda_1, Z_{j_1}) \mathcal{K}(\Lambda_1, Z_{j_2})  \mathcal{K}(\Lambda_2', Z_{j_1}) \mathcal{K}(\Lambda_2', Z_{j_2'}) \right|  Z_{j_1}, Z_{j_2}, Z_{j_2'} \right] \right] \\
		= &  E \left[  \mathcal{K}(\Lambda_1, Z_{j_1}) \mathcal{K}(\Lambda_1, Z_{j_2})  \mathcal{K}(\Lambda_2', Z_{j_1}) \mathcal{K}(\Lambda_2', Z_{j_2'})  \right] \\
		= & E \left[\mathcal{K}(\Lambda_1, Z_{j_1}) \mathcal{K}(\Lambda_2', Z_{j_1})E\left[ \left.  \mathcal{K}(\Lambda_1, Z_{j_2})    \right| \Lambda_1, \Lambda_2', Z_{j_1} \right]E \left[ \left. \mathcal{K}(\Lambda_2', Z_{j_2'}) \right| \Lambda_1, \Lambda_2', Z_{j_1} \right]  \right] \\
		= & 0.
		\end{align*}
	\end{proof}
	Next, we can bound $var \big[ \sum_{i = 2}^{ n+m} E \left[\left.  \xi_{n+m,i}^2 \right| \mathcal{F}_{n+m,i-1} \right]\big]$ as
	\begin{align*}
	&\sum_{i_1, i_2=1}^{n+m} \sum_{j_1, j_2=1}^{i_1-1}  \sum_{j_3, j_4=1}^{i_2-1} \Theta(i_1, i_2; j_1, j_2, j_3, j_4) \\
	= & \sum_{i = 1}^{n+m} \Bigg\{  \sum_{j=1}^{i-1} \Theta(i, i; j, j, j, j)  + 2 \sum_{1 \leq j_1 \neq j_2 \leq i-1} \Theta(i, i; j_1, j_2, j_1, j_2) \Bigg\} \\
	&\hspace{1cm} +2 \sum_{1 \leq i_1 <i_2 \leq n+m } \Bigg\{  \sum_{j=1}^{i_1-1} \Theta(i_1, i_2; j, j, j, j)  + 2 \sum_{1 \leq j_1 \neq j_2 \leq i_1-1} \Theta(i_1, i_2; j_1, j_2, j_1, j_2) \Bigg\} \\
	= & O\Big( \max_{\Lambda_1, \Lambda_2, \Lambda_3\in \{ X,Y \} } E \left[ \mathcal{K}^2(\Lambda_1, \Lambda_3'') \mathcal{K}^2(\Lambda_2', \Lambda_3'') \right]/n \\
	& \hspace{1cm} + \max_{\Lambda_1, \Lambda_2, \Lambda_3, \Lambda_4\in \{ X,Y \} } E \left[  \mathcal{K}(\Lambda_1, \Lambda_3'') \mathcal{K}(\Lambda_1, \Lambda_4''')  \mathcal{K}(\Lambda_2', \Lambda_4''') \mathcal{K}(\Lambda_2', \Lambda_3'')  \right]\Big) 
	\end{align*}
	Finally, to find the upper bound of $\sum_{i = 2}^{ n+m} E \left(  \xi_{n+m,i}^4  \right)$,
	\begin{align*}
	& \sum_{i = 2}^{ n+m} E \left(  \xi_{n+m,i}^4  \right) \\
	= & \sum_{i = 2}^{ n+m} \sum_{j_1, j_2, j_3, j_4=1}^{i-1} E\left[ \mathcal{H}(Z_{i}, Z_{j_1}) \mathcal{H}(Z_{i}, Z_{j_2}) \mathcal{H}(Z_{i}, Z_{j_3}) \mathcal{H}(Z_{i}, Z_{j_4}) \right] \\
	= & \sum_{i = 2}^{ n+m} \sum_{j=1}^{i-1} E\left[ \mathcal{H}^4(Z_{i}, Z_{j})  \right] + 6 \sum_{i = 2}^{ n+m} \sum_{1 \leq j_1 < j_2 \leq i-1}  E\left[ \mathcal{H}^2(Z_{i}, Z_{j_1}) \mathcal{H}^2(Z_{i}, Z_{j_2}) \right] \\
	= &  O\left(  \max_{\Lambda_1, \Lambda_2\in \{ X,Y \} } E \left[ \mathcal{K}^4(\Lambda_1, \Lambda_2') \right]/n^2 \right) \\
	&  \hspace{1cm}+ O\left(\max_{\Lambda_1,\Lambda_2, \Lambda_3\in \{ X,Y \} } E \left[ \mathcal{K}^2(\Lambda_1, \Lambda_3'') \mathcal{K}^2(\Lambda_2', \Lambda_3'') \right]/n \right).
	\end{align*}  
	Combining the above bounds, the lemma is a consequence of Theorem 1 in \cite{hall1981rates}.
\end{proof}

\subsection{Proof of Theorem \ref{lem:rand}}
(i) For a random permutation matrix $\mathbf{ \Gamma} \sim \text{Uniform}(\mathbb P_{n+m})$, it follows from Lemma \ref{lem:event} that
\begin{align*}
\text{ED}_n^{k}( \mathbf \Gamma \mathbf{Z})  = \mu_n (\mathbf \Gamma \mathbf Z) + R_{1}(\mathbf \Gamma \mathbf Z )   = \left( 2\varphi \left(e_{xy} \right) - \varphi \left(e_x \right) - \varphi \left(e_y \right) \right) f(W) + o_p(1)
\end{align*}
where $W = N(\mathbf{\Gamma}) \sim \text{Hypergeometric}(m+n,m,n)$.  From the normal limit of hypergeometric distribution \cite{lahiri2006sub}, we know that
	\begin{align*}
	\frac{W}{\sqrt{nm}} \overset{p}{\rightarrow} \frac{\sqrt{\rho}}{1 + \rho}.
	\end{align*}
	Next, some algebra shows that $ f(W) \overset{p}{\rightarrow} 0$ and so the result is proved. 
	
	\noindent (ii) 
	Recall that we can decompose the sample energy distance as 
	\begin{multline*}
	\sqrt{nmp}   \text{ED}_n^{k}(\mathbf \Gamma \mathbf{Z}) =  \sqrt{nmp} \mu_n (\mathbf \Gamma \mathbf Z)  + \underbrace{\sqrt{nm} \sum_{i=2}^{n+m}\sum_{j=1}^{i-1}   \mathbf \Pi_{ij}\varphi^{(1)}\left( e_{ij} \right) \mathcal{K}(Z_i, Z_j)}_{:= \sqrt{nm}L(\mathbf{\Gamma} \mathbf{ Z} )}  \\ + \underbrace{\sqrt{nm}\sum_{i=2}^{n+m}\sum_{j=1}^{i-1} \mathbf \Pi_{ij} \varphi^{(1)}(e_{ij}) \mathcal{W}(Z_i, Z_j)}_{:= \sqrt{nm}R_l(\mathbf \Gamma \mathbf{ Z})} + \underbrace{ \sqrt{nmp}\sum_{i=2}^{n+m}\sum_{j=1}^{i-1} \mathbf \Pi_{ij}\mathcal{R}_2(Z_i,Z_j)}_{:=\sqrt{nm} R_{2}( \mathbf \Gamma \mathbf{ Z}) }.
	\end{multline*}
	The result is a consequence of Lemma \ref{lemma:lowRl}, \ref{lem:high2} and \ref{lem:joint}.
	
		\begin{lemma} \label{lem:joint}
		Under Assumptions \ref{ass:1} and \ref{ass:4},
		\begin{align*}
		\sqrt{nm} \left( \begin{array}{l}
		L(\mathbf \Gamma \mathbf{Z}) \\
		L(\mathbf \Gamma' \mathbf{Z}) 
		\end{array}  \right) \overset{d}{\rightarrow} N \left(0,  
		\left( \begin{array}{cc}
		\sigma^2 &  0 \\
		0 & \sigma^2
		\end{array} \right) \right) 
		\end{align*}
		where $\mathbf \Gamma'$ is independent copy of $\mathbf \Gamma$ and $\sigma^2$ is the asymptotic variance defined as
		\begin{align*}
		\sigma^2 := 4 v_{xy} [\varphi^{(1)}(e_{xy})]^2 + 2 \rho v_x[ \varphi^{(1)}(e_{x})]^2 + \frac{2}{\rho} v_y[ \varphi^{(1)}(e_{y})]^2.
		\end{align*}	
	\end{lemma}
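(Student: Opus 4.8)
The plan is to establish the bivariate central limit theorem through the Cram\'er--Wold device, reducing it to the one--dimensional Berry--Esseen bound already proved in Lemma~\ref{lem:high3}, and then to show that the limiting variance of an arbitrary linear combination splits additively, which forces the asymptotic covariance to be diagonal. Fix constants $a_1,a_2$; it suffices to prove that $a_1\sqrt{nm}\,L(\mathbf \Gamma \mathbf Z)+a_2\sqrt{nm}\,L(\mathbf \Gamma' \mathbf Z)$ converges in distribution to $N\!\left(0,(a_1^2+a_2^2)\sigma^2\right)$, since the limiting covariance matrix is $\sigma^2 I$ precisely when every such combination has variance $(a_1^2+a_2^2)\sigma^2$.

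First I would condition on the pair of independent uniform permutation matrices $\mathbf\Gamma,\mathbf\Gamma'$. For \emph{fixed} matrices $\Gamma_1,\Gamma_2\in\mathbb P_{n+m}$, Lemma~\ref{lem:high3} supplies a Berry--Esseen bound, uniform in $\Gamma_1,\Gamma_2$, comparing the conditional distribution function of the linear combination with $\Phi\!\left(\,\cdot\,/\eta_{a_1,a_2}(\Gamma_1,\Gamma_2)\right)$. Under Assumptions~\ref{ass:1} and \ref{ass:4} every term in that bound vanishes: the three moment terms $E[\mathcal K^4]/n^2$, $E[\mathcal K^2\mathcal K^2]/n$ and $E[\mathcal K\mathcal K\mathcal K\mathcal K]$ by Assumption~\ref{ass:4}, and the variance-matching terms $(v_x-\var[\mathcal K(X,X')])^2$, etc., by Assumption~\ref{ass:1}. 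Because the bound is uniform over permutations, it remains valid after substituting the random $\mathbf\Gamma,\mathbf\Gamma'$, so the conditional law is uniformly close to $\Phi\!\left(\,\cdot\,/\eta_{a_1,a_2}(\mathbf\Gamma,\mathbf\Gamma')\right)$.

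It then remains to identify the random normalizer $\eta_{a_1,a_2}^2(\mathbf\Gamma,\mathbf\Gamma')$. Expanding the square decomposes it into two diagonal pieces, $a_1^2\,nm\,\sigma_{n,N(\mathbf\Gamma)}^2$ and $a_2^2\,nm\,\sigma_{n,N(\mathbf\Gamma')}^2$, plus a cross piece $2a_1a_2\,T$ with $T:=nm\sum_{i<j}\Pi_{1,ij}\Pi_{2,ij}[\varphi^{(1)}(e_{ij})]^2 v_{ij}$, where I have used that the diagonal weights satisfy $nm\sum_{i<j}\Pi_{s,ij}^2[\varphi^{(1)}(e_{ij})]^2 v_{ij}=nm\,\sigma_{n,N(\mathbf\Gamma_s)}^2$. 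For the diagonal pieces I would invoke the normal limit of the hypergeometric distribution, $N(\mathbf\Gamma)/\sqrt{nm}\overset{p}{\to}\sqrt\rho/(1+\rho)$, together with the explicit polynomial form of $\sigma_{n,w}^2$ from Theorem~\ref{thm:lowall}, to conclude $nm\,\sigma_{n,N(\mathbf\Gamma)}^2\overset{p}{\to}\sigma^2$; this is exactly the computation recorded after Theorem~\ref{lemma:power2}. Hence the diagonal contribution converges in probability to $(a_1^2+a_2^2)\sigma^2$.

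The main obstacle is showing $T\overset{p}{\to}0$, which is what diagonalizes the limiting covariance and renders the two permuted statistics asymptotically independent. Since $\mathbf\Gamma\perp\mathbf\Gamma'$ and a direct count gives $E[\Pi_{s,ij}]=0$ for every pair $(i,j)$, we have $E[T]=0$, so it suffices to show $\var(T)\to 0$. Writing $c_{ij}:=[\varphi^{(1)}(e_{ij})]^2 v_{ij}$, which is bounded, and using independence of the two permutations, $\var(T)=(nm)^2\sum_{(i,j),(k,l)}\big(E[\Pi_{1,ij}\Pi_{1,kl}]\big)^2 c_{ij}c_{kl}$, and the estimate reduces to combinatorial bookkeeping of $\cov(\Pi_{1,ij},\Pi_{1,kl})$ according to how many indices $\{i,j\}$ and $\{k,l\}$ share. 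The delicate case is disjoint index pairs, of which there are $O((nm)^2)$: here the sampling-without-replacement correlation of the underlying placement indicators is only $O\!\left((n+m)^{-1}\right)$, giving $\cov(\Pi_{1,ij},\Pi_{1,kl})=O\!\left((nm)^{-2}(n+m)^{-1}\right)$, precisely the extra smallness needed to beat the $O((nm)^2)$ count; the equal and one-shared-index cases are smaller because their quadruple counts are only $O(nm)$ and $O((nm)^{3/2})$. I expect this variance computation, tracking the order in $n,m$ across the three overlap patterns, to be the technical heart of the argument. Finally, combining the uniform Berry--Esseen bound with $\eta_{a_1,a_2}^2(\mathbf\Gamma,\mathbf\Gamma')\overset{p}{\to}(a_1^2+a_2^2)\sigma^2$ and taking expectation over the permutations by bounded convergence, the unconditional distribution function converges to $\Phi\!\left(b/(\sqrt{a_1^2+a_2^2}\,\sigma)\right)$, and the Cram\'er--Wold device yields the stated bivariate normal limit with diagonal covariance.
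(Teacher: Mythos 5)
Your proposal matches the paper's proof essentially step for step: the Cram\'er--Wold device, the uniform Berry--Esseen bound of Lemma \ref{lem:high3} (whose error terms vanish under Assumptions \ref{ass:1} and \ref{ass:4}), the hypergeometric limit $N(\mathbf \Gamma)/\sqrt{nm}\overset{p}{\rightarrow}\sqrt{\rho}/(1+\rho)$ to get $nm\,\sigma^2_n(\mathbf\Gamma\mathbf Z)\overset{p}{\rightarrow}\sigma^2$ for the diagonal pieces of $\eta^2_{a_1,a_2}$, and an $L^2$ bound on the cross term driven by $E[\mathbf\Pi_{i_1j_1}\mathbf\Pi_{i_2j_2}]=O(1/n^5)$ for disjoint index pairs. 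The only cosmetic difference is that the paper bounds $E[T^2]$ directly rather than splitting into mean and variance (identical since $E[T]=0$), and in the resulting $O(1/n)$ bound it is actually the one-shared-index case that dominates, not the disjoint case you flag as delicate --- though your point stands that the disjoint case is the one needing the $O((n+m)^{-1})$ correlation decay to avoid an $O(1)$ contribution.
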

	\begin{proof} We apply the Cram\'{e}r-Wold device. For any constants $a_1, a_2$, we have
		\begin{align*}
		\eta^2_{a_1, a_2}(\mathbf{\Gamma},\mathbf{\Gamma}' )  & =  nm  \sum_{i=2}^{n+m}\sum_{j=1}^{i-1}  (a_1 \mathbf \Pi_{ij} + a_2 \mathbf \Pi_{ij}' )^2 [\varphi^{(1)}(e_{ij})]^2 v_{ij} \\
		& = nm  \sum_{i=2}^{n+m}\sum_{j=1}^{i-1} \left(a_1^2 \mathbf \Pi_{ij} ^2 + a_2^2 \mathbf (\mathbf \Pi_{ij}')^2 + 2a_1 a_2 \mathbf \Pi_{ij} \mathbf \Pi_{ij}' \right)   [\varphi^{(1)}(e_{ij})]^2 v_{ij} .
		\end{align*}
		Notice that for $\{ i_1, j_1\} \cap \{ i_2, j_2 \} = \emptyset$, it can be shown that
		$
		E[ \mathbf \Pi_{i_1j_1} \mathbf \Pi_{i_2j_2}] = O(1/n^5). 
		$
		Then, denote $c_{ij} = 2a_1a_2 [\varphi^{(1)}(e_{ij})]^2 v_{ij}$, we have
		\begin{align*}
		E \left[ \left( nm \sum_{i=2}^{n+m}\sum_{j=1}^{i-1} c_{ij} \mathbf \Pi_{ij} \mathbf \Pi_{ij}' \right)^2 \right] 
		= & n^2m^2 \sum_{i=2}^{n+m}\sum_{j_1=1}^{i-1} \sum_{j_2=1}^{i-1}  c_{ij_1}c_{ij_2}  E^2[\mathbf \Pi_{ij_{1}} \mathbf \Pi_{ij_{2}} ] \\
		& + 2 n^2m^2  \sum_{2 \leq i_1 < i_2 \leq n+m}\sum_{j=1}^{i_1-1}   c_{i_1j}c_{i_2j} E^2[\mathbf \Pi_{i_1j} \mathbf \Pi_{i_2j} ] \\
		& + n^2m^2 \sum_{2 \leq i_1 \neq i_2 \leq n+m} \sum_{j_1\neq j_2}    c_{i_1j_1}c_{i_2j_2} E^2[\mathbf \Pi_{i_1j_1} \mathbf \Pi_{i_2j_2} ] \\
		= & O(1/n).
		\end{align*}
		In addition, let $W = N(\mathbf \Gamma)$, we obtain
		\begin{align*}
		\sigma^2_n(\mathbf \Gamma \mathbf{Z}) :=& \sum_{i=2}^{n+m}\sum_{j=1}^{i-1}  \mathbf \Pi_{ij} ^2 [\varphi^{(1)}(e_{ij})]^2 v_{ij} \\
		= &  \bigg\{ \frac{4}{nm} -4 \left( \frac{n+m}{n^2m^2} - \frac{n}{n^2(n-1)^2} - \frac{m}{m^2(m-1)^2} \right)W \\ 
		& \hspace{1cm}+ 4 \left(\frac{2}{n^2m^2} - \frac{1}{n^2(n-1)^2} - \frac{1}{m^2(m-1)^2} \right) W^{2} \bigg\} v_{xy} [\varphi^{(1)}(e_{xy})]^2  \\
		+ & \bigg\{ \frac{2}{n(n-1)} + 2\left( \frac{2n}{n^2m^2} - \frac{2n-1}{n^2(n-1)^2} - \frac{1}{m^2(m-1)^2} \right) W \\
		& \hspace{1cm} - 2\left(\frac{2}{m^2n^2} - \frac{1}{n^2(n-1)^2} - \frac{1}{m^2(m-1)^2} \right) W^{2}  \bigg\} v_x[ \varphi^{(1)}(e_{x})]^2 \\
		+ & \bigg\{ \frac{2}{m(m-1)} + 2\left( \frac{2m}{n^2m^2}  - \frac{1}{n^2(n-1)^2} - \frac{2m-1}{m^2(m-1)^2} \right) W \\
		& \hspace{1cm} -2 \left(\frac{2}{n^2m^2} - \frac{1}{m^2(m-1)^2} - \frac{1}{n^2(n-1)^2} \right) W^{2}  \bigg\} v_y[ \varphi^{(1)}(e_{y})]^2.
		\end{align*}
		Since $W/\sqrt{nm} \overset{p}{\rightarrow} \sqrt{\rho}/(1 + \rho)$, some algebra shows that 
		\begin{align*}
		\sigma^2_n(\mathbf \Gamma \mathbf{Z}):=\sum_{i=2}^{n+m}\sum_{j=1}^{i-1}  \mathbf \Pi_{ij} ^2 [\varphi^{(1)}(e_{ij})]^2 v_{ij} \overset{p}{\rightarrow} \sigma^2,
		\end{align*}
		which entails that $\eta^2_{a_1, a_2}(\mathbf \Gamma,\mathbf{\Gamma}') \overset{p}{\rightarrow} a_1^2 \sigma^2 +  a_2^2 \sigma^2$. Since $|\Phi(\cdot)| \leq 1$, we have 
		\begin{align*}
		E \left[\left|\Phi \left(\frac{b}{\eta_{a_1,a_2}(\mathbf \Gamma,\mathbf{\Gamma}')} \right) -  \Phi \left(\frac{b}{\sqrt{a_1^2 \sigma^2 +  a_2^2 \sigma^2}} \right) \right|\right] \rightarrow 0.
		\end{align*}
		Next, by a simple triangle inequality
		\begin{multline*}
		\left| P\left(a_1 \sqrt{nm} L(\mathbf\Gamma \mathbf Z) +a_2 \sqrt{nm} L(\mathbf \Gamma' \mathbf Z) \leq b \right) - \Phi \left(\frac{b}{\sqrt{a_1^2 \sigma^2 +  a_2^2 \sigma^2}} \right) \right|\leq \\ \left| P\left(a_1 \sqrt{nm} L(\mathbf \Gamma \mathbf Z) +a_2 \sqrt{nm} L(\mathbf \Gamma' \mathbf Z) \leq b \right) - \Phi \left(\frac{b}{\sqrt{\eta^2_{a_1,a_2}(\mathbf{\Gamma}, \mathbf{\Gamma}')}} \right) \right| \\ + \left| \Phi \left(\frac{b}{\sqrt{\eta^2_{a_1,a_2}(\mathbf{\Gamma}, \mathbf{\Gamma}')}}\right)- \Phi \left(\frac{b}{\sqrt{a_1^2 \sigma^2 +  a_2^2 \sigma^2}}  \right) \right| .
		\end{multline*}
		Taking expectation with respect to $\mathbf \Gamma, \mathbf{\Gamma}' $ on both sides, then it follows from Lemma \ref{lem:high3} and Assumption \ref{ass:4} that
		\begin{multline*}
	\left| P\left(a_1 \sqrt{nm} L(\mathbf\Gamma \mathbf Z) +a_2 \sqrt{nm} L(\mathbf \Gamma' \mathbf Z) \leq b \right) - \Phi \left(\frac{b}{\sqrt{a_1^2 \sigma^2 +  a_2^2 \sigma^2}} \right) \right|\leq \\ E\left[  \left| P\left(a_1 \sqrt{nm} L(\mathbf \Gamma \mathbf Z) +a_2 \sqrt{nm} L(\mathbf \Gamma' \mathbf Z) \leq b \right) - \Phi \left(\frac{b}{\sqrt{\eta^2_{a_1,a_2}(\mathbf{\Gamma}, \mathbf{\Gamma}')}} \right) \right|\right] \\ + E \left[ \left| \Phi \left(\frac{b}{\sqrt{\eta^2_{a_1,a_2}(\mathbf{\Gamma}, \mathbf{\Gamma}')}}\right)- \Phi \left(\frac{b}{\sqrt{a_1^2 \sigma^2 +  a_2^2 \sigma^2}}  \right) \right|  \right] = o(1).
	\end{multline*}
	\end{proof}

	\subsection{Proof of Corollary \ref{cor:main}}
	By using Theorem 15.2.3 of \cite{lehmann2006testing}, the result is a consequence of Theorem \ref{lem:rand}.
	
	\subsection{Proof of Theorem \ref{lemma:power2}}
	(i) By Corollary \ref{cor:main} and Theorem \ref{lem:per}
	\begin{align*}
	\text{Power}  = P_{H_{A_c}} \left(\text{ED}_n^k(\mathbf{Z}) > c \right)  \rightarrow   P \left( 2\varphi( e_{xy} ) - \varphi(e_{x}) - \varphi (e_y) > 0 \right) =1.
	\end{align*}
	(ii) By Corollary \ref{cor:main} and Theorem \ref{lem:per}
	\begin{multline*}
	\text{Power}  = P_{H_{A_l}} \left(\text{ED}_n^k(\mathbf{Z}) > c \right)  = P_{H_{A_l}} \left(\sqrt{nmp}\text{ED}_n^k(\mathbf{Z})> \sqrt{nmp}  c \right)\\  \rightarrow   P \left( N( 0, \sigma^2) > \sigma Q_{\Phi, 1-\alpha} \right)   = \alpha.
	\end{multline*}

\section*{Acknowledgements}	
We would like to thank Dr. Jun Li for providing the code used in \cite{li2018asymptotic}. We are also grateful to the three reviewers for their very helpful comments. The partial support from a US NSF grant is gratefully acknowledged. 

{
	\bibliographystyle{agsm}
	\bibliography{ro}
}
\end{document}